\title{
    Strong Revenue (Non-)Monotonicity of Single-parameter Auctions
    \thanks{
        This is the second version of the paper on arXiv.
        Compared to the first version, this version extends approximate strong revenue monotonicity to a uniform notion of closeness of value distributions, and includes two new lower bounds (Theorems~\ref{thm:revenue-lipschitz-lower-bound} and \ref{thm:sample-complexity-general-lower-bound}) that match the corresponding upper bounds up to logarithmic factors.
        We thank anonymous reviewers for their insightful questions which lead to these new results.
    }
}
\author{
    Ziyun Chen
    \thanks{IIIS, Tsinghua University. Email: chenziyu20@mails.tsinghua.edu.cn.}
    \and
    Zhiyi Huang
    \thanks{The University of Hong Kong. Email: zhiyi@cs.hku.hk, u3563782@connect.hku.hk.}
    \and
    Dorsa Majdi
    \thanks{Sharif University of Technology. Email: dorsa.majdi@gmail.com}
    \and
    Zipeng Yan
    \footnotemark[2]
}
\date{}
\newtheorem{theorem}{Theorem}[section]
\newtheorem{corollary}[theorem]{Corollary}
\newtheorem{lemma}[theorem]{Lemma}
\theoremstyle{definition}
\newtheorem{definition}{Definition}[section]
\newtheorem{remark}{Remark}[section]
\newcommand{\uniformclose}{\stackrel{\mbox{\tiny unif.}}{\approx}}
\newcommand{\defeq}{\stackrel{\mbox{\tiny def.}}{=}}
\newcommand{\allocset}{\ensuremath{\mathcal{X}}}
\newcommand{\alloc}{\ensuremath{x}}
\newcommand{\auction}{M}
\newcommand{\myerson}{M}
\newcommand{\virtual}{\varphi}
\newcommand{\OPT}{\textsc{Opt}}
\newcommand{\ALG}{\textsc{Alg}}
\newcommand{\Dif}{\textsc{Dif}}
\newcommand{\indset}{\mathcal{I}}
\newcommand{\R}{\ensuremath{\mathbb{R}}}
\newcommand{\TotalVar}{\ensuremath{\text{\rm TV}}}
\newcommand{\Hellinger}{\ensuremath{\text{\rm H}}}
\DeclareMathOperator{\E}{\mathbf{E}}
\let\Pr\relax
\DeclareMathOperator{\Pr}{\mathbf{Pr}}
\newcommand{\bq}{q}
\newcommand{\bx}{x}
\newcommand{\btildex}{\tilde{x}}
\newcommand{\bv}{v}
\newcommand{\btildev}{\tilde{v}}
\newcommand{\bD}{D}
\newcommand{\btildeD}{\tilde{D}}
\newcommand{\bhatD}{\hat{D}}
\newcommand{\bE}{E}
\newcommand{\btildeE}{\tilde{E}}
\newcommand{\bP}{P}
\newcommand{\bQ}{Q}
\newcommand{\zh}[1]{\todo[color=green!40,inline]{Zhiyi: #1}}
\newcommand{\yan}[1]{\todo[color=blue!40,inline]{Yan: #1}}
\begin{document}

\begin{titlepage}
    \thispagestyle{empty}
    \maketitle
    \begin{abstract}
        \thispagestyle{empty}
        Consider Myerson's optimal auction with respect to an inaccurate prior, e.g., estimated from data, which is an underestimation of the true value distribution.
Can the auctioneer expect getting at least the optimal revenue w.r.t.\ the inaccurate prior since the true value distribution is larger?
This so-called \emph{strong revenue monotonicity} is known to be true for single-parameter auctions when the feasible allocations form a matroid.
We find that strong revenue monotonicity fails to generalize beyond the matroid setting, and further show that auctions in the matroid setting are the only downward-closed auctions that satisfy strong revenue monotonicity.
On the flip side, we recover an approximate version of strong revenue monotonicity that holds for all single-parameter auctions, even without downward-closedness.
As applications, we get sample complexity upper bounds for single-parameter auctions under matroid constraints, downward-closed constraints, and general constraints.
They improve the state-of-the-art upper bounds and are tight up to logarithmic factors.

    \end{abstract}

\end{titlepage}

\section{Introduction}
\label{sec:intro}

Revenue optimal auction design is a central topic in economics and more recently in algorithmic game theory.
For example, consider auctioning an item to some bidders.
How shall the auctioneer decide which bidder wins the item and how much the winner pays based on the bids, so that bidders would truthfully report their values for the item and the auctioneer's revenue is maximized?
Classical auction theory often studied this problem under the Bayesian model, in which the bidders values are drawn from some value distribution known to the auctioneer beforehand.
The goal is to maximize the expected revenue over the random realization of bidders' values.

The simplest case is when there is only one bidder.
The problem then becomes choosing a take-it-or-leave-it price $p$ to maximize the product of price $p$ and the probability that the bidder's value is at least $p$.

The revenue optimal auction is more involved when there are multiple bidders.
\citet{Myerson:MOR:1981} characterized the revenue optimal auction when the bidders' values are independently (but not necessarily identically) distributed;
the literature often refers to it as Myerson's optimal auction.
In a nutshell, the auctioneer computes for each bidder an \emph{ironed virtual value} based on the bidder's value/bid and value distribution.
Then, the auctioneer allocates the item to the bidder with the largest nonnegative virtual value, and leaves the item unallocated if all bidders have negative virtual values.
Finally, the winner of the item pays the \emph{threshold value}, i.e., the smallest value at which it would still win.

Myerson's optimal auction generalizes to all \emph{single-parameter auctions}, which will be the focus of this paper.
For ease of exposition, we consider the following simplified definition of single-parameter auctions in the introduction;
Section~\ref{sec:prelim} will give a more general definition.
Suppose that the auctioneer has some homogeneous items which it can allocate to certain subsets of bidders.
We refer to each subset of bidders that the auctioneer can allocate to as a \emph{feasible allocation}.
Below are some example constraints that may define the set of feasible allocations:
\begin{enumerate}
    \item[(C1)] There are $k$ copies of the item and therefore at most $k$ bidders can be allocated an item.
    \item[(C2)] Two bidders are competitors so the auctioneer can not allocate to both at the same time.
    \item[(C3)] Two bidders are bundled in the sense that either none or both are allocated an item.
\end{enumerate}

%We shall denote each feasible subset of bidders that the auctioneer can allocate to, a.k.a., a \emph{feasible allocation}, by a binary vector $x$ where the $i$-th entry is $1$ if bidder $i$ gets an item and $0$ otherwise.
%In other words, each feasible allocation corresponds to a vertex of the binary cube.
%Let $\allocset$ be the set of feasible allocations.

Each bidder has a value for being allocated an item, drawn from a value distribution.
In this sense, its value is given by a single parameter and hence the name of the setting.
The auctioneer knows the value distribution but not the realized value.
Myerson's optimal auction first asks all bidders to submit bids.
Then, treating each bidder's bid as its value, the auctioneer computes the ironed virtual value of each bidder, chooses an allocation that maximizes the sum of ironed virtual values of the allocated bidders, and finally lets each allocated bidder pays its threshold value.

\subsection{Inaccurate Prior and Revenue Monotonicity}

To faithfully implement Myerson's optimal auction, the auctioneer would need to have complete information of each bidder's value distribution, which is rarely available.
Further, it is known that the ironed virtual values are sensitive to smaller changes to the value distributions (e.g., \cite{GuoHZ:STOC:2019, RoughgardenS:EC:2016}).
This leads to a natural question:
\emph{Can the auctioneer use Myerson's optimal auction with respect to (w.r.t.) an inaccurate prior, e.g., estimated from various data, and still expect good revenue?}

Let us first revisit the simplest case of a single bidder.
It is folklore that underestimating and overestimating the value distribution and, correspondingly, decreasing and increasing the resulting take-it-or-leave-it prices have contrasting impacts to the revenue.
Decreasing the price by $1\%$ would at worst lower the expected revenue by $1\%$, while increasing the price by $1\%$ might lower the expected revenue to almost zero.
The robustness to underestimation could be formalized as an \emph{auction-wise revenue monotone} property.
Consider any value distributions $D$ and $\tilde{D}$ such that the former (stochastically) dominates the latter, i.e., the former's cumulative distribution function (CDF) is point-wise less than or equal to the latter's CDF.
The revenue of any auction (i.e., any take-it-or-leave-it price) w.r.t.\ $D$ is greater than or equal to the revenue w.r.t.\ $\tilde{D}$.
Hence, it is safer to use the optimal take-it-or-leave-it price w.r.t.\ an inaccurate prior $\tilde{D}$ that is an underestimation, since the auctioneer can \emph{guarantee getting at least the estimated revenue of the chosen auction}.

Do similar properties hold when there are multiple bidders?
It is easy to construct counter-examples that refute auction-wise revenue monotonicity;%
\footnote{For example, consider two bidders Alice and Bob and the auctioneer can allocate to at most one. Consider an auction that allocates to Alice and charges her $\$1$ if her value is at least $\$1$, and otherwise allocates to Bob and charges him $\$10$ if his value is at least $\$10$. When Alice's value and Bob's value are deterministically $\$0$ and $\$10$, the revenue is $\$10$. Increasing Alice's value distribution to be deterministically $\$1$, however, lowers the revenue to $\$1$.}
but is it still safe to use the optimal auction w.r.t.\ an underestimation and guarantee getting at least the estimated revenue?
\citet*{DevanurHP:STOC:2016} gave an affirmative answer when the set of feasible allocations form a matroid.
They called this property \emph{strong revenue monotonicity}.
While we defer the definition of matroids to Section~\ref{sec:prelim}, readers may think of the aforementioned constraint (C1) as a running example, which is called a $k$-uniform matroid.

Beyond the matroid setting, only a weaker notion of \emph{revenue monotonicity} was known (folklore, c.f., \cite{DevanurHP:STOC:2016}):
if value distribution $\bD$ dominates $\btildeD$ then the optimal revenue w.r.t.\ $\bD$ is greater than or equal to the optimal revenue w.r.t.\ $\btildeD$.
This weaker notion is insufficient for answering our motivating question, i.e., how much revenue the auctioneer could expect when it uses Myerson's optimal auction w.r.t.\ an inaccurate prior.
Hence, this paper studies whether strong revenue monotonicity holds for general single-parameter auctions.
Before getting to our results, we remark that intriguingly even the weaker notion of revenue monotonicity ceases to hold in the presence of multiple types of items, a.k.a., the multi-parameter setting, as proved by \citet{HartR:TE:2015}.

\paragraph{Our Contribution.}
On the one hand, we prove that strong revenue monotonicity does not hold in general single-parameter auctions (Theorem~\ref{thm:non-monotone}), even if the set of feasible allocations is \emph{downward-closed}, i.e., removing a bidder from any feasible allocation would give another feasible allocation, such as the aforementioned constraints (C1) and (C2).
In fact, we show that a downward-closed single-parameter auction is strongly revenue monotone if and only if its feasible allocations form a matroid (Theorem~\ref{thm:strong-revenue-monotone-iff-matroid}).
Further, the decrease in revenue could be as large as a constant fraction of the optimal revenue (Corollary~\ref{cor:non-monotone}).

On the other hand, we show that the auctioneer can nonetheless ensure good revenue by using Myerson's optimal auction w.r.t.\ an \emph{approximately accurate} underestimation for all single-parameter auctions, including those that are not downward-closed, e.g., with the aforementioned constraint (C3).
Concretely, we prove that if $\bD$ dominates $\btildeD$ and further $\bD$ and $\btildeD$ are \emph{sufficiently close}, then running Myerson's optimal auction w.r.t.\ $\btildeD$ when the value distribution is $\bD$ gets almost the optimal revenue w.r.t.\ $\btildeD$ (Theorem~\ref{thm:approximate-revenue-monotone}).
We call this \emph{approximate strong revenue monotonicity}.

Figure~\ref{fig:monotone} summarizes the revenue monotone properties in different auction settings.

\begin{figure}[t]
    \centering
    \begin{tikzpicture}[scale=0.9, every node/.style={scale=0.9}]
        % multi-parameter
        \draw[fill=red!40] {(4.5,0) ellipse (6.4 and 2.8)};
        \node[align=center] at (9.7,-.05) {Multi-\\parameter};
        \draw[fill=red!40,draw=none] (9.2,-4) rectangle +(4,1.8) node[midway,align=center] {Not monotone\\(\citet{HartR:TE:2015})};
        \draw[ultra thick,-,dashed,Red] (11.2,-2.2)--(11.2,-0)--(10.9,0);
        %
        % single-parameter
        \draw[fill=orange!40] {(3.4,0) ellipse (5.2 and 2.3)};
        \node[align=center] at (7.3,-.05) {Single-\\parameter};
        %
        % downward-closed
        \draw {(2.3,0) ellipse (4 and 1.8)};
        \node[align=center] at (4.8,0) {Downward-\\closed};
        \draw[fill=orange!40,draw=none] (3.5,3.6) rectangle +(9.7,1.8) node[midway,align=left] {Monotone (folklore)\\ {\bf Not strongly monotone (Sec.~\ref{sec:non-monotone})}\\ {\bf Approximately strongly monotone (Sec.~\ref{sec:approx-monotone})}};
        \draw[ultra thick,-,dashed,Orange] (8.4,3.6)--(8.4,3.1)--(3.4,3.1)--(3.4,2.3);
        %
        % matroid
        \draw[fill=yellow!35] {(1,0) ellipse (2.6 and 1.4)};
        \node at (2.5,0.04) {Matroid};
        \draw[fill=yellow!35,draw=none] (-4,3.6) rectangle +(6.6,1.8) node[midway,align=center] {Strongly monotone\\(\citet*{DevanurHP:STOC:2016})};
        %\draw[ultra thick,-,dashed,Dandelion] (1,3.2)--(1,1.4);
        \draw[ultra thick,-,dashed,Dandelion] (-0.7,3.6)--(-0.7,3.1)--(1,3.1)--(1,1.4);
        %
        % single-bidder
        \draw[fill=green!30] {(0,0) ellipse (1.5 and 0.9)};
        \node at (0,0) {Single-bidder};
        \draw[fill=green!30,draw=none] (-4,-4) rectangle +(3.2,1.8) node[midway,align=center] {Auction-wise\\ monotone\\ (folklore)};
        %\draw[ultra thick,-,dashed,Green!] (-0.8,-2.9)--(0,-2.9)--(0,-.9);
        \draw[ultra thick,-,dashed,Green] (-2.4,-2.2)--(-2.4,0)--(-1.5,0);
    \end{tikzpicture}
    \caption{Revenue monotone properties in different auction settings. A setting satisfies a revenue monotone property if all auctions therein satisfies the property. A setting does not satisfy a revenue monotone property if there is an auction in the setting violating the property.}
    \label{fig:monotone}
\end{figure}
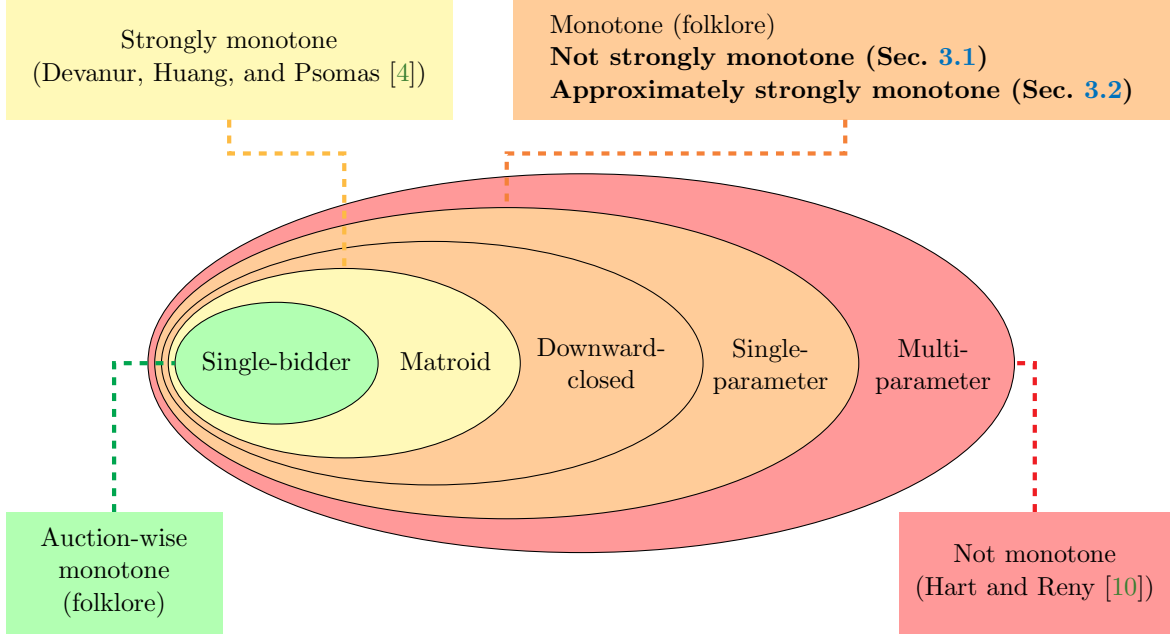

\subsection{Sample Complexity}

Closely related to the analysis of Myerson's optimal auction obtained from an inaccurate prior, \citet{ColeR:STOC:2014} introduced a model in which the auctioneer can only access the value distribution through i.i.d.\ samples.
They asked how many samples are sufficient and necessary for learning an auction that is optimal up to an $\epsilon$ error?%
\footnote{\citet{ColeR:STOC:2014} studied regular value distributions and $(1-\epsilon)$-multiplicative approximation to the revenue of Myerson's optimal auction w.r.t.\ the true value distribution. We consider value distributions with bounded supports $[0, 1]$ and $\epsilon$-additive approximation. Nonetheless, regular distributions and multiplicative approximation and several other settings can be reduced to our setting through appropriate discretizations (see, e.g., \cite{GuoHZ:STOC:2019, GonczarowskiN:STOC:2017}).}
Driven by strong revenue monotonicity, \citet*{GuoHZ:STOC:2019} proposed to use Myerson's optimal auction w.r.t.\ a \emph{dominated product empirical} distribution derived from samples, which is dominated by the true distribution and is as close to the true distribution as possible.
They showed that this approach gives sample complexity upper bounds that are tight up to logarithmic factors for auctions under matroid constraints.

\paragraph{Our Contribution.}
We prove that the approximate strong revenue monotonicity proposed in this paper is good enough for deriving sample complexity upper bounds using Myerson's optimal auction w.r.t.\ the dominated product empirical.
In fact, our analysis is an improvement over that of \citet{GuoHZ:STOC:2019} and therefore even in the matroid setting our $O(\frac{nk}{\epsilon} \log \frac{nk}{\epsilon\delta})$ upper bound (Theorem~\ref{thm:sample-complexity}) is better than theirs by three logarithmic factors.
Here $n$ is the number of bidders, $k$ is the maximum number of bidders that can be allocated to in any feasible allocation, a.k.a., the \emph{rank}, and $\delta$ is the probability that the algorithm fails to obtain an $\epsilon$-additive approximation.
For downward-closed auctions, we derive the same bound as in the matroid setting.
It improves the best previous bound by \citet{GonczarowskiN:STOC:2017} by a multiplicative $\frac{k}{\epsilon}$ factor, and is tight up to a logarithmic factor due to the known lower bound in the more special matroid setting~\cite{GuoHZ:STOC:2019}.
Finally, for arbitrary single-parameter auctions, we obtain an upper bound of $O(\frac{nk^2}{\epsilon^2} \log \frac{nk}{\epsilon} \log \frac{nk}{\epsilon\delta})$ (Theorem~\ref{thm:sample-complexity-general}), which improves the best previous bound by \citet{GonczarowskiN:STOC:2017} by a multiplicative $\frac{1}{\epsilon}$ factor.
We also prove a lower bound of $\Omega(\frac{nk^2}{\epsilon^2})$ (Theorem~\ref{thm:sample-complexity-general-lower-bound}), matching the upper bound up to logarithmic factors.
%This not only shows that our upper bound is tight up to a logarithmic factor, and also 
Our results further demonstrate that general single-parameter auctions are intrinsically harder than downward-closed auctions in terms of sample complexity.

%Table~\ref{tab:sample-complexity} summarizes the sample complexity upper bounds of different single-parameter auctions.

\begin{table}[t]
    \caption{Summary of sample complexity upper bounds of different single-parameter auctions. Here $n$ denotes the number of bidders, $k$ denotes the maximum total allocated amount in any feasible allocation, a.k.a., the rank, $\epsilon$ denotes the additive approximation factor, and $\delta$ denotes the algorithm's failure probability. The bounds from this paper are tight up to logarithmic factors.}
    \label{tab:sample-complexity}
    \centering
    \renewcommand{\arraystretch}{1.2}
    \renewcommand{\thefootnote}{\alph{footnote}}
    \begin{tabular}{l@{\hskip 20pt}ll@{\hskip 20pt}ll}
    \toprule
    & \multicolumn{2}{l}{\hskip -5pt Best Previous Bound} & \multicolumn{2}{l}{\hskip -5pt \textbf{This Paper}} \\
    \midrule
    Single-bidder & $O(\frac{1}{\epsilon^2} \log \frac{1}{\epsilon\delta})$ & \cite{HuangMR:SICOMP:2018} & \\%$\Omega(\frac{1}{\epsilon^2})$~(\cite{GuoHZ:STOC:2019}) \\
    Matroid & $O(\frac{nk}{\epsilon^2} \log^2\frac{n}{\epsilon} \log\frac{nk}{\epsilon} \log\frac{nk}{\epsilon\delta})$ & \cite{GuoHZ:STOC:2019} & $O(\frac{nk}{\epsilon^2} \log\frac{nk}{\epsilon\delta})$ & \textbf{(Thm.~\ref{thm:sample-complexity})} \\
    Downward-closed & $O(\frac{nk^2}{\epsilon^3} \log\frac{nk}{\epsilon\delta})$ & \cite{GonczarowskiN:STOC:2017} & $O(\frac{nk}{\epsilon^2} \log\frac{nk}{\epsilon\delta})$ & \textbf{(Thm.~\ref{thm:sample-complexity})}\\
    Single-parameter & $O(\frac{nk^2}{\epsilon^3} \log\frac{nk}{\epsilon\delta})$ & \cite{GonczarowskiN:STOC:2017} & $O(\frac{nk^2}{\epsilon^2} \log\frac{nk}{\epsilon} \log\frac{nk}{\epsilon\delta})$ & \textbf{(Thm.~\ref{thm:sample-complexity-general})}\\
    %Multi-parameter & & \\
    \bottomrule
    %\multicolumn{4}{l}{\footnotesize \footnotemark[1]~ \cite{HuangMR:SICOMP:2018}; \quad \footnotemark[2]~ \cite{GuoHZ:STOC:2019}; \quad \footnotemark[3]~ \cite{GonczarowskiN:STOC:2017}.}
    \end{tabular}
\end{table}

\subsection{Related Works}

For multi-parameter auctions, \citet{HartR:TE:2015} gave an example of revenue non-monotonicity involving only one bidder and two heterogeneous items.
They also showed that two special classes of single-bidder multi-item auctions have monotone payment functions and as a result satisfy strong revenue monotonicity.%
\footnote{\citet{HartR:TE:2015} only claimed revenue monotonicity but implicitly proved strong revenue monotonicity as well.}
There is a long line of works proving that simple auctions can guarantee nearly optimal revenue in various multi-parameter auctions, e.g., by bundling all items together and posting a take-it-or-leave-it price, and by selling items separately (e.g., \cite{HartN:JET:2017, LiY:PNAS:2013, BabaioffILW:FOCS:2014, RubinsteinW:TEAC:2018, CaiZ:STOC:2017}).
\citet{RubinsteinW:TEAC:2018} observed that these simple single-bidder auctions satisfy revenue monotonicity, i.e., they yield better revenue on stochastically dominating distributions, and therefore the optimal revenue of the respective multi-parameter auctions is approximately monotone.
\citet{Yao:SAGT:2018} extended the approximate revenue monotonicity to multiple bidders with fractionally subadditive valuations.

Following \citet{ColeR:STOC:2014}, there has been a vast literature devoted to the sample complexity of various auctions.
\citet*{HuangMR:SICOMP:2018} resolved the sample complexity of the single-bidder case up to logarithmic factors.
\citet{MorgensternR:NIPS:2015}, \citet{DevanurHP:STOC:2016}, \citet{GonczarowskiN:STOC:2017}, and \citet{Syrgkanis:NIPS:2017} built on learning theory to improve the sample complexity upper bound of single-parameter auctions with multiple bidders.
\citet{GuoHZ:STOC:2019} built on strong revenue monotonicity and got sample complexity upper and lower bounds tight up to logarithmic factors for the matroid setting.
Although a complete characterization of optimal multi-parameter auctions remains elusive, \citet{GonczarowskiW:JACM:2021} and \citet{GuoHTZ:COLT:2021} showed that polynomially many samples are informationally sufficient for learning a multi-parameter auction optimal up to an $\epsilon$ error.
Last but not least, \citet{GuoHTZ:COLT:2021} extended the notion of strong monotonicity to other Bayesian optimization problems including prophet inequality and Pandora's problem, and obtained nearly tight sample complexity upper bounds for them.

\section{Preliminaries}
\label{sec:prelim}

\paragraph{Notations.}
Let $\R_+$ denote the set of nonnegative real numbers.
Let $[n] = \{1, 2, \dots, n\}$.
For any distribution $D$ on $\R_+$, we abuse notation and let $D$ also denote its CDF, i.e., $D(v) = \Pr_{u \sim D} [ u \le v ]$;
thus, its derivative $D'(v)$ (if exists) is the probability density function (PDF).
For distributions $D$ and $\tilde{D}$ on $\R_+$, we say that $D$ (first-order) stochastically dominates $\tilde{D}$, denoted as $D \succeq \tilde{D}$, if $D(v) \le \tilde{D}(v)$ for all $v \in \R_+$.
For any distribution $D$ over a domain $\Omega$ and any function $f : \Omega \to \R$, we write $f(D)$ for the expected function value $\E_{\omega \sim D} f(\omega)$.

\subsection{Single-parameter Auctions}

In a single-parameter auction with $n$ bidders, each bidder $i$ has a \emph{private} value $0 \le v_i \le 1$ drawn independently from a distribution $D_i$.
An auction proceeds as follows.
First each bidder $i$ submits a bid $b_i$ to the auctioneer.
The auctioneer then picks an allocation $\alloc$ from a set $\allocset \subset \R_+^n$ of feasible allocations, and prices $p \in \R_+^n$ according to the bids.
Note that from now on we consider a more general model that allows the allocation $\alloc$ to be non-binary.
Each bidder $i$ receives allocation $\alloc_i$, pays $p_i$, and gets utility $v_i \alloc_i - p_i$.
Throughout the paper we will focus on \emph{truthful} auctions in which a bidder can always guarantee a non-negative utility and can maximize its utility by bidding its value.
Hence, we will assume $b_i = v_i$ and will no longer talk about bids.

Let $k = \max_{\alloc \in \allocset} \| \alloc \|_1$ be the maximum size of any feasible allocation.
Following a terminology from the special case when $\allocset$ is the convex hull of a matroid, we refer to $k$ as the \emph{rank}.
Finally, for ease of exposition we assume without loss of generality (WLOG) that the problem is \emph{unit-demand}, i.e., $\alloc_i \le 1$ for any feasible allocation $\alloc \in \allocset$ and any bidder $i$.
The general case can be reduced to this unit-demand case, since an $\epsilon$-approximation in an auction with $n$ bidders, rank $k$, and maximum demand $d$ is equivalent to an $\frac{\epsilon}{d}$-approximation in another auction with $n$ bidders, rank $\frac{k}{d}$, and unit-demand, obtained by scaling all allocations by a factor $\frac{1}{d}$.

%\zh{Explain the $d$-demand bounds, e.g., as corollaries after each theorem.}

\paragraph{Quantiles and Revenue Curves.}
For any value distribution $D$ on $\R_+$, the quantile of a value $v \in \R_+$ w.r.t.\ $D$ is the probability that a sample from $D$ is greater than $v$:
\[
    q_D(v) = \Pr_{u \sim D} \big[ u > v \big]
    ~.
\]
On the other hand, for any quantile $q \in [0, 1]$ the corresponding value w.r.t.\ $D$ is:
\[
    v_D(q) = \inf \big\{ v : q_D(v) \le q \big\} 
    ~.
\]
We remark that $v_D$ is simply the inverse of $q_D$ if $D$ is continuous.

For any value distribution $D$ and any quantile $0 \le q \le 1$, consider $q^+ = \Pr_{u \sim D} [u \ge v_D(q)]$ and $q^- = \Pr_{u \sim D}[u > v_D(q)]$ in the next definition.
If the distribution is continuous, we would have that $q = q^+ = q^-$.
The revenue curve w.r.t.\ value distribution $D$ in the \emph{quantile space} is:
\[
    R_D(q) =
    \begin{cases}
        q \cdot v_D(q)
        & \mbox{if $D$ is continuous at $v_D(q)$;} \\
        \frac{q-q^-}{q^+-q^-} q^+ v_D(q^+) + \frac{q^+-q}{q^+-q^-} q^- v_D(q^-)
        & \mbox{if $v_D(q)$ is a point mass.}
    \end{cases}
\]

The \emph{ironed} revenue curve is its convex hull, i.e.:
\[
    \bar{R}_D(q) = \max \big\{ ~ R_D(P) : \mbox{$P$ is a distribution over $[0,1]$ with expectation $q$} ~ \big\}
    ~.
\]

\paragraph{Optimal Auction.}
Given any truthful auction $\auction$, abuse notation and let $\auction$ also be a mapping from value profiles to the resulting revenue.
That is, for any $\bv = (v_1, v_2, \dots, v_n)$, $\auction(\bv)$ denotes the revenue of $\auction$ when bidders bid $\bv$;
for any product value distribution $\bD = D_1 \times D_2 \times \dots \times D_n$, $\auction(\bD)$ denotes the expected revenue of $\auction$.
Let $\OPT(\bD)$ be the largest expected revenue achievable by truthful auctions when the value distribution is $\bD$.

\citet{Myerson:MOR:1981} introduced the \emph{virtual values} of any bidder $i$, defined as $\virtual_i(v_i) = v_i - \frac{1-D_i(v)}{D_i'(v)}$ when value distribution $D_i$ is continuous.
For general value distributions, the virtual value is the right derivative of revenue curve $R_{D_i}(q)$ at $q_D(v_i)$.
Myerson showed that an auction is truthful if and only if its allocation rule $x$ is monotone, i.e., if $x_i$ is nondecreasing in $v_i$ for all bidders $i$, and the payment rule $p$ is determined by a specific formula according to the allocation rule $x$.
Further, the expected revenue equals the expected \emph{virtual welfare}:
\[
    \E_{\bv \sim \bD} \sum_{i=1}^n x_i(\bv) \virtual_i(v_i)
    ~.
\]

Myerson further defined the ironed virtual values of any bidder $i$, denoted as $\bar{\virtual}_i$.
It is the right derivative of the ironed revenue curve $\bar{R}_{D_i}(q)$ at $q_D(v_i)$.
Finally, Myerson showed that to maximize the expected revenue, the optimal auction always chooses an allocation that maximizes the ironed virtual welfare.
Denote Myerson's optimal auction w.r.t.\ a value distribution $\bD$ as $\myerson_{\bD}$.

A value distribution is \emph{regular} if its revenue curve is concave.
For regular distributions, the ironed revenue curve coincides with the revenue curve, and the ironed virtual values coincide with the virtual values.

Figure~\ref{fig:revenue-curve} presents an illustrative example of the revenue curves and virtual values w.r.t.\ a value distribution that has a point mass of $\nicefrac{1}{5}$ at value $\nicefrac{1}{2}$, and is uniform over $[0, 1]$ otherwise.
Figure~\ref{fig:revenue-curve-continous} demonstrates the case of value $v = \nicefrac{1}{5}$ which is not a point mass; it corresponds to quantile $q = \nicefrac{4}{5}$, and the right derivative of the revenue curve equals the virtual value.
Figure~\ref{fig:revenue-curve-point-mass} shows the case of value $v = \nicefrac{1}{2}$ which is a point mass; it corresponds to a \emph{left-closed-right-open} quantile interval $q \in [\nicefrac{2}{5}, \nicefrac{3}{5})$, and the right derivative of the revenue curve equals the virtual value.
Finally, Figure~\ref{fig:ironed-revenue-curve} gives the ironed revenue curve, ironing quantile interval $[\nicefrac{(3-\sqrt{3})}{5},\nicefrac{3}{5})$.
It effectively rounds the values from $\nicefrac{1}{2}$ to $\nicefrac{(1+\sqrt{3})}{4}$ down to $\nicefrac{1}{2}$;
the ironed revenue curve is the revenue curve of the rounded distribution, with a point mass of $\nicefrac{\sqrt{3}}{5}$ at value $\nicefrac{1}{2}$, and has probability density $\nicefrac{4}{5}$ in value intervals $[0, \nicefrac{1}{2})$ and $(\nicefrac{(1+\sqrt{3})}{4}, 1]$.
We remark that interpreting ironing as a rounding of values will be a useful viewpoint in an argument in Section~\ref{sec:approx-monotone}.

\begin{figure}[t]
    \centering
    \begin{subfigure}{.32\textwidth}
        \centering
        \begin{tikzpicture}[
                axis/.style={very thick, ->},
                curve/.style={thick},
                point/.style={fill=black,minimum size=0.02cm,circle},
                txt/.style={fill=none,draw=none},
                scale=.8,
                every node/.style={scale=.8},
            ]
            \draw[draw=none,fill=none] (-.7,-.8) rectangle (5.8,4.8);
            %\draw[draw=black,fill=none] (-.6,-1) rectangle (5.7,4.7);
            % axes
            \draw[axis] (0,0)--(5.3,0); 
            \draw[axis] (0,0)--(0,4) node[midway,above,sloped] {$R_D(q)$}; 
            \filldraw[black] (0,0) circle (1pt) node[below]{$0$};
            \filldraw[black] (5,0) circle (1pt) node[below]{$1$};
            \filldraw[red] (4,0) circle (2pt) node[below=1pt,black] {$q=\frac{4}{5}$};
            %\node[txt]  at (4,-0.4)  {$q$};
            %\filldraw (0,2) circle (0pt) ;
            %
            \draw[style=curve,domain=0:2] plot ({\x}, {\x*(5-\x)*2/5});
            \draw[style=curve,domain=3:5] plot ({\x}, {\x*(5-\x)*3/5});
            \draw[style=curve,domain=2:3] plot ({\x}, {\x*6/5});
            \filldraw[black] (2,12/5) circle (2pt);
            \filldraw[white,draw=black,thick] (3,18/5) circle (2pt);
            \filldraw[red] (4,12/5) circle (2pt);
            \draw[dashed,red,semithick] (0,0)--(4,12/5) node [midway,above=-2pt,sloped,black] {$\mbox{slope}=v$};
            \draw[dashed,red,semithick] (7/2,33/10)--(4,12/5)--(9/2,15/10) node [midway,above,sloped,black] {$\mbox{right derivative}=\virtual(v)$};
            \draw[dashed,red,semithick] (4,12/5)--(4,0);% node [midway,below,sloped,black] {\footnotesize ~~$\mbox{height}=R(q)$};
        \end{tikzpicture}
        \caption{Non-point-mass, $v = \frac{1}{5}$}
        \label{fig:revenue-curve-continous}
    \end{subfigure}
    \begin{subfigure}{.32\textwidth}
        \centering
        \begin{tikzpicture}[
                axis/.style={very thick, ->},
                curve/.style={thick},
                point/.style={fill=black,minimum size=0.02cm,circle},
                txt/.style={fill=none,draw=none},
                scale=.8,
                every node/.style={scale=.8},
            ]
            \draw[draw=none,fill=none] (-.7,-.8) rectangle (5.8,4.8);
            %\draw[draw=black,fill=none] (-.7,-.8) rectangle (5.4,4.8);
            % axes
            \draw[axis] (0,0)--(5.3,0); 
            \draw[axis] (0,0)--(0,4) node[midway,above,sloped] {$R_D(q)$}; 
            \filldraw[black] (0,0) circle (1pt) node[below]{$0$};
            \filldraw[black] (5,0) circle (1pt) node[below]{$1$};
            \filldraw[black] (2.5,0) circle (0pt) node[below=1pt] {$q \in [\frac{2}{5}, \frac{3}{5})$};
            \draw[style=curve,domain=0:2] plot ({\x}, {\x*(5-\x)*2/5});
            \draw[style=curve,domain=3:5] plot ({\x}, {\x*(5-\x)*3/5});
            \draw[style=curve,red,semithick] (2,12/5)--(3,18/5) node [midway,above,sloped,black,align=center] {$\mbox{right derivative}$\\$=\virtual(v)$};
            \draw[dashed,red,semithick] (2,0)--(2,12/5);
            \draw[dashed,red,semithick] (3,0)--(3,18/5);
            \filldraw[red] (2,12/5) circle (2pt);
            \filldraw[red] (2,0) circle (2pt);
            \filldraw[white,draw=red,thick] (3,18/5) circle (2pt);
            \filldraw[white,draw=red,thick] (3,0) circle (2pt);
            \draw[dashed,red,semithick] (0,0)--(2,12/5) node [midway,below=-2pt,sloped,black] {$\mbox{slope}=v$};
        \end{tikzpicture}
        \caption{Point mass, $v=\frac{1}{2}$}
        \label{fig:revenue-curve-point-mass}
    \end{subfigure}
    \begin{subfigure}{.32\textwidth}
        \centering
        \begin{tikzpicture}[
                axis/.style={very thick, ->},
                curve/.style={thick},
                point/.style={fill=black,minimum size=0.02cm,circle},
                txt/.style={fill=none,draw=none},
                scale=.8,
                every node/.style={scale=.8},
            ]
            \draw[draw=none,fill=none] (-.7,-.8) rectangle (5.8,4.8);
            % axes
            \draw[axis] (0,0)--(5.3,0); 
            \draw[axis] (0,0)--(0,4) node[midway,above,sloped] {$R_D(q)$}; 
            \filldraw[black] (0,0) circle (1pt) node[below]{$0$};
            \filldraw[black] (5,0) circle (1pt) node[below]{$1$};
            \filldraw[black] ({3-sqrt(3)/2},0) circle (0pt) node[below=1pt] {$q \in [\frac{3-\sqrt{3}}{5}, \frac{3}{5})$};
            \draw[style=curve,domain=0:(3-sqrt(3))] plot ({\x}, {\x*(5-\x)*2/5});
            \draw[style=curve,domain=(3-sqrt(3)):2,Gray] plot ({\x}, {\x*(5-\x)*2/5});
            \draw[style=curve,domain=2:3,Gray] plot ({\x}, {\x*6/5});
            \draw[style=curve,domain=3:5] plot ({\x}, {\x*(5-\x)*3/5});
            %\draw[dashed,red,semithick] (2,0)--(2,12/5);
            %\draw[dashed,red,semithick] (3,0)--(3,18/5);
            \filldraw[Gray] (2,12/5) circle (2pt);
            \draw[style=curve,thick,red] ({3-sqrt(3)},{(3+sqrt(3))*2/5})--(3,18/5) node [midway,above,sloped,black,align=center] {$\mbox{right derivative}$\\$=\bar{\virtual}(v)$}; 
            \draw[dashed,red,semithick] ({3-sqrt(3)},0)--({3-sqrt(3)},{(3+sqrt(3))*2/5});
            \draw[dashed,red,semithick] (3,0)--(3,18/5);
            \filldraw[fill=white,draw=red,thick] (3,18/5) circle (2pt);
            \filldraw[red] ({3-sqrt(3)},{(3+sqrt(3))*2/5}) circle (2pt);
            \filldraw[red] ({3-sqrt(3)},0) circle (2pt);
            \filldraw[white,draw=red,thick] (3,0) circle (2pt);
        \end{tikzpicture}
        \caption{Ironed Revenue Curve}
        \label{fig:ironed-revenue-curve}
    \end{subfigure}
    \caption{Revenue curves. This example corresponds to a value distribution that has a point mass of $\frac{1}{5}$ at value $\frac{1}{2}$, and otherwise is uniform over $[0, 1]$.}
    \label{fig:revenue-curve}
\end{figure}
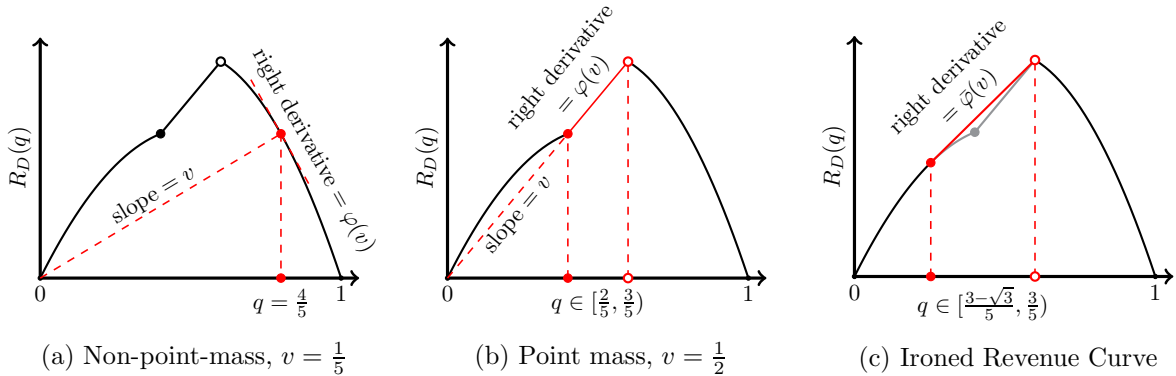

Finally, we formally define strong revenue monotonicity below.

\begin{definition}
    \label{def:strong-revenue-monoton}
    The single-parameter auctions with feasible allocations $\allocset$ satisfy \emph{strong revenue monotonicity} if for any product value distribution $\tilde{D}$ and any stochastically dominating product value distribution $D \succeq \tilde{D}$, Myerson's optimal auction $M_{\tilde{D}}$ w.r.t.\ $\tilde{D}$ gets weakly larger revenue from $D$ than from $\tilde{D}$, i.e.:
    \[
        M_{\tilde{D}}(D) \ge \OPT(\tilde{D})
        ~.
    \]
\end{definition}

%\zh{TODO: Some more details may be needed depending on how our proofs go.}
%\dorsa{TODO: define quantiles and $v_P(q)$ / stochastic dominance / Rev is bounded in [0,k] lemma 2.1 modification/ state that the values are normalized in [0,1] / maybe more on virtual values(monotonicity, integral=q.v(q), revenue curves if needed) / adjust with 4.2 Bernstein in perlims}
%\dorsa{TODO: Rev is bounded in [0,k] lemma 2.1 modification/ state that the values are normalized in [0,1].}
%%%%%%%%%%%%%%%%%%%%%%%%%%%%%%%%
%used lemma (Yan)

\subsection{Set Systems and Matroids}

The simplified definition of single-parameter auctions in the introduction further assumes that the set of feasible allocations consists of some binary allocations, in which every $x_i \in \{0, 1\}$, and their convex combinations through randomized allocations.
We can describe the set of feasible allocations of such an auction by a \emph{set system}.
Let $\indset$ be the set of subsets of bidders that can be allocated to in any feasible allocation;
the notation $\indset$ comes from the concept of \emph{independent sets} in set systems.
For any $S \in \indset$, let $x^S$ be the corresponding allocation vector, i.e., $x^S_i = 1$ if $i \in S$ and $x^S_i = 0$ otherwise.
The set of feasible allocation can then be written as:
\[
    \Big\{ ~ \sum_{S \in \indset} \lambda^S x^S ~ : ~ \sum_{S \in \indset} \lambda^S = 1 \mbox{ and } \forall S \in \indset, \lambda^S \ge 0 ~ \Big\}
\]

A set system is a \emph{matroid} if it satisfies:
\begin{enumerate}
    \item[(M1)] The empty set is feasible, i.e., $\varnothing \in \indset$.
    \item[(M2)] The feasible sets are \emph{downward-closed}, i.e., if $S' \subseteq S$ and $S \in \indset$ then $S' \in \indset$.
    \item[(M3)] The feasible sets satisfy the \emph{exchange property}, i.e., for any feasible sets $S, S' \in \indset$ such that $|S'| < |S|$, there is a bidder $i \in S \setminus S'$ that can be allocated to on top of $S'$, i.e., $S' \cup \{i\} \in \indset$.
\end{enumerate}

Naturally it is a \emph{downward-closed} set system if it satisfies the first two properties.

\subsection{Empirical Distributions}

Let $\bD = D_1 \times D_2 \times \dots \times D_n$ be a product value distribution.
Given $N$ i.i.d.\ samples from $\bD$, the \emph{product empirical distribution} $\bE = E_1 \times E_2 \times \dots \times E_n$ is defined such that each dimension $E_i$ is the uniform distribution over the samples from $D_i$.
The next lemma follows from Bernstein's inequality and union bound.

\begin{lemma}[c.f., Lemma 5 of \citet{GuoHZ:STOC:2019}]
    \label{lem:empirical-cdf-bound}
    For any product distribution $\bD$ on $\R_+^n$, any positive integer $N$, and any $0 < \delta < 1$, consider the product empirical distribution $\bE$ from $N$ i.i.d.\ samples from $\bD$.
    Then, with probability at least $1-\delta$, for any bidder $1 \le i \le n$:
    \[
        \max_{v \in \R_+} \big\vert D_i(v) - E_i(v) \big\vert \le \sqrt{\frac{2 D_i(v) (1-D_i(v)) \ln \frac{2nN}{\delta}}{N}} + \frac{\ln \frac{2nN}{\delta}}{N}
        ~.
    \]
\end{lemma}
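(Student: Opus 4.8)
The plan is to apply Bernstein's inequality pointwise in $v$ for each bidder, choose the deviation threshold so that each tail probability is at most $\delta/(nN)$, and then take a union bound over a \emph{deterministic} $\Theta(N)$-point grid per bidder, using monotonicity to interpolate to all $v$. First I would fix a bidder $i$ and a value $v\in\R_+$, and write $E_i(v) = \frac{1}{N}\sum_{j=1}^N \mathbf{1}\{u_{i,j}\le v\}$, where $u_{i,1},\dots,u_{i,N}$ are the samples from $D_i$; this is an average of $N$ i.i.d.\ Bernoulli variables with mean $p := D_i(v)$, variance $p(1-p)$, and range contained in $[0,1]$. Bernstein's inequality then gives, for every $t>0$,
\[
    \Pr\Big[\big| E_i(v) - D_i(v) \big| \ge t\Big] \;\le\; 2\exp\!\Big(-\frac{Nt^2/2}{p(1-p) + t/3}\Big).
\]
I would set $L := \ln\frac{2nN}{\delta}$ and take $t = t_v := \sqrt{\frac{2 D_i(v)(1-D_i(v))L}{N}} + \frac{L}{N}$, i.e.\ exactly the right-hand side of the lemma. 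Writing $t_v = a+b$ with $a = \sqrt{2p(1-p)L/N}$ and $b = L/N$, a one-line computation yields $Nt_v^2 = 2p(1-p)L + 2La + Lb \ge 2p(1-p)L + Lt_v \ge 2L\big(p(1-p)+t_v/3\big)$, so the exponent above is at least $L$ and the failure probability at the pair $(i,v)$ is at most $2e^{-L} = \delta/(nN)$.

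The remaining — and main — step is to pass from this fixed-$v$ statement to one holding for all $v\in\R_+$ simultaneously. Here I would use that both $D_i$ and $E_i$ are non-decreasing and that $E_i$ is a step function with at most $N$ jumps: it therefore suffices to control $|D_i - E_i|$ at a deterministic grid of $\Theta(N)$ values per bidder, chosen via the quantiles of $D_i$ (and straddling any point masses of $D_i$) so that $D_i$, and hence also the variance factor $D_i(v)(1-D_i(v))$, changes by $O(1/N)$ across each cell. A union bound over these $\Theta(nN)$ (bidder, grid-value) pairs — each failing with probability at most $\delta/(nN)$ — then shows that with probability at least $1-\delta$ the inequality $|D_i(v) - E_i(v)| \le t_v$ holds at every grid value; for a general $v$ inside a cell, monotonicity sandwiches both $D_i(v)$ and $E_i(v)$ between the two endpoint values, so $|D_i(v) - E_i(v)|$ exceeds the endpoint bound by at most $O(1/N)$, which is absorbed into the additive $\frac{L}{N}$ term (and the $O(1/N)$ perturbation of the variance factor into lower-order terms). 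It is precisely this union bound over $\Theta(nN)$ events that produces the logarithmic factor $\ln\frac{2nN}{\delta}$ rather than $\ln\frac{n}{\delta}$.

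I expect the pointwise Bernstein step and the algebraic check that the stated $t_v$ yields tail probability $\delta/(nN)$ to be routine; the part requiring genuine care is the uniform-in-$v$ reduction. The tempting shortcut of union-bounding directly over the $N$ sample locations does not work, because those points are random and the fixed-$v$ tail bound cannot be applied to them; replacing them by a fixed quantile grid and paying an $O(1/N)$ discretization loss — charged to the additive term, which is exactly why the bound carries a $+\tfrac{\ln(2nN/\delta)}{N}$ summand in addition to the square-root term — is what makes the argument go through while preserving the variance-sensitive form of the estimate.
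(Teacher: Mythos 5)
Your proposal is correct and follows exactly the route the paper intends: the paper gives no proof of this lemma, citing Lemma~5 of Guo--Huang--Zhang and noting only that it ``follows from Bernstein's inequality and union bound,'' which is precisely your pointwise-Bernstein argument combined with a union bound over a deterministic quantile grid of $\Theta(N)$ points per bidder, with the $\ln\frac{2nN}{\delta}$ arising from the $\Theta(nN)$ events exactly as you say. The one imprecision is the claim that the $O(1/N)$ discretization loss is ``absorbed into'' the additive term $\frac{\ln(2nN/\delta)}{N}$ --- that term is already fully spent by the Bernstein calculation itself, so strictly your argument yields the bound with a slightly larger additive constant (and a lower-order perturbation of the variance factor); this is harmless for every downstream use, since the paper explicitly budgets for enlarged constants, e.g., the $\frac{4\ln(2nN/\delta)}{N}$ term in the dominated product empirical.
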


Further define the \emph{dominated product empirical distribution} $\btildeE = \tilde{E}_1 \times \tilde{E}_2 \times \dots \times \tilde{E}_n$ such that for any bidder $i$ and any value $0 \le v_i \le 1$:
\[
    \tilde{E}_i(v_i) = \min \Big\{ \, 1 ~,~ E_i(v_i) + \sqrt{\frac{2 E_i(v) (1-E_i(v)) \ln \frac{2nN}{\delta}}{N}} + \frac{4 \ln \frac{2nN}{\delta}}{N} \, \Big\}
    ~.
\]
The larger constant in the last term upper bounds the difference from having distinct terms inside the square root compared to the previous equation.

\begin{lemma}[c.f., Lemmas 6 and 7 of \citet{GuoHZ:STOC:2019}]
    \label{lem:dominated-empirical-cdf-bound}
    For any product distribution $\bD$ on $\R_+^n$, any positive integer $N$, and any $0 < \delta < 1$, consider the dominated product empirical distribution $\btildeE$ from $N$ i.i.d.\ samples from $\bD$.
    Then, with probability at least $1-\delta$, for any bidder $1 \le i \le n$ and any value $0 \le v_i \le 1$:
    \[
        D_i(v_i) \le \tilde{E}_i(v_i) \le  D_i(v_i) + \sqrt{\frac{8 D_i(v) (1-D_i(v)) \ln \frac{2nN}{\delta}}{N}} + \frac{7\ln \frac{2nN}{\delta}}{N}
        ~.
    \]
\end{lemma}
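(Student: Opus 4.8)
The plan is to derive both inequalities, deterministically, from the high-probability event of Lemma~\ref{lem:empirical-cdf-bound}, which holds with probability at least $1-\delta$; on this event I will verify the two bounds for every bidder $i$ and every value $v_i \in [0,1]$ simultaneously. Write $\alpha = \frac{\ln(2nN/\delta)}{N}$ for brevity. First note that one may assume $\alpha$ is below an absolute constant, say $\alpha \le \tfrac14$: otherwise the additive term $4\alpha$ already forces $\tilde E_i(v_i) = 1$, so $D_i(v_i) \le 1 = \tilde E_i(v_i)$ is immediate, and the right-hand side of the claimed upper bound is at least $7\alpha > 1 \ge \tilde E_i(v_i)$ as well.

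The first inequality $D_i(v_i) \le \tilde E_i(v_i)$ carries the real content. Because of the $\min$ with $1$ and since $D_i(v_i) \le 1$, it suffices to show $D_i(v_i) \le E_i(v_i) + \sqrt{2 E_i(v_i)(1-E_i(v_i))\,\alpha} + 4\alpha$. Lemma~\ref{lem:empirical-cdf-bound} gives the analogous estimate but with the \emph{population} variance proxy $D_i(v_i)(1-D_i(v_i))$ inside the square root and only $\alpha$ outside, so the task is to transfer from the population variance to the empirical one. The elementary identity $\bigl| D_i(v_i)(1-D_i(v_i)) - E_i(v_i)(1-E_i(v_i)) \bigr| = \bigl| D_i(v_i) - E_i(v_i) \bigr| \cdot \bigl| 1 - D_i(v_i) - E_i(v_i) \bigr| \le \bigl| D_i(v_i) - E_i(v_i) \bigr|$, combined with the Lemma~\ref{lem:empirical-cdf-bound} bound on $|D_i - E_i|$, yields a self-referential inequality for $u := \sqrt{D_i(v_i)(1-D_i(v_i))}$ against $w := \sqrt{E_i(v_i)(1-E_i(v_i))}$, and solving the resulting quadratic gives $u \le w + O(\sqrt\alpha)$. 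Substituting this back into Lemma~\ref{lem:empirical-cdf-bound} replaces the population variance by the empirical one at the price of an extra $O(\alpha)$ additive term, which the gap between $4\alpha$ and $\alpha$ is chosen to absorb. I expect a small case split here: when $D_i(v_i)(1-D_i(v_i))$ is bounded away from $0$ — e.g.\ for $\alpha \le D_i(v_i) \le 1-\alpha$, where concavity of $t(1-t)$ gives $D_i(v_i)(1-D_i(v_i)) \ge \alpha(1-\alpha)$ — dividing by $\sqrt{D_i(v_i)(1-D_i(v_i))}$ is harmless, whereas in the tail regimes $D_i(v_i)\le\alpha$ or $1-D_i(v_i)\le\alpha$ one instead observes directly that $|D_i(v_i)-E_i(v_i)| \le (1+\sqrt2)\alpha < 4\alpha$, so the claim follows from the $4\alpha$ term alone.

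The second inequality — the upper bound on $\tilde E_i(v_i)$ — then follows by chaining the two one-sided estimates. Starting from $\tilde E_i(v_i) \le E_i(v_i) + \sqrt{2 E_i(v_i)(1-E_i(v_i))\,\alpha} + 4\alpha$, the event's bound $E_i(v_i) \le D_i(v_i) + \sqrt{2 D_i(v_i)(1-D_i(v_i))\,\alpha} + \alpha$, and the same variance transfer applied in the reverse direction ($w \le u + O(\sqrt\alpha)$, hence $\sqrt{2 E_i(v_i)(1-E_i(v_i))\,\alpha} \le \sqrt{2 D_i(v_i)(1-D_i(v_i))\,\alpha} + O(\alpha^{3/4})$), one collects two copies of $\sqrt{2 D_i(v_i)(1-D_i(v_i))\,\alpha}$, which combine into $\sqrt{8 D_i(v_i)(1-D_i(v_i))\,\alpha}$ since $\sqrt 8 = 2\sqrt 2$, plus a pile of lower-order terms. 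The only mildly delicate point is that the transfer produces a cross term of order $\alpha^{3/4}$ rather than $\alpha$; this is tamed by Young's inequality (equivalently AM--GM), $\alpha^{3/4}\bigl(D_i(v_i)(1-D_i(v_i))\bigr)^{1/4} \le c_1 \sqrt{D_i(v_i)(1-D_i(v_i))\,\alpha} + c_2\alpha$, after which (using $\alpha \le \tfrac14$) everything below the leading $\sqrt{8 D_i(v_i)(1-D_i(v_i))\,\alpha}$ term fits under $7\alpha$.

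The main obstacle is entirely the constant bookkeeping: ensuring that the transfer between $\sqrt{D_i(1-D_i)}$ and $\sqrt{E_i(1-E_i)}$, the endpoint case split, and the conversion of the $\alpha^{3/4}$ cross terms all fit within the declared constants $4$ and $7$. There is no conceptual difficulty beyond Lemma~\ref{lem:empirical-cdf-bound} and elementary inequalities; this is, in effect, Lemmas~6 and 7 of \citet{GuoHZ:STOC:2019} reorganized so that the bound $D_i \le \tilde E_i$ holds simultaneously with the matching upper bound.
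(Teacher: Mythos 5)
Your overall plan is the right one and is essentially the argument behind the cited Lemmas 6 and 7 of \citet{GuoHZ:STOC:2019}, which the paper itself imports without proof: condition on the event of Lemma~\ref{lem:empirical-cdf-bound}, transfer between the population and empirical variance proxies via $|D(1-D)-E(1-E)|=|D-E|\,|1-D-E|\le|D-E|$, and chain the two one-sided bounds. The lower bound $D_i\le\tilde E_i$ goes through exactly as you describe: with $u=\sqrt{D(1-D)}$, $w=\sqrt{E(1-E)}$, $\alpha=\ln(2nN/\delta)/N$, the quadratic gives $u\le w+\tfrac{\sqrt2+\sqrt6}{2}\sqrt\alpha$, hence an extra additive $(1+\sqrt3)\alpha$, and $(2+\sqrt3)\alpha<4\alpha$.

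The one step that would not survive the constant bookkeeping as written is your handling of the reverse transfer via an $\alpha^{3/4}$ cross term plus Young's inequality. Young's converts $\alpha^{3/4}(D(1-D))^{1/4}$ into $c_1\sqrt{D(1-D)\alpha}+c_2\alpha$ with $c_1>0$, but the coefficient budget on $\sqrt{D(1-D)\alpha}$ is already exhausted: the two copies of $\sqrt{2D(1-D)\alpha}$ (one from bounding $E_i-D_i$, one from replacing $\sqrt{E(1-E)}$ by $\sqrt{D(1-D)}$) sum to exactly $\sqrt{8D(1-D)\alpha}$, so any further positive multiple of $\sqrt{D(1-D)\alpha}$ cannot be absorbed into $7\alpha$ when $D(1-D)\gg\alpha$. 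The fix is to avoid the $\alpha^{3/4}$ detour entirely by completing the square: $w^2\le u^2+\sqrt{2\alpha}\,u+\alpha\le(u+\sqrt{2\alpha})^2$, so $w\le u+\sqrt{2\alpha}$ and $\sqrt{2E(1-E)\alpha}\le\sqrt{2D(1-D)\alpha}+2\alpha$; the additive terms then total $\alpha+2\alpha+4\alpha=7\alpha$ exactly, matching the stated constant.
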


Motivated by the above bounds on the differences between the original distribution and empirical distributions and driven by our analysis, we say that two product value distributions $\bD$ and $\bE$ are \emph{$\epsilon$-close}, denoted as $\bD \approx_\epsilon \bE$, if for any bidder $i$ and any value $0 \le v_i \le 1$ we have:
\[
    \big\vert \, D_i(v_i) - E_i(v_i) \, \big\vert ~\le~ \sqrt{\min \Big\{ D_i(v_i) \big(1-D_i(v_i)\big) \,,\, E_i(v_i) \big(1-E_i(v_i)\big) \Big\} \cdot \frac{\epsilon^2}{4nk}} + \frac{\epsilon^2}{2nk}
    ~.
\]

%\zh{Should be $\max$}
%\zh{We may need a modified definition of $\epsilon$-close}
%We choose the scaling factors in the above definition so that 
The value distribution and dominated product empirical distribution are $\epsilon$-close when the number of samples meets the sample complexity bound in this paper.
%the sample complexity proofs in this paper.
The next lemma, which follows as a corollary of Lemma~\ref{lem:dominated-empirical-cdf-bound}, makes this precise.
%we have the next lemma.

\begin{lemma}
    \label{lem:dominated-empirical-cdf-bound-2}
    For any product distribution $\bD$ on $\R_+^n$, any positive integer $N$, and any $0 < \delta < 1$, consider the dominated product empirical distribution $\btildeE$ from $N$ i.i.d.\ samples from $\bP$.
    If:
    \[
        N \ge C \cdot \frac{nk}{\epsilon^2} \log \frac{nk}{\epsilon \delta}
        ~,
    \]
    for a sufficiently large constant $C$, then with probability at least $1-\delta$ we have both $\bD \succeq \btildeE$ and $\bD \approx_\epsilon \btildeE$.
\end{lemma}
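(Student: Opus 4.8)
The plan is to derive both conclusions from Lemma~\ref{lem:dominated-empirical-cdf-bound}, which already states that with probability at least $1-\delta$, for every bidder $i$ and every value $0 \le v_i \le 1$,
\[
    D_i(v_i) \;\le\; \tilde E_i(v_i) \;\le\; D_i(v_i) + \sqrt{\frac{8 D_i(v_i)\big(1-D_i(v_i)\big) \ln \frac{2nN}{\delta}}{N}} + \frac{7 \ln \frac{2nN}{\delta}}{N} ~.
\]
Condition on this event. The left inequality says $D_i(v_i) \le \tilde E_i(v_i)$ for all $i$ and all $v_i$, which is precisely $\bD \succeq \btildeE$, so that half is immediate. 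The rest of the argument converts the right inequality into $\bD \approx_\epsilon \btildeE$.

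The next step is to control the ``noise rate'' $\rho \defeq \frac{\ln(2nN/\delta)}{N}$. Since $N \mapsto \frac{\ln(aN)}{N}$ is decreasing in the relevant range, it suffices to bound $\rho$ at $N = C \cdot \frac{nk}{\epsilon^2}\log\frac{nk}{\epsilon\delta}$; there one checks $\ln\frac{2nN}{\delta} = O\!\big(\log\frac{nk}{\epsilon\delta}\big)$ (using $\epsilon,\delta<1$ and $n,k\ge 1$ to absorb the lower-order terms), so $\rho \le \gamma \cdot \frac{\epsilon^2}{nk}$ with $\gamma = O(1/C)$, which can be made as small as desired by taking $C$ a large enough absolute constant. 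Substituting, the right inequality above becomes
\[
    \big\vert D_i(v_i) - \tilde E_i(v_i) \big\vert \;\le\; \sqrt{8\gamma}\cdot\sqrt{D_i(v_i)\big(1-D_i(v_i)\big)\cdot\tfrac{\epsilon^2}{nk}} + 7\gamma\cdot\tfrac{\epsilon^2}{nk} ~,
\]
which for small enough $\gamma$ is at most $\sqrt{D_i(v_i)(1-D_i(v_i))\cdot\frac{\epsilon^2}{4nk}} + \frac{\epsilon^2}{2nk}$.

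This already matches the definition of $\approx_\epsilon$ whenever $D_i(v_i)(1-D_i(v_i)) \le \tilde E_i(v_i)(1-\tilde E_i(v_i))$. The remaining case, which I expect to be the main obstacle, is the $\min$ in the definition of $\epsilon$-closeness: when $\tilde E_i(v_i)(1-\tilde E_i(v_i)) < D_i(v_i)(1-D_i(v_i))$ the deviation must be re-expressed in terms of the \emph{smaller} variance proxy. For this I would use the elementary inequality $D_i(v_i)(1-D_i(v_i)) \le \tilde E_i(v_i)(1-\tilde E_i(v_i)) + |D_i(v_i) - \tilde E_i(v_i)|$ (expand $t\mapsto t(1-t)$ around $\tilde E_i(v_i)$ and use $D_i(v_i)\le \tilde E_i(v_i)$ together with $0 \le \tilde E_i(v_i) \le 1$), substitute it into the bound from the previous paragraph, split the square root of a sum, and then absorb the resulting $\sqrt{|D_i(v_i)-\tilde E_i(v_i)|}$ term back into $|D_i(v_i)-\tilde E_i(v_i)|$ via a weighted AM--GM step. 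This yields $|D_i(v_i)-\tilde E_i(v_i)| \le O(\sqrt\gamma)\cdot\sqrt{\tilde E_i(v_i)(1-\tilde E_i(v_i))\cdot\frac{\epsilon^2}{4nk}} + O(\gamma)\cdot\frac{\epsilon^2}{nk}$, which is again at most $\sqrt{\tilde E_i(v_i)(1-\tilde E_i(v_i))\cdot\frac{\epsilon^2}{4nk}} + \frac{\epsilon^2}{2nk}$ once $C$ (hence $1/\gamma$) is a large enough absolute constant. Combining the two cases gives $\bD \approx_\epsilon \btildeE$, and together with $\bD \succeq \btildeE$ this completes the proof; the only bookkeeping is to verify that one absolute choice of $C$ makes all the constant comparisons in both cases go through simultaneously.
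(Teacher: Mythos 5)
Your proposal is correct and follows exactly the route the paper intends: the paper states this lemma as a corollary of Lemma~\ref{lem:dominated-empirical-cdf-bound} without writing out the details, and your derivation (dominance from the left inequality, closeness from bounding $\ln(2nN/\delta)/N$ by $O(1/C)\cdot\epsilon^2/(nk)$) is the intended one. The one step the paper leaves implicit is converting the variance proxy $D_i(v_i)(1-D_i(v_i))$ into $\tilde E_i(v_i)(1-\tilde E_i(v_i))$ to handle the $\min$ in the definition of $\approx_\epsilon$, and your bound $D_i(v_i)(1-D_i(v_i)) \le \tilde E_i(v_i)(1-\tilde E_i(v_i)) + |D_i(v_i)-\tilde E_i(v_i)|$ followed by the AM--GM absorption is a valid and complete way to do it.
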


Some readers may prefer a uniform notion of $\epsilon$-closeness.
Two product value distributions $D$ and $E$ are $\epsilon$-close uniformly, denote as $D \uniformclose_\epsilon E$, if for any bidder $i$ and any value $0 \le v_i \le 1$:
\[
    \big\vert ~ D_i(v_i) - E_i(v_i) ~ \big\vert \le \frac{\epsilon}{\sqrt{nk}}
    ~,
\]
where the right-hand-side is scaled by $\frac{1}{\sqrt{nk}}$ for a direct comparison with the nonuniform notion.
By definition, $D \approx_\epsilon E$ implies $D \uniformclose_\epsilon E$ but not the other way around.

\section{Strong Revenue (Non-)Monotonicity}
\label{sec:rev-monotone}

\subsection{Example of Non-monotonicity}
\label{sec:non-monotone}

Recall that \citet{DevanurHP:STOC:2016} proved the strong revenue monotonicity of single-parameter auctions in the matroid setting, i.e., when the set of feasible allocations is the convex hull of the basis of a matroid.
This subsection shows that the strong revenue monotonicity fails to generalize to more general single-parameter auctions (for some value distributions) even if we assume downward-closedness of the feasible allocations.

\paragraph{Example (Minimum Non-matroid).}
    %We will describe a $3$-bidder rank-$2$ auction and value distributions $\bD$ and $\btildeD$ such that $M_{\btildeD}(\bD) < M_{\btildeD}(\btildeD) - \Omega(1)$.
    %The theorem for general ranks $k$ then follows by scaling the allocation space by a factor of $\frac{k}{2}$ or by duplicating the construction $\frac{k}{2}$ times.
    Consider a $3$-bidder rank-$2$ auction as follows.
    We refer to the three bidders as $A$, $B$ and $C$.
    Let the set of feasible allocations be:
    \[
        \allocset = \Big\{ (x_A, x_B, x_C) \in [0,1]^3 : x_A + x_B \le 1 \mbox{ and } x_A + x_C \le 1 \Big\}
        ~.
    \]
    In other words, it is the convex hull of allocating exclusively to $A$, i.e., $x = (1, 0, 0)$, allocating to one or both of $B$ and $C$, i.e., $x = (0, 1, 1)$, $(0, 1, 0)$, or $(0, 0, 1)$, and not allocating anything, i.e., $x = (0, 0, 0)$.

\begin{theorem}
    \label{thm:non-monotone}
    The single-parameter auction whose feasible allocations are the above minimum non-matroid does not satisfy strong revenue monotonicity (for some value distributions).
\end{theorem}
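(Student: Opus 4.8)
The plan is to construct explicit product value distributions $\btildeD \preceq \bD$ over the three bidders $A,B,C$ with $\myerson_{\btildeD}(\bD) < \OPT(\btildeD)$, and then verify this by directly computing both sides. The effect I want to engineer is the following: under $\btildeD$, bidder $A$ is a ``moderate'' bidder whose monopoly price extracts a healthy revenue, while $B$ and $C$ are i.i.d.\ and ``weak'' (small monopoly reserve), so Myerson's auction for $\btildeD$ often serves $A$ alone and forgoes $B,C$, which is harmless since $B,C$ contribute little. Passing to $\bD$ I keep $A$'s marginal fixed but push $B$'s and $C$'s values up to the high end of their supports. The crucial design choice is to give $\btildeD_B = \btildeD_C$ a \emph{heavy tail}: a small monopoly reserve but an ironed virtual value function whose value at high bids is large. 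Then on $\bD$ the pushed-up $B,C$ have, \emph{with respect to $\btildeD$'s virtual values} (which is what $\myerson_{\btildeD}$ uses), combined virtual values exceeding $A$'s, so $\myerson_{\btildeD}$ now awards the item to $B$ and $C$ even when $A$'s value is high, yet it still charges $B$ and $C$ only their small $\btildeD$-reserves. Revenue collapses precisely on the events where $A$ used to pay its monopoly price.

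Concretely (up to the negligible perturbations noted below): let $\btildeD_A$ be $0$ or $\tfrac12$ each with probability $\tfrac12$; let $\btildeD_B = \btildeD_C$ put probability $\rho$ on $0$, probability $\epsilon$ on $\tfrac34$, and the rest on $\tfrac18$; and let $\bD$ be $\btildeD$ with the marginals of $B,C$ replaced by ``$\tfrac34$ with probability $1-\rho$, $0$ with probability $\rho$'', leaving $\bD_A=\btildeD_A$. One checks $\bD \succeq \btildeD$ coordinatewise. The first computation is the ironed revenue curves and ironed virtual values under $\btildeD$: $\bar\virtual_A^{\btildeD}(v) = \tfrac12$ for $v \ge \tfrac12$ and negative for $v < \tfrac12$; and, because of the heavy tail, $\bar\virtual_B^{\btildeD}(v) = \tfrac34$ for $v \ge \tfrac34$, $\bar\virtual_B^{\btildeD}(v) \approx \tfrac18$ for $v \in [\tfrac18,\tfrac34)$, and $\bar\virtual_B^{\btildeD}(v) < 0$ for $v < \tfrac18$, with monopoly reserve $\tfrac18$. (The tiny mass at $0$ in $\btildeD_B$ is present so that after ironing the virtual value is strictly negative below the reserve, which makes $B$'s threshold payment equal to the reserve rather than $0$.)

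Next, $\OPT(\btildeD) = \myerson_{\btildeD}(\btildeD)$ is computed from Myerson's characterization as $\E_{v \sim \btildeD}\bigl[\max_{x \in \allocset} \sum_i x_i \bar\virtual_i^{\btildeD}(v_i)\bigr]$, the maximum being over the five feasible integral allocations $(0,0,0),(1,0,0),(0,1,0),(0,0,1),(0,1,1)$: when $v_A = \tfrac12$ the optimum is $(1,0,0)$ with virtual welfare $\tfrac12$, and when $v_A = 0$ it is $(0,1,1)$ with expected virtual welfare $2 \cdot \tfrac18$, so $\OPT(\btildeD) \approx \tfrac12 \cdot \tfrac12 + \tfrac12 \cdot \tfrac14 = \tfrac38$. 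Then one traces $\myerson_{\btildeD}$ on $\bD$: since $\bar\virtual_B^{\btildeD}(\tfrac34) + \bar\virtual_C^{\btildeD}(\tfrac34) = \tfrac34 + \tfrac34 > \tfrac12 = \bar\virtual_A^{\btildeD}(\tfrac12)$, the item goes to $B$ and $C$ in every realization, and each of $B,C$ pays its threshold, which remains $\tfrac18$: $A$ cannot already outbid the other of $B,C$ alone, so $B$ is added on top as soon as its own $\btildeD$-virtual value becomes positive, at value $\tfrac18$. The outcome when $v_A = 0$ is identical. Hence $\myerson_{\btildeD}(\bD) \approx 2 \cdot \tfrac18 = \tfrac14 < \tfrac38 \approx \OPT(\btildeD)$. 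To finish rigorously one fixes concrete small $\rho,\epsilon$ and does the exact arithmetic; every error term is $O(\rho+\epsilon)$ against a gap of $\tfrac18$, and the extra cases for $\myerson_{\btildeD}(\bD)$---e.g.\ exactly one of $v_B,v_C$ equal to $0$---occur with probability $O(\rho)$ and only raise its revenue.

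The step I expect to be the crux is calibrating the heavy tail of $\btildeD_B = \btildeD_C$: it must simultaneously keep the monopoly reserve small enough that the revenue genuinely drops on $\bD$ and reach a virtual value large enough that $B+C$ beats $A$'s ironed virtual value on $\bD$; doing this with finitely supported distributions forces care with the ironed-revenue-curve conventions (point masses, gaps in the support, and the value of the ironed virtual value just below the reserve). A cleaner-but-more-computational alternative is to replace all discrete distributions by continuous ones with revenue curves of the same shape.
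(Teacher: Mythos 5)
Your construction is correct and exploits exactly the same mechanism as the paper's proof: on the minimum non-matroid, Myerson's auction for $\btildeD$ has non-monotone total payments (when both $B$ and $C$ are high, each one's threshold collapses to the small reserve because the other alone already beats $A$), so pushing $B,C$'s distributions up to concentrate on the both-high profile destroys the revenue that $A$ (or a lone high bidder) used to contribute. The paper's instance is just a cleaner calibration of the same idea ($A$ deterministic at $\tfrac12$, $B,C$ two-point on $\{\epsilon,1\}$ with the high value rare, and exact arithmetic in place of your $O(\rho+\epsilon)$ bookkeeping — note only that your claim that the probability-$O(\rho)$ side cases ``only raise'' the revenue is not quite right, e.g.\ $v_A=0,\,v_B=0,\,v_C=\tfrac34$ yields only $\tfrac18$, but the $O(\rho)$ bound alone suffices).
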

%\dorsa{here we should say there is a single-param auction because value distributions are special/ also use Myerson optimal auction instead of single-param auction?}
\begin{proof}
    Consider two value distributions $\bD$ and $\btildeD$ as follows.
    The value of $A$ is deterministically $\frac{1}{2}$ in both $\bD$ and $\btildeD$, i.e., its value distributions are identically a point mass.
    By definition its virtual value is also deterministically $\frac{1}{2}$

    The values of $B$ and $C$ are independently and identically distributed in both distributions.
    In the smaller distribution $\btildeD$, the value of each is $1$ with probability $\epsilon = \frac{1}{10}$ and is $\epsilon$ otherwise.
    In the larger distribution $\bD$, the values are deterministically $1$.
    By definition, the revenue curves w.r.t.\ the smaller distribution $\btildeD$ are as given in Figure~\ref{fig:non-monotone-example-virtual-value} and the corresponding virtual values are:
    %
    %\dorsa{adding figures for vitual values(revenue curves) can be good}
    \[
        \virtual_{\tilde{D}_B}(v) = \virtual_{\tilde{D}_C}(v) =
        \begin{cases}
            1 & \mbox{ if $v = 1$;} \\
            0 & \mbox{ if $\epsilon \le v < 1$;} \\
            -\infty & \mbox{ if $0 \le v < \epsilon$.}
        \end{cases}
    \]
    
    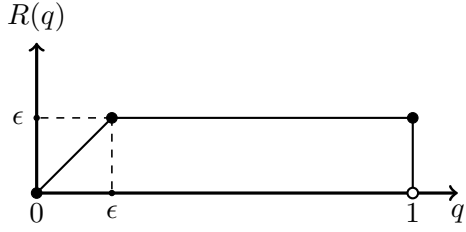
\begin{figure}[t]
        \centering
        \begin{tikzpicture}[
                axis/.style={very thick, ->},
                curve/.style={thick},
                point/.style={fill=black,minimum size=0.02cm,circle},
                txt/.style={fill=none,draw=none},
                scale=1.0,
                every node/.style={scale=1.0},
            ]
            % axes
            \draw[axis] (0,0)--(5.6,0) node[at end,below] {$q$}; 
            \draw[axis] (0,0)--(0,2) node[at end,above] {$R(q)$}; 
            %
            % \node[] (0) at (0,0) {};
            % \node[] (a) at (.5,1.5) {};
            % \node[] (b) at (5,3) {};
            % \node[] (1) at (5,0) {};
            %
            \draw[black,curve] (0,0) -- (1,1);
            % \draw[black,curve] (0,0) -- (1,1)
            % \draw[black,curve] (.5,1.5) -- (5,3);
            \draw[black,curve] (1,1) -- (5,1);
            \draw[black,curve] (5,1) -- (5,0);
            % \draw[dashed,black,semithick] (5,3) -- (0,3);
            \draw[dashed,black,semithick] (1,1) -- (1,0);
            \draw[dashed,black,semithick] (1,1) -- (0,1);
            \filldraw[draw=black,fill=black] (0,0) circle (2pt) node[below]{$0$};
            \filldraw[draw=black,thick,fill=white] (5,0) circle (2pt) node[below]{$1$};
            \filldraw[black] (1,1) circle (2pt);
            \filldraw[black] (5,1) circle (2pt);
            \filldraw[black] (1,0) circle (1pt) node[below=1pt] {$\epsilon$};
            \filldraw[black] (0,1) circle (1pt) node[left=1pt] {$\epsilon$};
            % \filldraw[black] (0,3) circle (1pt) node[left=1pt] {$2\epsilon$};
        \end{tikzpicture}
        \caption{Revenue curve of $\tilde{D}_B$ and $\tilde{D}_C$ in the proof of Theorem~\ref{thm:non-monotone}}
        \label{fig:non-monotone-example-virtual-value}
    \end{figure}

    Therefore, Myerson's optimal auction $\myerson_{\btildeD}$ w.r.t.\ $\btildeD$ is as follows:
    \begin{itemize}
        \item If $v_B = v_C = 1$, allocate to both $B$ and $C$, and let each of them pay $\epsilon$; that is, $x = (0, 1, 1)$ and $p = (0, \epsilon, \epsilon)$.
        \item If $v_B = 1$, $v_C = \epsilon$, allocate to both $B$ and $C$, and let $B$ pay $1$ and let $C$ pay $\epsilon$; that is, $x = (0, 1, 1)$ and $p = (0, 1, \epsilon)$.
        \item If $v_B = \epsilon$, $v_C = 1$, allocate to both $B$ and $C$, and let $B$ pay $\epsilon$ and let $C$ pay $1$; that is, $x = (0, 1, 1)$ and $p = (0, \epsilon, 1)$.
        \item If $v_B = v_C = \epsilon$, allocate to $A$, let $A$ pay $\frac{1}{2}$; that is, $x = (1, 0, 0)$ and $p = (\frac{1}{2}, 0, 0)$.
    \end{itemize}

    The corresponding expected revenue equals:
    \[
        M_{\btildeD}(\btildeD) = (1-\epsilon)^2 \cdot \frac{1}{2} + 2 \epsilon(1-\epsilon) \cdot (1+\epsilon) + \epsilon^2 \cdot 2\epsilon > \frac{1}{2}
        ~.
    \]

    The above Myerson's optimal auction w.r.t.\ $\btildeD$ suffers from non-monotone payments:
    when only one of $B$ and $C$ has value $1$ the total payment is $1 + \epsilon$; when both of them have value $1$, however, the total payment is only $2\epsilon$.
    That is why in the bigger value distribution $\bD$ we let $B$ and $C$'s value distributions be identically a point mass at $1$.
    As a result, the expected revenue is only
    \[
        M_{\btildeD}(\bD) = 2\epsilon
        ~.
    \]

    By our choice of $\epsilon = \frac{1}{10}$ we have $M_{\btildeD}(\bD) < M_{\btildeD}(\btildeD) - \Omega(1)$.
\end{proof}

\begin{remark}
    In the above construction, $\epsilon$ can be arbitrarily small.
    Therefore, the multiplicative gap between the optimal revenue of a value distribution $\tilde{D}$ and the revenue of running Myerson's optimal auction w.r.t.\ $\tilde{D}$ on a stochastically dominating distribution $D$ can be arbitrarily large.
\end{remark}

By making $\lfloor \frac{k}{2} \rfloor$ copies of the minimum non-matroid and the value distributions in Theorem~\ref{thm:non-monotone}, we get that the decrease in revenue could be as large as $\Omega(k)$, which is at least a constant factor of the optimal revenue since the latter cannot exceed $k$.

\begin{corollary}
    \label{cor:non-monotone}
    For any positive integer $k$, there is a rank-$k$ downward-closed single-parameter auction, and two value distributions $\bD \succeq \btildeD$ such that:
    \[
        M_{\btildeD}(\bD) < M_{\btildeD}(\btildeD) - \Omega(k)
        ~.
    \]
\end{corollary}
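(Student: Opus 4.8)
The plan is to stack $\lfloor k/2 \rfloor$ independent copies of the minimum non-matroid of Theorem~\ref{thm:non-monotone}, reuse the value distributions from its proof on each copy, and argue that the per-copy revenue loss simply adds up. Concretely, take $m = \lfloor k/2 \rfloor$ disjoint triples of bidders $(A_j, B_j, C_j)$ for $j \in [m]$, and let the feasible allocations be the \emph{direct sum} of the per-copy minimum non-matroids: a vector is feasible if and only if its restriction to each triple lies in the $\allocset$ of Theorem~\ref{thm:non-monotone}. A direct sum of downward-closed set systems is downward-closed, and the rank is $2m \in \{k-1, k\}$. If $k$ is odd, I would further add a single padding bidder $A_0$ whose slot is unconstrained ($x_{A_0} \in [0,1]$, competing with no one) and whose value is deterministically $0$ in both distributions; this brings the rank to exactly $k$, preserves downward-closedness, and — having revenue curve identically zero, hence virtual value $0$ — is never allocated and does not perturb anything. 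Let $\bD$ (resp.\ $\btildeD$) be the product over copies of the larger (resp.\ smaller) distribution on $B_j, C_j$ from Theorem~\ref{thm:non-monotone}, a point mass at $\tfrac12$ on each $A_j$, and the point mass at $0$ on $A_0$; then $\bD \succeq \btildeD$.

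The crux is that Myerson's optimal auction decomposes across the copies. The ironed virtual value of a bidder depends only on that bidder's value and marginal, so the ironed virtual welfare of an allocation is the sum of the per-copy ironed virtual welfares; since the feasibility constraint is separable across copies, an ironed-virtual-welfare-maximizing allocation can be taken to be the concatenation of per-copy maximizers. Hence $M_{\btildeD}$ on the combined instance runs exactly the Theorem~\ref{thm:non-monotone} auction on each copy, independently, so for any product value profile — in particular for $\btildeD$ and for $\bD$ — its revenue is the sum of the per-copy revenues.

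Plugging in the bounds from the proof of Theorem~\ref{thm:non-monotone}, each copy contributes more than $\tfrac12$ to $M_{\btildeD}(\btildeD)$ and exactly $2\epsilon = \tfrac15$ to $M_{\btildeD}(\bD)$, so
\[
    M_{\btildeD}(\btildeD) - M_{\btildeD}(\bD) ~>~ m \cdot \Big( \tfrac12 - \tfrac15 \Big) ~=~ \tfrac{3}{10} \lfloor k/2 \rfloor ~=~ \Omega(k)
    ~,
\]
which is the claimed inequality $M_{\btildeD}(\bD) < M_{\btildeD}(\btildeD) - \Omega(k)$. Moreover, since the auction is unit-demand with rank $k$ and values lie in $[0,1]$, no auction can extract revenue more than $k$, so this $\Omega(k)$ gap is a constant fraction of the optimal revenue, as advertised in the surrounding text.

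The proof is essentially bookkeeping on top of Theorem~\ref{thm:non-monotone}; the only places that need care are (i) verifying that the direct sum of the downward-closed set systems is itself downward-closed with rank $2m$, (ii) the decomposition of Myerson's optimal auction across independent copies with separable constraints — the main conceptual step, though routine — and (iii) the cosmetic parity padding for odd $k$, where one should confirm the zero-valued bidder is never allocated and hence leaves the construction intact.
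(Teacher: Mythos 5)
Your proposal is correct and follows exactly the route the paper intends: the paper justifies Corollary~\ref{cor:non-monotone} with the single sentence preceding it (``by making $\lfloor \frac{k}{2} \rfloor$ copies of the minimum non-matroid and the value distributions in Theorem~\ref{thm:non-monotone}\dots''), and your write-up simply fills in the bookkeeping --- the direct-sum feasibility, the decomposition of Myerson's virtual-welfare maximization across independent copies, and the per-copy loss of $\tfrac{1}{2}-2\epsilon=\tfrac{3}{10}$. The parity padding for odd $k$ is a harmless extra care the paper does not bother with.
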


Finally, we build on the above example to show that the matroid setting is the only case satisfying strong revenue monotonicity among all downward-closed set systems.%
\footnote{We need the assumption of downward-closedness because otherwise it could be a non-matroid but also effectively equivalent to a matroid, e.g., by adding a dummy bidder $i$ who must be allocated to, i.e., $x_i = 1$, to a matroid setting.}

\begin{theorem}
    \label{thm:strong-revenue-monotone-iff-matroid}
    If the set of feasible allocations of a single-parameter auction is a downward-closed set system, then it satisfies strong revenue monotonicity (for all value distributions) if and only the set system is a matroid.
\end{theorem}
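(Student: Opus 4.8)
The plan is to prove the two directions separately. The ``if'' direction is immediate: if $\indset$ is a matroid then the auction is strongly revenue monotone by \citet*{DevanurHP:STOC:2016}. So the work is the ``only if'' direction, which I would prove by contraposition: if the downward-closed set system $\indset$ is \emph{not} a matroid, I exhibit product value distributions $\btildeD$ and $\bD \succeq \btildeD$ with $\myerson_{\btildeD}(\bD) < \OPT(\btildeD)$, refuting Definition~\ref{def:strong-revenue-monoton}. Since $\indset$ is downward-closed it satisfies (M1) and (M2), so it must violate the exchange property (M3): there are $I, J \in \indset$ with $|I| < |J|$ and $I \cup \{j\} \notin \indset$ for every $j \in J \setminus I$. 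The strategy is to import the counterexample of Theorem~\ref{thm:non-monotone} into $\indset$ through matroid-minor operations.

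Recall the two basic minors of a set system: \emph{deleting} a bidder $e$ keeps the feasible sets avoiding $e$, and \emph{contracting} a bidder $e$ with $\{e\}\in\indset$ keeps the sets $S\not\ni e$ with $S\cup\{e\}\in\indset$; both preserve downward-closedness. The combinatorial heart of the argument is the claim that \emph{every downward-closed non-matroid has a minor isomorphic to the minimum non-matroid} $\mathcal N$ of Theorem~\ref{thm:non-monotone} (equivalently, a minor-minimal downward-closed non-matroid is $\mathcal N$). Starting from the exchange failure $(I,J)$: since $I\not\subseteq J$, contracting $I\cap J$ is legal and makes the witness sets disjoint while preserving the exchange failure; deleting every bidder outside $I\cup J$ makes $I\cup J$ the ground set; and deleting bidders of $J$ one at a time reduces to $|J|=|I|+1\ge 2$. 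When $|I|=1$ the ground set has three elements, and a short check shows the only downward-closed non-matroid on three elements is $\mathcal N$ itself. The case $|I|\ge 2$ is the crux: one must argue that a further deletion or contraction still yields a non-matroid (contradicting minor-minimality, hence reducing to $|I|=1$). I expect this case analysis---tracking how the maximal independent sets $I$ and $J$ behave when we delete an element of $J$, delete an element of $I$, or contract an element of $I$---to be the main obstacle of the whole proof.

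Granting the reduction, it remains to lift a violation in $\mathcal N$ back along the sequence of minors. Un-deleting a bidder $e$ is free: assign $e$ the deterministic value $0$ in both $\bD$ and $\btildeD$; its virtual value is $0$, so it never alters Myerson's ironed-virtual-welfare maximizer and pays $0$, leaving $\OPT(\btildeD)$ and $\myerson_{\btildeD}(\bD)$ unchanged. Un-contracting is handled by a scaling trick: scale the three distributions that Theorem~\ref{thm:non-monotone} places on $A,B,C$ down by a tiny common factor $\eta$ (so every virtual welfare on $\{A,B,C\}$ is $O(\eta)$), and, processing the contracted bidders $f_1,\dots,f_m$ in order, give $f_i$ the deterministic value $c_i = 3^{-i}$ in both $\bD$ and $\btildeD$. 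Since $\{f_1,\dots,f_m\}\in\indset$ and each $c_i$ dominates $\sum_{j>i}c_j$ together with all $O(\eta)$ virtual welfares on $\{A,B,C\}$, Myerson's maximizer contains every $f_i$ in every realization and then optimizes the $\{A,B,C\}$-part exactly over $\mathcal N$; moreover each $f_i$ is allocated regardless of its own bid down to $c_i$, so its threshold payment is the \emph{constant} $c_i$, identical under $\bD$ and $\btildeD$. Hence $\myerson_{\btildeD}(\bD) = \sum_i c_i + \eta\,\myerson^{\mathcal N}_{\btildeD}(\bD)$ and $\OPT(\btildeD) = \sum_i c_i + \eta\,\OPT^{\mathcal N}(\btildeD)$, so the strict inequality $\myerson^{\mathcal N}_{\btildeD}(\bD) < \OPT^{\mathcal N}(\btildeD)$ of Theorem~\ref{thm:non-monotone} (a $\Theta(1)$ gap) carries over to give $\myerson_{\btildeD}(\bD) < \OPT(\btildeD)$.

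Besides the $|I|\ge 2$ case of the reduction, the remaining points are routine but should be spelled out: that uniformly scaling the $\{A,B,C\}$ distributions by $\eta$ scales Myerson's allocation, thresholds, and revenue by exactly $\eta$; that the chosen $c_i$ really force $f_i$ into the allocation against \emph{every} competing feasible set, not just the intended ones; that $\bD$ still stochastically dominates $\btildeD$ coordinate-wise after all modifications; and that all values remain in $[0,1]$, which is why $\eta$ is taken small and uses that there are at most $n$ contractions.
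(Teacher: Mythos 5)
Your ``if'' direction and the overall strategy of embedding the minimum non-matroid are in the right spirit, but the proposal has a genuine gap at exactly the point you flag: the claim that every downward-closed non-matroid has a minor isomorphic to the minimum non-matroid $\mathcal N$ carries essentially all of the combinatorial content of the ``only if'' direction, and you do not prove it. Your reduction correctly handles contracting $I\cap J$, deleting elements outside $I\cup J$, and the base case $|I|=1$ (your enumeration on three elements is right), but the step for $|I|\ge 2$ is left as ``one must argue that a further deletion or contraction still yields a non-matroid.'' This is not automatic: deleting an element of $J$ destroys the cardinality gap, and contracting an element of $I$ need not keep $J$ independent. The claim is in fact true --- one clean route is to take a minimal set $X$ admitting two maximal independent subsets $I,J$ with $|I|<|J|$, observe that minimality forces $|J|=|I|+1$ and lets you contract $I\cap J$ away, and then show that if $|I\setminus J|\ge 2$ one can extend $I\setminus\{i\}$ inside $X\setminus\{i\}$ to an independent set $(I\setminus\{i\})\cup\{j_1,j_2\}$ of size $|I|+1$, so that $I\cup\{j_1,j_2\}\subsetneq X$ is a strictly smaller witness, a contradiction --- but this argument has to be supplied; it is the theorem.

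The paper avoids the minor machinery altogether. It chooses the exchange-violating pair $S,S'$ with $|S|=|S'|+1$ \emph{maximizing} $|S\cap S'|$, derives from this extremal choice that any feasible set containing $S\cap S'$ and meeting both $S'\setminus S$ and $S\setminus S'$ has at most $|S|-2$ elements (Lemma~\ref{lem:crossing-sets}), and then assigns deterministic values ($1$ on $S\cap S'$, $0$ off $S\cup S'$, $\nicefrac{1}{n}$ on the remaining filler bidders) so that Myerson's virtual-welfare maximizer can only ever be $S'$ or $S$; the designated bidders $A\in S'\setminus S$ and $B,C\in S\setminus S'$ then replay Theorem~\ref{thm:non-monotone} in place. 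Your lifting step (value $0$ for deleted bidders, large point masses forcing contracted bidders into every maximizer, scaling the $\{A,B,C\}$ distributions) is essentially a payment-level version of the same idea and would go through once the minor lemma is established --- note that downward-closedness is what guarantees that the $\{A,B,C\}$-part of any feasible set containing the contracted bidders is independent in the minor, so competing allocations cannot smuggle in infeasible combinations, and equal constants $c_i$ would do in place of the geometric decay. But until the $|I|\ge 2$ reduction is actually carried out, the proposal does not constitute a proof.
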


\begin{proof}
    \citet{DevanurHP:STOC:2016} proved the direction from matroid to strong revenue monotonicity.
    It remains to prove that any downward-closed non-matroid auction does not satisfy strong revenue monotonicity.
    Consider any such auction and its set $\indset$ of feasible subsets of bidders who could be allocated to.
    Since it is downward-closed and is not a matroid, it must violate the exchange property (M3).
    We will build on this fact to find an embedded structure that resembles the aforementioned minimum non-matroid example.
    
    By the violation of exchange property, there are two feasible subsets of bidders $S, S' \in \indset$ such that $|S'| < |S|$, yet for any bidder $i \in S \setminus S'$, it is infeasible to allocate to $i$ on top of $S'$, i.e.:
    \[
        \forall i \in S \setminus S' : \quad S' \cup \{i\} \notin \indset
        ~.
    \]

    Note that the above implies $S' \not\subset S$.
    By removing bidders from $S$, we may assume WLOG that $|S| = |S'|+1$.
    Among all such pairs of $S$ and $S'$, consider the one with the maximum $|S \cap S'|$.
    Define $S_\cap = S \cap S'$, $S_A = S' \setminus S$, and $S_{BC} = S \setminus S'$.
    Further let $m = |S_\cap|$ and $\ell = |S_{BC}|$.
    Then, we have $|S| = m + \ell$ and $|S'| = m+\ell-1$.
    The above greedy choice of $S$ and $S'$ gives a useful property.

    \begin{lemma}
        \label{lem:crossing-sets}
        Suppose that a feasible subset of bidders $T$ satisfies that (1) $S_\cap \subseteq T$, (2) $T \cap S_A \ne \emptyset$, and (3) $T \cap S_{BC} \ne \emptyset$.
        We have $|T| \le m + \ell -2$.
    \end{lemma}

    It follows from a proof by contradiction.
    Suppose on the contrary that there exists such a $T$, and consider the one with the most elements.
    If $|T| = m + \ell - 1$, we may replace $S'$ by $T$ and increase $|S \cap S'|$.
    If $|T| \ge m + \ell$, we may replace $S$ by $T$ and increase $|S \cap S'|$.

    We next describe the value distributions $\bD$ and $\btildeD$.
    Let the values of all bidders in $S_\cap$ be $1$ deterministically in both distributions, and thus their virtual values are also deterministically $1$.
    Recall that $n$ is the number of bidders.
    The values of other bidders will be at most $\frac{1}{n}$ so that their total contribution to virtual welfare is less than that of a single bidder in $S_\cap$.
    Hence, Myerson's optimal auction always allocates to bidders in $S_\cap$ and gets $m = |S_\cap|$ total virtual values from them.%let the values of all bidders not in $S \cup S'$ be $0$ deterministically.

    Further let the values of all bidders not in $S \cup S'$ be $0$ deterministically so that they may be ignored in our discussion.

    Next, let $A$ be an arbitrary bidder in $S_A$.
    Let $B$ and $C$ be two arbitrary bidders in $S_{BC}$.
    Let the values of all bidders in $S_A \setminus \{A\}$ and $S_{BC} \setminus \{B, C\}$ be $\frac{1}{n}$ deterministically in both distributions.
    Let the value distributions of $A$, $B$, and $C$ be the same as the construction in Theorem~\ref{thm:non-monotone}, scaled by a $\frac{1}{n}$ factor so that they are at most $\frac{1}{n}$ as claimed earlier.

    Consider a feasible allocation that allocates to all bidders in $S_\cap$.
    If it is $S_\cap \cup S_A$, the virtual welfare equals:
    \[
        \underbrace{\vphantom{\Big|}m}_{\mbox{\small from $S_\cap$}} + \underbrace{\vphantom{\Big|}\frac{\ell-2}{n}}_{\mbox{\small from $S_A \setminus \{A\}$}} + \quad \underbrace{\vphantom{\Big|}\virtual_A(v_A)}_{\mbox{\small from $A$}}
        ~.
    \]

    If it is $S_\cap \cup S_{BC}$, the virtual welfare equals:
    \[
        \underbrace{\vphantom{\Big|}m}_{\mbox{\small from $S_\cap$}} + \underbrace{\vphantom{\Big|}\frac{\ell-2}{n}}_{\mbox{\small from $S_{BC} \setminus \{B, C\}$}} + \quad \underbrace{\vphantom{\Big|}\virtual_B(v_B) ~ + ~ \virtual_C(v_C)}_{\mbox{\small from $A$}}
        ~.
    \]

    If it involves bidders from both $S_A$ and $S_{BC}$, by Lemma~\ref{lem:crossing-sets} the virtual welfare is at most:
    \[
        m ~ + ~ \frac{\ell-2}{n}
        ~.
    \]
    
    Therefore, Myerson's optimal auction allocates to either $S_\cap \cup S_A$ or $S_\cap \cup S_{BC}$ at all time, and the contributions to the virtual welfare from bidders other than $A$, $B$, and $C$ are constant.
    The rest reduces to the aforementioned example and Theorem~\ref{thm:non-monotone}.
\end{proof}

\subsection{Approximate Strong Revenue Monotonicity}
\label{sec:approx-monotone}

In the above example, distribution $\bD$ not only stochastically dominates $\btildeD$ but is also much bigger.
We next show that if $\bD$ is instead $\epsilon$-close to $\btildeD$ then we recover an approximate version of strong revenue monotonicity.

\begin{theorem}
    \label{thm:approximate-revenue-monotone}
    For any $0 < \epsilon \le 1$, if product value distributions $\bD$ and $\btildeD$ satisfy $\bD \succeq \btildeD$ and $\bD \approx_\epsilon \btildeD$, then we have:
    \[
        M_{\btildeD}(\bD) \ge M_{\btildeD}(\btildeD) - \epsilon
        ~.
    \]
    If instead $\bD \uniformclose_\epsilon \btildeD$, then we have:
    \[
        M_{\btildeD}(\bD) \ge M_{\btildeD}(\btildeD) - \sqrt{\frac{n}{k}} \cdot \epsilon
        ~.
    \]
\end{theorem}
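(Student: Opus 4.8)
The plan is to work in quantile space and to bound the revenue gap $M_{\btildeD}(\btildeD)-M_{\btildeD}(\bD)$ bidder by bidder, using the monotone coupling of $\bD$ and $\btildeD$ induced by shared quantiles. First I would remove ironing from the picture by rounding both $\bD$ and $\btildeD$ downward within each ironed value-interval of $\btildeD$: this common non-decreasing re-mapping preserves $\bD\succeq\btildeD$ and $\bD\approx_\epsilon\btildeD$ (resp.\ $\bD\uniformclose_\epsilon\btildeD$), and it changes neither $M_{\btildeD}(\bD)$ nor $M_{\btildeD}(\btildeD)$ because $M_{\btildeD}$ reads a profile only through the ironed $\btildeD$-virtual values and prices at breakpoints of the ironed curve, so no payment moves when a value slides inside its ironed interval; afterwards $\btildeD$ is regular and I drop the bars on its revenue curve and virtual values. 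Now let $\bq=(q_1,\dots,q_n)$ be uniform on $[0,1]^n$, set $v_i=v_{\bD_i}(q_i)$ and $\tilde v_i=v_{\btildeD_i}(q_i)$ (so the profiles are distributed as $\bD$ and $\btildeD$ and $v_i\ge\tilde v_i$), write $x^*(\bq)$ for the allocation $M_{\btildeD}$ makes at $(v_1,\dots,v_n)$ and $\hat q_i:=q_{\btildeD_i}(v_i)$. Myerson's revenue-equals-virtual-welfare identity (with the virtual value of the \emph{true} distribution) together with $\virtual_D(v_D(q))=R'_D(q)$ gives
\[
    M_{\btildeD}(\btildeD)=\E_{\bq}\Big[\max_{x\in\allocset}\sum_i x_i\,R'_{\btildeD_i}(q_i)\Big],\qquad M_{\btildeD}(\bD)=\E_{\bq}\Big[\sum_i x^*_i(\bq)\,R'_{\bD_i}(q_i)\Big],
\]
where $x^*(\bq)\in\arg\max_{x\in\allocset}\sum_i x_i\,R'_{\btildeD_i}(\hat q_i)$. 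The hypotheses enter only through $\hat q_i\le q_i$ (since $\bD_i\succeq\btildeD_i$) and $|q_i-\hat q_i|=|\bD_i(v_i)-\btildeD_i(v_i)|$ (the closeness bound at $v_i$).

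Next I would perform a \emph{free upgrade}. Since $R_{\btildeD_i}$ is concave, $R'_{\btildeD_i}$ is non-increasing, so $\hat q_i\le q_i$ gives $R'_{\btildeD_i}(\hat q_i)\ge R'_{\btildeD_i}(q_i)$; with optimality of $x^*(\bq)$ for the coefficients $R'_{\btildeD_i}(\hat q_i)$ and non-negativity of allocations this yields, pointwise in $\bq$, $\sum_i x^*_i(\bq)R'_{\btildeD_i}(\hat q_i)\ge\max_{x\in\allocset}\sum_i x_i R'_{\btildeD_i}(q_i)$, hence $M_{\btildeD}(\btildeD)-M_{\btildeD}(\bD)\le\sum_i\E_{\bq}\big[x^*_i(\bq)\big(R'_{\btildeD_i}(\hat q_i)-R'_{\bD_i}(q_i)\big)\big]$. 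For the $i$-th term I would fix $\bq_{-i}$; truthfulness makes $x^*_i(\cdot,\bq_{-i})$ non-increasing, so write it as a mixture $\int_{[0,1]}\mathbf{1}[q_i\le s]\,dw(s)$ of threshold allocations with $w([0,1])\le 1$ and $\int s\,dw(s)=\E_{q_i}[x^*_i(\bq)]$. For a threshold $s$, set $p=v_{\bD_i}(s)$; a change of variables from $q_i$ to value turns $\int_0^s\big(R'_{\btildeD_i}(\hat q_i)-R'_{\bD_i}(q_i)\big)\,dq_i$ into $\int_{[p,1]}\big(\virtual_{\btildeD_i}(v)-\virtual_{\bD_i}(v)\big)\,d\bD_i(v)$, whose $\virtual_{\bD_i}$-part equals $R_{\bD_i}(s)=sp$. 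Comparing $\int_{[p,1]}\virtual_{\btildeD_i}\,d\bD_i$ with $\int_{[p,1]}\virtual_{\btildeD_i}\,d\btildeD_i=R_{\btildeD_i}(q_{\btildeD_i}(p))=q_{\btildeD_i}(p)\,p\le sp$ by integration by parts, and using that $\virtual_{\btildeD_i}$ is non-decreasing (regularity) and non-negative at $p$, that $\bD_i\le\btildeD_i$, and that $|\bD_i-\btildeD_i|\le\beta_i(s):=\sup_{v\ge p}|\bD_i(v)-\btildeD_i(v)|$ on $[p,1]$, the comparison is at most $\beta_i(s)\,\virtual_{\btildeD_i}(1)\le\beta_i(s)$. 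Therefore $\int_0^s\big(R'_{\btildeD_i}(\hat q_i)-R'_{\bD_i}(q_i)\big)\,dq_i\le\beta_i(s)$ for every $s$.

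For the aggregation I would integrate against $w$. Since $\bD_i(v),\btildeD_i(v)\ge 1-s$ for $v\ge p$, the definition of $\approx_\epsilon$ gives $\beta_i(s)\le\tfrac{\epsilon}{2}\sqrt{s/(nk)}+\epsilon^2/(2nk)$, and with $\int\sqrt s\,dw(s)\le\sqrt{\int s\,dw(s)}$ (Cauchy--Schwarz, $w([0,1])\le1$) the $i$-th term is at most $\tfrac{\epsilon}{2\sqrt{nk}}\sqrt{\E_{\bq}[x^*_i(\bq)]}+\epsilon^2/(2nk)$. Summing over $i$ by Jensen and Cauchy--Schwarz and the rank bound $\sum_i\E_{\bq}[x^*_i(\bq)]=\E_{\bq}\|x^*(\bq)\|_1\le k$, the first parts total at most $\tfrac\epsilon2$ and the second parts at most $\tfrac{\epsilon^2}{2k}\le\tfrac\epsilon2$, giving $M_{\btildeD}(\btildeD)-M_{\btildeD}(\bD)\le\epsilon$. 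For the uniform notion, $\beta_i(s)\le\epsilon/\sqrt{nk}$ for all $s$, so each of the $n$ summands is at most $\epsilon/\sqrt{nk}$ and the bound becomes $\sqrt{n/k}\,\epsilon$.

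The conceptual content is short; the main obstacle is the bookkeeping hidden in the quantile-space identities when point masses are present (which the rounding step creates): that the thresholds carried by $w$ are breakpoints of the ironed curve, so that $R_{\btildeD_i}(q_{\btildeD_i}(p))=q_{\btildeD_i}(p)\,p$ and the ``$v\ge p$'' versus ``$v>p$'' conventions at an atom match the quantile cutoff $s$; that $M_{\btildeD}$'s allocation rule is monotone and depends on the profile only through the ironed $\btildeD$-virtual values even when $\allocset$ is not downward-closed; and that the below-reserve region, where a virtual value may be $-\infty$, contributes nothing because $x^*_i$ vanishes there. The regular-case reduction is what makes $\virtual_{\btildeD_i}$ monotone and lets all the remaining manipulations go through cleanly.
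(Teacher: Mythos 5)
Your proposal is correct and follows essentially the same route as the paper: the ironing-as-rounding regularization, the two-inequality ``free upgrade'' (stochastic dominance on the virtual values plus optimality of the chosen allocation), the threshold decomposition of $x^*_i(\cdot,\bq_{-i})$ (your measure $w$ is exactly the paper's layer-cake over allocation levels $y$ with thresholds $\theta_i(y,\bq_{-i})$), the per-threshold integration-by-parts comparison of $\int\virtual_{\btildeD_i}\,d\bD_i$ against $\int\virtual_{\btildeD_i}\,d\btildeD_i$ (the paper's Lemma~\ref{lem:single-bidder-bound}, proved there with auxiliary truncated CDFs $P_i,\tilde P_i$ but amounting to the same computation), and the Cauchy--Schwarz aggregation against the rank bound $\sum_i\E[x^*_i]\le k$. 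The one point where you, like the paper's Lemma~\ref{lem:single-bidder-bound}, lean on $\virtual_{\btildeD_i}$ being non-negative at the threshold deserves the care you already flag, since for non-downward-closed $\allocset$ the optimal allocation can include bidders with negative (finite) virtual value.
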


To prove Theorem~\ref{thm:approximate-revenue-monotone} we need to analyze $M_{\btildeD}(\bD)$.
Recall that Myerson's optimal auction $M_{\btildeD}$ w.r.t.\ $\btildeD$ chooses an allocation based on the virtual values w.r.t.\ $\btildeD$.
When it allocates to a bidder $i$ who has quantile $q_i$ and thus value $v_{D_i}(q_i)$, it expects a contribution of $\virtual_{\tilde{D}_i}\big(v_{D_i}(q_i)\big)$ to the expected revenue.
The actual contribution, however, depends on the virtual values w.r.t.\ $\bD$, and therefore is $\virtual_{D_i}\big(v_{D_i}(q_i)\big)$.
The next lemma upper bounds how much auction $M_{\btildeD}$ might have overestimated a bidder $i$'s contribution to the expected revenue because of the aforementioned mismatch.
Here recall that if a bidder $i$ gets allocation $1$ at some quantile $\theta_i$ then it also gets this allocation for any quantile less than $\theta_i$, and that $\int_0^{\theta_i} \virtual_{D_i}\big(v_{D_i}(q_i)\big) dq_i = R_{D_i}(\theta_i)$.

\begin{lemma}
    \label{lem:single-bidder-bound}
    Suppose that product value distributions $\bD$ and $\btildeD$ satisfy (1) $\bD \succeq \btildeD$, (2) $\bD \approx_\epsilon \btildeD$, and (3) $\btildeD$ is regular.
    Then for any bidder $i$ and any threshold quantile $0 \le \theta_i \le 1$ such that $\virtual_{\tilde{D}_i}\big(v_{D_i}(\theta_i)\big) \ge 0$, we have:
    \[
        \int_0^{\theta_i} \virtual_{\tilde{D}_i} \big( v_{D_i}(q) \big) dq \: \leq \: R_{D_i}(\theta_i) + \sqrt{\frac{\theta_i \epsilon^2}{4nk}} + \frac{\epsilon^2}{2nk}
        ~.
    \]
    If instead the second condition is replaced with $\bD \uniformclose_\epsilon \btildeD$ then we have a weaker bound:
    \[
        \int_0^{\theta_i} \virtual_{\tilde{D}_i} \big( v_{D_i}(q) \big) dq \: \leq \: R_{D_i}(\theta_i) + \frac{\epsilon}{\sqrt{nk}}
        ~.
    \]
\end{lemma}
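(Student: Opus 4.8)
The plan is to fix one bidder $i$ and reduce the claim to a one-dimensional computation in value space. Write $D=D_i$, $\tilde D=\tilde D_i$, $\theta=\theta_i$, $v^\star=v_{D}(\theta)$, and let $\varphi=\virtual_{\tilde D}$ be the virtual value function of $\tilde D$; regularity of $\tilde D$ makes $\varphi$ nondecreasing in $v$, and concavity of $R_{\tilde D}$ together with $R_{\tilde D}(q)=q\,v_{\tilde D}(q)$ gives $\varphi(v)\le v$ for every $v$. Under the substitution $v=v_{D}(q)$ the quantiles $q\in[0,\theta]$ correspond to the values $v\ge v^\star$, so the left-hand side is the Lebesgue--Stieltjes integral $\int_{(v^\star,1]}\varphi\,dD$. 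Applying the same substitution to the identity $\int_0^{\theta}\virtual_{D}(v_{D}(q))\,dq=R_{D}(\theta)$ recalled just before the lemma gives $\int_{(v^\star,1]}\virtual_{D}\,dD=R_{D}(\theta)=v^\star\theta$ (the revenue of posting price $v^\star$), and applying it to $\tilde D$ gives $\int_{(v^\star,1]}\varphi\,d\tilde D = R_{\tilde D}\!\big(1-\tilde D(v^\star)\big)\le v^\star\big(1-\tilde D(v^\star)\big)$.

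Next I would write $dD=d\tilde D+d(D-\tilde D)$ and integrate the difference term by parts on $(v^\star,1]$. Since $D(1)=\tilde D(1)=1$ the boundary contribution at $1$ vanishes, leaving a boundary term $\varphi(v^\star)\big(\tilde D(v^\star)-D(v^\star)\big)$ at $v^\star$ and the integral $\int_{(v^\star,1]}\big(\tilde D(v)-D(v)\big)\,d\varphi(v)$. Combining with $\int_{(v^\star,1]}\varphi\,d\tilde D\le v^\star(1-\tilde D(v^\star))=R_{D}(\theta)-v^\star\big(\tilde D(v^\star)-D(v^\star)\big)$ and collecting the terms supported at $v^\star$,
\[
  \int_0^{\theta}\varphi\big(v_{D}(q)\big)\,dq \;\le\; R_{D}(\theta) \;+\; \big(\varphi(v^\star)-v^\star\big)\big(\tilde D(v^\star)-D(v^\star)\big) \;+\; \int_{(v^\star,1]}\big(\tilde D(v)-D(v)\big)\,d\varphi(v)\,.
\]
The middle term is $\le 0$, since $\varphi(v^\star)-v^\star\le 0$ by the fact above while $\tilde D(v^\star)-D(v^\star)\ge 0$ because $\bD\succeq\btildeD$ means $D\le\tilde D$ pointwise. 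Obtaining this cancellation exactly (rather than bounding the two $v^\star$-terms separately and losing a factor of two) is the step that pins down the precise constant in the statement.

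It then remains to bound $\int_{(v^\star,1]}(\tilde D-D)\,d\varphi$, and the key observation is that the deviation is small precisely on the relevant range: for $v\ge v^\star$ we have $D(v)\ge D(v^\star)\ge 1-\theta$, hence $D(v)(1-D(v))\le 1-D(v)\le\theta$, and plugging this into the definition of $\bD\approx_\epsilon\btildeD$ gives $\tilde D(v)-D(v)\le\sqrt{\theta\epsilon^2/(4nk)}+\epsilon^2/(2nk)$ uniformly on $[v^\star,1]$. Regularity of $\tilde D$ makes $d\varphi$ a nonnegative measure, so this bound can be pulled out of the integral at the cost of a factor $\varphi(1)-\varphi(v^\star)\le 1$, using $\varphi(1)\le 1$ and the hypothesis $\varphi(v^\star)=\virtual_{\tilde D_i}\!\big(v_{D_i}(\theta_i)\big)\ge 0$. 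This yields the first inequality; for the second I would instead invoke $\tilde D(v)-D(v)\le\epsilon/\sqrt{nk}$ for all $v$, which is exactly what $\bD\uniformclose_\epsilon\btildeD$ provides, and argue identically.

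The main technical nuisance I expect is point masses. If $v^\star$ is an atom of $D$, then $\theta$ lies inside its quantile interval and the change of variables picks up an extra term $\big(\theta-(1-D(v^\star))\big)\varphi(v^\star)$; and if $v^\star$ (or some value in $(v^\star,1]$) is an atom of $\tilde D$, then $\varphi$ and $D-\tilde D$ can jump simultaneously, so the integration by parts must be carried out with the revenue-curve conventions of Section~\ref{sec:prelim}. Fortunately the extra atom term combines with the slack from $R_{D}(\theta)=v^\star\theta$ into $\big(\theta-(1-D(v^\star))\big)\big(\varphi(v^\star)-v^\star\big)\le 0$, so atoms only help the bound; alternatively one can prove the statement for continuous distributions and conclude by a limiting argument.
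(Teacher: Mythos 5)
Your proof is correct and follows essentially the same route as the paper's: both arguments integrate by parts so that the error becomes $\int (\tilde D_i - D_i)\,d\varphi_{\tilde D_i}$ over the high-value range, bound the CDF gap there uniformly by $\sqrt{\theta_i\epsilon^2/(4nk)}+\epsilon^2/(2nk)$ using $1-D_i(v)\le\theta_i$ for $v\ge v_{D_i}(\theta_i)$, and use that the total variation of the (nondecreasing) virtual value on that range is at most $1$ via $0\le\varphi_{\tilde D_i}(v_{D_i}(\theta_i))$ and $\varphi_{\tilde D_i}(1)\le 1$. The only real difference is bookkeeping: the paper introduces the rounded distributions $P_i,\tilde P_i$ and a monopoly-price case analysis to bound the main term by $R_{D_i}(\theta_i)$, whereas you obtain the same constant from the exact cancellation of the boundary term at $v^\star$ against the slack in $R_{\tilde D_i}(q^\star)\le v^\star q^\star$, together with $\varphi_{\tilde D_i}(v^\star)\le v^\star$.
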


\begin{proof}
    Define auxiliary distributions $P_i$ and $\tilde{P}_i$ such that for any value $0 \le v_i \le 1$:
    \[
        P_i(v_i) = \max \Big\{ D_i(v_i), 1-\theta_i \Big\}
        ~,\quad
        \tilde{P}_i(v_i) = \max \Big\{ \tilde{D}_i(v_i), 1-\theta_i \Big\}
    \]
    In other words, let them be the same as $D_i$ and $\tilde{D}_i$ in quantile interval $[0, \theta_i)$, which is the focal point of the lemma, but round the value $v_i$ down to $0$ if its quantile is in interval $[\theta_i, 1]$.

    We first prove that for any $0 \le v_i \le 1$:
    \begin{equation}
        \label{eqn:single-bidder-bound-cdf}
        \tilde{P}_i(v_i) - P_i(v_i) \le
        \begin{cases}
            \sqrt{\frac{\theta_i \epsilon^2}{4nk}} + \frac{\epsilon^2}{2nk} & \mbox{if $\bD \approx_\epsilon \btildeD$;} \\[1ex]
            ~ \frac{\epsilon}{\sqrt{nk}} & \mbox{if $\bD \uniformclose_\epsilon \btildeD$.}
        \end{cases}
    \end{equation}

    By $\bD \succeq \btildeD$, we have $\tilde{D}_i(v_i) \ge D_i(v_i)$.
    If further $\tilde{D}_i(v_i) \le 1 - \theta_i$, then the left-hand-side of Eqn.~\eqref{eqn:single-bidder-bound-cdf} equals zero because $\tilde{P}_i(v_i) = P_i(v_i) = 1 - \theta_i$.
    Next suppose that $\tilde{D}_i(v_i) > 1 - \theta_i$ and thus $\tilde{P}_i(v_i) = \tilde{D}_i(v_i)$.
    We have:
    \begin{align*}
        \tilde{P}_i(v_i) - P_i(v_i)
        &
        \le \tilde{D}_i(v_i) - D_i(v_i)
        \tag{$\tilde{P}_i(v_i) = \tilde{D}_i(v_i), P_i(v_i) \ge D_i(v_i)$}
        \\
        &
        \le \sqrt{\frac{(1 - \tilde{D}_i(v_i)) \epsilon^2}{4nk}} + \frac{\epsilon^2}{2nk}
        \tag{$\bD \approx_\epsilon \btildeD$} \\
        &
        \le \sqrt{\frac{\theta_i \epsilon^2}{4nk}} + \frac{\epsilon^2}{2nk}
        ~.
    \end{align*}

    If instead we have $\bD \uniformclose_\epsilon \btildeD$ then the right-hand-side is $\frac{\epsilon}{\sqrt{nk}}$ after the second inequality.

    Further define an auxiliary $\tilde{\virtual}_i$ such that for any value $0 \le v_i \le 1$:
    \[
        \tilde{\virtual}_i(v_i) = \max \big\{ \virtual_{\tilde{D}_i}(v_i), 0 \big\}
        ~.
    \]

    Since the lemma assumes that $\virtual_{\tilde{D}_i}\big(v_{D_i}(\theta_i)\big) \ge 0$, we have $\tilde{\virtual}_i\big(v_{D_i}(q)\big) = \virtual_{\tilde{D}_i} \big(v_{D_i}(q)\big)$ for any $0 \le q \le \theta_i$.
    %\yan{Is this a typo?}
    Further recall that values with quantiles larger than $\theta_i$ in $D_i$ are rounded down to $0$ in $P_i$.
    Hence, the left-hand-side of the lemma is equal to:
    \begin{align*}
        \int_0^{\theta_i} \tilde{\virtual}_i \big( v_{D_i}(q) \big) dq
        &
        = \int_0^1 \tilde{\virtual}_i(v_i) d P_i(v_i)
        \\
        &
        = \tilde{\virtual}_i(1) - \int_0^1 P_i(v_i) d \tilde{\virtual}_i(v_i)
        ~.
        \tag{Integration by parts}
    \end{align*}
    
    Since $\tilde{\virtual}_i(v_i)$ is nondecreasing and is between $0$ and $1$, the integral above may be viewed as the expected value of $P_i(v_i)$ when $v_i$ is drawn from a distribution with CDF $\tilde{\virtual}_i(v_i)$.
    Changing it to the expected value of $\tilde{P}_i(v_i)$ w.r.t.\ the same distribution would lead to an additive difference of at most $\max_{0 \le v_i \le 1} \big( \tilde{P}_i(v_i) - P_i(v_i) \big)$, i.e.:
    \begin{align*}
        \int_0^{\theta_i} \tilde{\virtual}_i \big( v_{D_i}(q) \big) dq
        &
        \le \tilde{\virtual}_i(1) - \int_0^1 \tilde{P}_i(v_i) d \tilde{\virtual}_i(v_i) + \max_{0 \le v_i \le 1} \big( \tilde{P}_i(v_i) - P_i(v_i) \big)
        \\
        &
        \le \tilde{\virtual}_i(1) - \int_0^1 \tilde{P}_i(v_i) d \tilde{\virtual}_i(v_i) + \sqrt{\frac{\theta_i \epsilon^2}{4nk}} + \frac{\epsilon^2}{2nk}
        \tag{Eqn.~\eqref{eqn:single-bidder-bound-cdf}}
        \\
        &
        = \int_0^1 \tilde{\virtual}_i(v_i) d \tilde{P}_i(v_i) + \sqrt{\frac{\theta_i \epsilon^2}{4nk}} + \frac{\epsilon^2}{2nk}
        \tag{Integration by parts}
        ~.
    \end{align*}
    
    If instead we have $\bD \uniformclose_\epsilon \btildeD$ then the last two terms on the right would be replaced with $\frac{\epsilon}{\sqrt{nk}}$ after the second inequality.

    %\zh{Explain the $\max$ step.}

    Finally, substituting $\tilde{\virtual}_i$ by its definition:
    \[
        \int_0^1 \tilde{\virtual}_i(v_i) d \tilde{P}_i(v_i) = \int_0^{\theta_i} \max \big\{ \virtual_{\tilde{D}_i}\big(v_{\tilde{D}_i}(q)\big), 0 \big\} dq 
        ~.
    \]

    It remains to prove that the right-hand-side above is at most $R_{D_i}(\theta_i)$.
    If we have $\virtual_{\tilde{D}_i}\big(v_{\tilde{D}_i}(\theta_i)\big) \ge 0$, the above is simply $R_{\tilde{D}_i}(\theta_i)$, which is at most $R_{D_i}(\theta_i)$ because $\bD \succeq \btildeD$.

    Otherwise, let $v_i^*$ be the monopoly price w.r.t.\ $\tilde{D}_i$, i.e., $v_i^* \in \arg\max_p\, p \cdot \Pr_{v \sim \tilde{D}_i} [v \ge p]$.
    Further let $q_i^* = \Pr_{u \sim \tilde{D}_i} [u \ge v_i^*]$.
    Recall that $\tilde{D}_i$ is regular.
    The right-hand-side above equals $v_i^* q_i^*$ because it integrates the derivative of a concave revenue curve $R_{\tilde{D}_i}(q)$ past its peak, rounding negative derivatives up to $0$.
    On the one hand, by $\virtual_{\tilde{D}_i}(v_{\tilde{D}_i}(\theta_i)) < 0$ we have that $q_i^* \le \theta_i$.
    On the other hand, by the lemma assumption that $\virtual_{\tilde{D}_i}(v_{D_i}(\theta_i)) \ge 0$, we get that $v_i^* \le v_{D_i}(\theta_i)$.
    Putting together, $q_i^* v_i^*$ is at most $\theta_i v_{D_i}(\theta_i) \le R_{D_i}(\theta_i)$ (it holds with equality if $v_{D_i}(\theta_i)$ is not a point mass).
\end{proof}

Theorem~\ref{thm:approximate-revenue-monotone} now follows by applying Lemma~\ref{lem:single-bidder-bound} to all bidders and by using the Cauchy-Schwarz Inequality to bound the squared-roots of the threshold quantiles $\theta_i$'s, as we shall explain next.

\begin{proof}[Proof of Theorem~\ref{thm:approximate-revenue-monotone}]
    Let $\btildex$ denote the allocation rule of $\myerson_{\btildeD}$, Myerson's optimal auction w.r.t.\ $\btildeD$.
    By definition $\btildex$ allocates according to the bidders' ironed virtual values.
    Further recall that we may interpret ironing as rounding each bidder $i$'s value down to the closest value on the convex hull of revenue curve $R_{\tilde{D}_i}$, as explained in the Section~\ref{sec:prelim}.
    Hence, $\btildex$ effectively allocates according to the virtual values w.r.t.\ the distributions of the rounded values.
    By redefining both $\bD$ and $\btildeD$ w.r.t.\ the rounded values, we may assume WLOG that $\btildeD$ is regular and the corresponding virtual values $\virtual_{\tilde{D}_i}$'s are nondecreasing and coincide with the ironed virtual values.
    
    For example, suppose that for some bidder $i$, $\tilde{D}_i$ has a point mass of $\nicefrac{1}{5}$ at value $\nicefrac{1}{2}$ and is otherwise uniform over $[0, 1]$, as in the example from Section~\ref{sec:prelim}, and $D_i$ is a uniform distribution over $[\nicefrac{1}{2}, 1]$.
    Then, we may instead consider a regular value distribution $\tilde{D}'_i$ with a point mass of $\nicefrac{\sqrt{3}}{5}$ at value $\nicefrac{1}{2}$ and has probability density $\nicefrac{4}{5}$ in value intervals $[0, \nicefrac{1}{2})$ and $(\nicefrac{(1+\sqrt{3})}{4}, 1]$.
    Further consider $D'_i$ with a point mass of $\nicefrac{(\sqrt{3}-1)}{2}$ at value $\nicefrac{1}{2}$, and has density $2$ in value interval $(\nicefrac{(1+\sqrt{3})}{4}, 1]$.
    Myerson's optimal auctions w.r.t.\ $\tilde{D}_i$ and $\tilde{D}'_i$ are the same.
    Further, its allocations and payments when applied to $D_i$ and $D'_i$ are identical for any given quantile of bidder $i$.
    
    %\zh{Elaborate the rounding viewpoint here and in preliminary}

    Next consider allocation $\btildex$ when bidders' values are drawn from $\bD$.
    For any bidder $i$, any quantile profile $\bq_{-i}$ of the other bidders, and any allocation level $0 \le y \le 1$, define $\theta_i(y, \bq_{-i})$ as the largest quantile below which bidder $i$ gets allocation at least $y$ when the other bidders' values are $\bv_{\bD_{-i}}(\bq_{-i})$.
    Formally:
    \[
        \theta_i(y, \bq_{-i}) = \sup \Big\{ 0 \le q_i \le 1 : \tilde{x}_i \big( v_{D_i}(q_i), \bv_{\bD_{-i}}(\bq_{-i}) \big) \ge y \Big\}
        ~.
    \]

    On the one hand, we write the expected revenue on the left-hand-side of the theorem as:
    \begin{align}
        M_{\btildeD}(\bD)
        &
        = \sum_{i=1}^n \int_{[0,1]^n} \tilde{x}_i\big(\bv_{\bD}(\bq)\big) \virtual_{D_i}\big(v_{D_i}(q_i)\big) d\bq
        \notag \\
        &
        = \sum_{i=1}^n \int_{[0,1]^{n-1}} \int_0^1 \int_0^{\theta_i(y,\bq_{-i})} \virtual_{D_i}\big(v_{D_i}(q_i)\big) dq_i dy d\bq_{-i} \label{eqn:approx-rev-monotone-integration-by-part} \\
        &
        = \sum_{i=1}^n \int_{[0,1]^{n-1}} \int_0^1 R_{D_i}\big(\theta_i(y,\bq_{-i})\big) dy d\bq_{-i} 
        ~.
        \notag
    \end{align}

    Here, Eqn.~\eqref{eqn:approx-rev-monotone-integration-by-part} holds because the previous step may be interpreted as integrating $\virtual_{D_i}\big(v_{D_i}(q_i)\big)$ over the area below the curve of $\tilde{x}_i\big(x_{D_i}(q_i), x_{\bD_{-i}}(\bq_{-i})\big)$ over the quantile space of $q_i$, while its right-hand-side may be viewed as integrating over the area on the left of curve $\theta_i(y, \bq_{-i})$ which is the inverse of $\tilde{x}_i$.
    See Figure~\ref{fig:integration-by-part-illustration} for an illustration.

    \begin{figure}[t]
        \centering
        \begin{tikzpicture}[
                axis/.style={very thick, ->},
                curve/.style={thick},
                point/.style={fill=black,minimum size=0.02cm,circle},
                txt/.style={fill=none,draw=none},
                scale=.85,
                every node/.style={scale=.85},
            ]
            % axes
            \begin{scope}
                \draw[axis] (0,0) -- (5.5,0); 
                \draw[axis] (0,0) -- (0,4.5); 
                \draw[thick,dotted] (5,0) -- (5,4);
                \draw[thick,dotted] (0,4) -- (5,4); 
                \filldraw[black] (0,0) circle (2pt) node[below]{$0$};
                \filldraw[black] (5,0) circle (2pt) node[below]{$1$};
                \filldraw[black] (0,4) circle (2pt) node[left]{$1$};
                \node[txt]  at (2.6,-0.4) {$q_i$};
                \node[txt]  at (-1, 2)  {$\tilde{x}_i\big(v_{\bD}(\bq)\big)$};
                \draw[style=curve,domain=0:5] plot ({\x}, {-(\x-2)*(\x-2)*(\x-2)/10+3});
                \foreach \x in {1,...,9}
                    \draw[dashed,<->,semithick] ({\x/2},0)--({\x/2},{-(\x/2-2)*(\x/2-2)*(\x/2-2)/10+3});
            \end{scope}
            \begin{scope}[xshift=8cm]
                \draw[axis] (0,0) -- (5.5,0); 
                \draw[axis] (0,0) -- (0,4.5); 
                \draw[thick,dotted] (5,0) -- (5,4);
                \draw[thick,dotted] (0,4) -- (5,4); 
                \filldraw[black] (0,0) circle (2pt) node[below]{$0$};
                \filldraw[black] (5,0) circle (2pt) node[below]{$1$};
                \filldraw[black] (0,4) circle (2pt) node[left]{$1$};
                \node[txt]  at (2.6,-0.4)  {$\theta_i(y, \bq_{-i})$};
                \node[txt]  at (-.4, 2)   {$y$};
                \draw[style=curve,domain=0:5] plot ({\x}, {-(\x-2)*(\x-2)*(\x-2)/10+3});
                \foreach \y in {1,...,7}
                    \draw[dashed,<->,semithick] (0,{\y/2})--({2+sign(3-\y/2)*pow(abs(3-\y/2)*10,1/3)},{\y/2});
            \end{scope}
            \node[fill=none,draw=none] at (6.5,2) {\LARGE $\Leftrightarrow$};
        \end{tikzpicture}
        \caption{Illustrative picture for Equation~\eqref{eqn:approx-rev-monotone-integration-by-part}}
        \label{fig:integration-by-part-illustration}
    \end{figure}
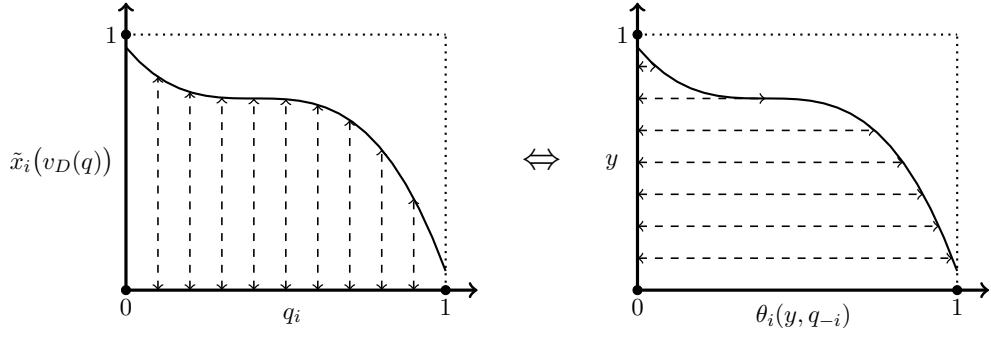

    On the other hand, we bound the expected revenue on right-hand-side by:
    \begin{align*}
        M_{\btildeD}(\btildeD)
        &
        = \int_{[0,1]^n} \sum_{i=1}^n \tilde{x}_i\big( \bv_{\btildeD}(\bq) \big) \virtual_{\tilde{D}_i}\big(v_{\tilde{D}_i}(q_i)\big) d\bq
        \\
        &
        \le \int_{[0,1]^n} \sum_{i=1}^n \tilde{x}_i\big( \bv_{\btildeD}(\bq) \big) \virtual_{\tilde{D}_i}\big(v_{D_i}(q_i)\big) d\bq
        \tag{$\bD \succeq \btildeD$} \\
        &
        \le \int_{[0,1]^n} \sum_{i=1}^n \tilde{x}_i\big( \bv_{\bD}(\bq) \big) \virtual_{\tilde{D}_i}\big(v_{D_i}(q_i)\big) d\bq
        \tag{$\btildex$ maximizes virtual welfare}
        \\
        &
        = \sum_{i=1}^n \int_{[0,1]^{n-1}} \int_0^1 \int_0^{\theta_i(y,\bq_{-i})} \virtual_{\tilde{D}_i}\big(v_{D_i}(q_i)\big) dq_i dy d\bq_{-i}
        ~.
        \tag{Same as Eqn.~\eqref{eqn:approx-rev-monotone-integration-by-part}}
    \end{align*}

    By Lemma~\ref{lem:single-bidder-bound}:
    \[
        \int_0^{\theta_i(y,\bq_{-i})} \virtual_{\tilde{D}_i}\big(v_{D_i}(q_i)\big) dq_i - R_{D_i}\big(\theta_i(y,\bq_{-i})\big) \le
        \begin{cases}
            \sqrt{\frac{\theta_i(y,\bq_{-i}) \epsilon^2}{4nk}} + \frac{\epsilon^2}{2nk} & \mbox{if $\bD \approx_\epsilon \btildeD$;} \\[1ex]
            \frac{\epsilon}{\sqrt{nk}} & \mbox{if $\bD \uniformclose_\epsilon \btildeD$.}
        \end{cases}
    \]

    Therefore, summing over $i$ and integrating over $q_{-i}$ and $y$ we have:
    \[
        M_{\btildeD}(\btildeD) - M_{\btildeD}(\bD)
        ~\le~ 
        \begin{cases}
            \sum_{i=1}^n \int_{[0,1]^{n-1}} \int_0^1 \Big( \sqrt{\frac{\theta_i(y,\bq_{-i}) \epsilon^2}{4nk}} + \frac{\epsilon^2}{2nk} \Big) dy d\bq_{-i} & \mbox{if $\bD \approx_\epsilon \btildeD$;} \\[1ex]
            \sqrt{\frac{n}{k}} \cdot \epsilon & \mbox{if $\bD \uniformclose_\epsilon \btildeD$.}
        \end{cases}
    \]

    It remains to finish proving the $D \approx_\epsilon \btildeD$ case. By Cauchy-Schwarz:
    \begin{align*}
        \sum_{i=1}^n \int_{[0,1]^{n-1}} \int_0^1 \sqrt{\theta_i(y,\bq_{-i})} dy d\bq_{-i}
        &
        \le \bigg(n \sum_{i=1}^n \int_{[0,1]^{n-1}} \int_0^1 \theta_i(y,\bq_{-i}) dy d\bq_{-i}\bigg)^{\frac{1}{2}}
        \\
        &
        = \bigg(n \sum_{i=1}^n \int_{[0,1]^n} \tilde{x}_i\big(\bv_{\bD}(\bq)\big) d\bq\bigg)^{\frac{1}{2}}
        \tag{Same as Eqn.~\eqref{eqn:approx-rev-monotone-integration-by-part}}
        \\[1ex]
        &
        \le \sqrt{nk} 
        ~.
        \tag{Rank $k$}
    \end{align*}

    Further:
    \[
        \sum_{i=1}^n \int_{[0,1]^{n-1}} \int_0^1 \frac{\epsilon^2}{2nk} dy d\bq_{-i} = \frac{\epsilon^2}{2k} \le \frac{\epsilon}{2}
        ~.
    \]

    Combining the bounds proves the theorem.
\end{proof}

\subsection{Strong Revenue Lipschitzness}

We remark that it is possible to remove the stochastic dominance condition from Theorem~\ref{thm:approximate-revenue-monotone} but we would need the two distributions to be closer than in that theorem.
We refer to this property as \emph{strong revenue Lipschitzness} of single-parameter auctions;
the next two theorems make it precise.

\begin{theorem}
    \label{thm:revenue-lipschitz}
    For some sufficiently small constant $c > 0$, and for any $0 < \epsilon \le 1$ if product value distributions $\bD$ and $\btildeD$ satisfy $\bD \approx_{\epsilon'} \btildeD$ for:
    \[
        \epsilon' = c \cdot \frac{\epsilon}{\sqrt{k \log \frac{nk}{\epsilon}}}
        ~,
    \]
    then we have:
    \[
        M_{\btildeD}(\bD) \ge M_{\btildeD}(\btildeD) - \epsilon
        ~.
    \]
\end{theorem}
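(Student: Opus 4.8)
The plan is to reduce Theorem~\ref{thm:revenue-lipschitz} to Theorem~\ref{thm:approximate-revenue-monotone} by inserting a product distribution that stochastically dominates both $\bD$ and $\btildeD$, and then to pay a modest extra price for the loss of the dominance hypothesis. First, exactly as in the proof of Theorem~\ref{thm:approximate-revenue-monotone}, I would interpret ironing as rounding values and redefine $\bD$ and $\btildeD$ with respect to the rounded values; since both are rounded by the same, $\btildeD$-determined map, the auction $M_{\btildeD}$ and its revenue $M_{\btildeD}(\bD)$ are unchanged and the hypothesis $\bD \approx_{\epsilon'} \btildeD$ is preserved, so WLOG $\btildeD$ is regular. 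Then let $\bhatD$ be the product distribution with marginal CDFs $\hat D_i(v) = \min\{D_i(v), \tilde D_i(v)\}$. Each $\hat D_i$ is a genuine CDF (a pointwise minimum of two nondecreasing $[0,1]$-valued functions), $\bhatD \succeq \bD$ and $\bhatD \succeq \btildeD$, and — because at every value $\hat D_i$ either coincides with $\tilde D_i$ or equals $D_i$, so that the relevant variance term in the definition of $\approx_\epsilon$ only improves — one checks directly that $\bhatD \approx_{\epsilon'} \btildeD$ and $\bhatD \approx_{\epsilon'} \bD$.

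The easy half is to lower bound $M_{\btildeD}(\bhatD)$. Since $\bhatD \succeq \btildeD$, $\bhatD \approx_{\epsilon'} \btildeD$, and $\btildeD$ is regular, Theorem~\ref{thm:approximate-revenue-monotone} (equivalently, a direct application of Lemma~\ref{lem:single-bidder-bound} over all bidders) gives $M_{\btildeD}(\bhatD) \ge M_{\btildeD}(\btildeD) - \epsilon'$. As $\epsilon' = c\,\epsilon/\sqrt{k\log(nk/\epsilon)} \le \epsilon/2$ for a small enough constant $c$, this already yields $M_{\btildeD}(\bhatD) \ge M_{\btildeD}(\btildeD) - \epsilon/2$; note that neither the logarithmic factor nor the factor $\sqrt{k}$ in $\epsilon'$ is used here — this is the slack that the second half will consume.

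The crux is to show $M_{\btildeD}(\bhatD) - M_{\btildeD}(\bD) \le \epsilon/2$, that is, to bound the revenue the \emph{fixed} auction $M_{\btildeD}$ can \emph{lose} when the true value distribution shrinks from $\bhatD$ to $\bD$. Writing the revenue as $M_{\btildeD}(\bP) = \sum_i \int_{[0,1]^n} \tilde{x}_i(\bv_{\bP}(\bq))\,\virtual_{P_i}(v_{P_i}(q_i))\, d\bq$ and coupling $\bD$ and $\bhatD$ through the quantiles (so $\hat v_i = v_{\hat D_i}(q_i) \ge v_{D_i}(q_i) = v_i$), the plan is to exploit that $\bhatD \succeq \bD$ together with the one-dimensional Wasserstein identity $\int_0^1 |v_{\hat D_i}(q) - v_{D_i}(q)|\, dq = \int_0^1 |\hat D_i(v) - D_i(v)|\, dv = O(\epsilon'/\sqrt{nk})$, so that in the coupling $\hat v_i$ and $v_i$ agree except on a quantile set that is small in an averaged sense. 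The contribution of each bidder $i$'s own payment to the loss is nonnegative and can be controlled cleanly by integration by parts against the monotone per-bidder payment curve; the genuinely new quantity — which the virtual-welfare-maximizing property of $M_{\btildeD}$ made vanish automatically in the proof of Theorem~\ref{thm:approximate-revenue-monotone} — is the change in the \emph{other} bidders' payments, $\sum_{i\ne j}\bigl(\text{pay}_i(\hat v_j,\cdot) - \text{pay}_i(v_j,\cdot)\bigr)$, induced when bidder $j$'s value rises. I expect controlling these cross-bidder payment externalities to be the main obstacle.

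To bound them I would discretize the quantile/value space into $\mathrm{poly}(nk/\epsilon)$ cells and argue, via a Bernstein-type concentration and a union bound over the cells (in the spirit of the sample-complexity analysis and of a ``a bounded-complexity auction cannot detect CDF-close distributions'' argument), that on each fixed realization of the other bidders $M_{\btildeD}$ cannot convert the $O(\epsilon'/\sqrt{nk})$-sized quantile discrepancy into more than an $\widetilde O\bigl(\sqrt{k\log(nk/\epsilon)}\cdot \epsilon'/\sqrt{nk}\bigr)$ per-bidder loss, weighted by the very threshold quantile $\theta_i$ that appears in Theorem~\ref{thm:approximate-revenue-monotone}: the $\sqrt{\log(nk/\epsilon)}$ is precisely the price of the union bound, and the $\sqrt{k}$ reflects that only $k$ bidders are ever simultaneously allocated. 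Summing over $i$ and applying Cauchy--Schwarz to the $\sqrt{\theta_i}$ terms, using $\sum_i \int \tilde{x}_i(\bv_{\bD}(\bq))\, d\bq \le k$ exactly as in the proof of Theorem~\ref{thm:approximate-revenue-monotone}, turns this into a total loss of $\widetilde O\bigl(\sqrt{k\log(nk/\epsilon)}\cdot \epsilon'\bigr) = O(\epsilon)$, which is at most $\epsilon/2$ for a small enough $c$. Combining the two halves gives $M_{\btildeD}(\bD) \ge M_{\btildeD}(\bhatD) - \epsilon/2 \ge M_{\btildeD}(\btildeD) - \epsilon$.
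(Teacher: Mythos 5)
Your first half is sound and in fact matches the paper's strategy: the paper also inserts an auxiliary product distribution $\bhatD$ that stochastically dominates $\btildeD$, is $O(\epsilon')$-close to both $\bD$ and $\btildeD$, and then invokes Theorem~\ref{thm:approximate-revenue-monotone} to get $M_{\btildeD}(\bhatD) \ge M_{\btildeD}(\btildeD) - \epsilon/2$. Your construction $\hat D_i = \min\{D_i,\tilde D_i\}$ is a perfectly good (arguably cleaner) way to build such a $\bhatD$, and your verification that the closeness relation is preserved is correct. You are also right that this step spends none of the $\sqrt{k\log(nk/\epsilon)}$ slack in $\epsilon'$.

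The second half is where the proof actually lives, and there you have a genuine gap. You correctly identify that the remaining task is to bound $M_{\btildeD}(\bhatD) - M_{\btildeD}(\bD)$ for the \emph{fixed} auction $M_{\btildeD}$, and you correctly sense that the obstacle is the effect of one bidder's value change on \emph{other} bidders' allocations and payments --- which is exactly the effect that Section~\ref{sec:non-monotone} shows can be catastrophic, so stochastic dominance alone cannot save you here. But your proposed tools do not apply: the one-dimensional Wasserstein bound $\int|v_{\hat D_i}(q)-v_{D_i}(q)|\,dq = O(\epsilon'/\sqrt{nk})$ gives no control over a fixed auction's revenue, because payments are threshold functions and are not Lipschitz in values; and the ``Bernstein-type concentration plus union bound over cells'' has nothing to concentrate --- there are no samples in this statement, only two fixed close distributions. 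You explicitly flag the cross-bidder externalities as ``the main obstacle'' and then do not resolve it. The paper closes this step with an information-theoretic black box (Lemmas 1 and 11 of \citet{GuoHTZ:COLT:2021}): variance-weighted CDF-closeness at scale $\epsilon'$ implies a per-coordinate squared Hellinger distance of $O\big((\epsilon')^2\log(\cdot)/(nk)\big)$ --- this is where the $\log\frac{nk}{\epsilon}$ in $\epsilon'$ is spent --- hence, by subadditivity over the $n$ coordinates and the bound $\TotalVar \le \sqrt{2}\,\Hellinger$, a total variation distance of $O(\epsilon/k)$ between the product distributions; since $\frac{1}{k}M_{\btildeD}$ is a $[0,1]$-valued function of the value profile, its expectations under $\bD$ and $\bhatD$ differ by at most $\epsilon/(2k)$, i.e., the revenues differ by at most $\epsilon/2$. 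Without this (or an equivalent) two-sided, distribution-free argument, your proof does not go through.
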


\begin{proof}
    Define an auxiliary product value distribution $\bhatD$ such that for any bidder $i$ and any value $0 \le v_i \le 1$:
    \begin{align*}
        \hat{D}_i(v_i)
        &
        = D_i(v_i) + \sqrt{\frac{D_i(v_i) (1-D_i(v_i)) (\epsilon')^2}{4nk}} + \frac{2(\epsilon')^2}{nk} \\
        &
        = D_i(v_i) + \sqrt{\frac{D_i(v_i) (1-D_i(v_i)) c^2 \epsilon^2}{4nk^2 \log \frac{nk}{\epsilon}}} + \frac{2c^2 \epsilon^2}{nk^2 \log \frac{nk}{\epsilon}} 
    \end{align*}

    By this construction we have $\bhatD \approx_{O(\epsilon')} \btildeD$ which for a sufficiently small $c$ implies:
    \[
        \bhatD \approx_{\frac{\epsilon}{2}} \btildeD
        ~.
    \]    

    By $\bD \approx_{\epsilon'} \bhatD$ and by a calculation similar to Lemma~\ref{lem:dominated-empirical-cdf-bound}, we also get that:
    \[
        \bhatD \succeq \btildeD
        ~.
    \]

    Therefore by Theorem~\ref{thm:approximate-revenue-monotone} we have:
    \[
        M_{\btildeD}(\bhatD) \ge M_{\btildeD}(\btildeD) - \frac{\epsilon}{2}
    \]

    \begin{lemma}[c.f., Lemmas 1 and 11 of \citet{GuoHTZ:COLT:2021}]
        For $\bD$ and $\bhatD$ constructed above, and for any function $f : [0,1]^n \to [0,1]$ we have:
        \[
            \Big\vert ~ f(\bD) - f(\bhatD) ~ \Big\vert \le \frac{\epsilon}{2k}
            ~.
        \]        
    \end{lemma}

    Letting $f = \frac{1}{k} M_{\btildeD}$ gives $M_{\btildeD}(\bD) \ge M_{\btildeD}(\bhatD)  - \frac{\epsilon}{2}$.
    Combining it with the previous inequality proves the theorem.
\end{proof}

%\zh{This proof is a bit handwavy. Rework it if have time.}

%\zh{TODO: Add example that shows $\epsilon$-closeness only implies $\epsilon \sqrt{n}$ bound on revenue difference.}

\begin{theorem}
    \label{thm:revenue-lipschitz-lower-bound}
    For any $n \ge 1$, any $k \le n$, and any $0 \le \epsilon \le 1$, there is a single-parameter auction with $n$ bidders and rank $k$, and two product value distribution $\bD \approx_{\epsilon} \btildeD$ such that:
    \[
        \OPT(\bD) \le \OPT(\btildeD) - \frac{\epsilon \sqrt{k}}{8}
        ~,
    \]
    and as a corollary:
    \[
        M_{\btildeD}(\bD) \le M_{\btildeD}(\btildeD) - \frac{\epsilon \sqrt{k}}{8}
        ~.
    \]
\end{theorem}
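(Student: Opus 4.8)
\emph{Proof plan.}

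The plan is to exhibit the drop inside a \emph{bundling}-style auction, as far from a matroid as possible. Take the feasible set $\allocset=\{\,t\cdot\tfrac{k}{n}\mathbf{1}_n : t\in[0,1]\,\}$: either every bidder receives the amount $\tfrac{k}{n}$, or nobody is served. This is a valid unit-demand single-parameter auction of rank exactly $k$. Since $\allocset$ is a segment, Myerson's optimal auction serves everyone precisely when the bidders' total ironed virtual value is positive, and (after checking that this pointwise-optimal rule is monotone) its revenue equals $\OPT(\bD)=\tfrac{k}{n}\int_{[0,1]^n}\big(\sum_{i=1}^n \bar{\virtual}_{D_i}(v_{D_i}(q_i))\big)^{+}d\bq$.

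Into this auction I would feed a one-parameter family $\{D^\gamma\}_{\gamma>0}$ of \emph{regular} value distributions (with a fixed constant $\beta<\tfrac{\gamma}{\gamma+1}$) for which a draw from $D^\gamma$ has virtual value $+1$ with probability $\beta$, value $0$ with probability $1-\beta-\tfrac{\beta}{\gamma}$, and the \emph{large negative} value $-\gamma$ with probability $\tfrac{\beta}{\gamma}$ --- concretely, $D^\gamma$ has a point mass $\beta$ at value $1$, an equal-revenue segment, and a low-value tail with CDF $\tfrac{v}{v+\gamma}$, so its revenue curve is the concave trapezoid $\min\{q,\beta,\gamma(1-q)\}$. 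With the slope chosen large, $\gamma=\Theta(n)$, a \emph{single} bidder of virtual value $-\gamma$ already forces $\sum_i\bar{\virtual}_i<0$ (the positive terms total less than $n\le\gamma$), so the formula above collapses to $\OPT=k\beta(1-\tfrac{\beta}{\gamma})^{n-1}$; verifying that the rare events with two or more such bidders contribute nothing (or at most an exponentially small amount in $n$) is the first thing to check.

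Now set $\btildeD$ to be $n$ i.i.d.\ copies of $D^{cn}$ and $\bD$ to be $n$ i.i.d.\ copies of $D^{cn-\Delta\gamma}$, for a suitable absolute constant $c\ge 1$ and a shift $\Delta\gamma>0$ pushed as far as $\approx_\epsilon$-closeness permits. The two distributions agree except in the low-value tail, where the CDF has value $\Theta(\beta/n)$, so the $\approx_\epsilon$ bound there is only $\Theta(\epsilon/(n\sqrt{k}))$; since changing $\gamma$ by $\Delta\gamma$ perturbs $\tfrac{v}{v+\gamma}$ by $O(\Delta\gamma/n^2)$ (worst at the value where the tail meets the rest of the support), closeness is kept with $\Delta\gamma=\Theta(\epsilon n/\sqrt{k})$. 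The closed form then gives
\[
    \OPT(\btildeD)-\OPT(\bD)=k\beta\big[(1-\tfrac{\beta}{cn})^{n-1}-(1-\tfrac{\beta}{cn-\Delta\gamma})^{n-1}\big]-(\text{negligible})=\Theta\!\big(\tfrac{k\,\Delta\gamma}{n}\big)=\Theta(\epsilon\sqrt{k}),
\]
and tuning $c$ and $\beta$ pushes the constant to at least $\tfrac18$. For the finitely many very small $n$ (where the bundling auction degenerates) or $k$ (where $\tfrac{\epsilon\sqrt k}{8}=O(\epsilon)$ anyway), a single-item auction with $n$ i.i.d.\ Bernoulli$(\Theta(1/n))$ bidders gives an $\Omega(\epsilon)$ drop by an analogous but simpler computation. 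The corollary is immediate, since $M_{\btildeD}(\bD)\le\OPT(\bD)$ while $M_{\btildeD}(\btildeD)=\OPT(\btildeD)$.

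I expect the main obstacle to be the verification of $\bD\approx_\epsilon\btildeD$, which has to hold at \emph{every} value $v$ against the $\sqrt{D_i(v)(1-D_i(v))}$-weighted right-hand side: one must locate the value where the CDF difference is largest (the junction of the tail with the equal-revenue segment), control the small shift of that junction under the change of $\gamma$, and separately treat $v$ near $0$, where the square-root term degenerates and only the additive $\tfrac{\epsilon^2}{2nk}$ term is available. The secondary point is the estimate, under $\bD$, that the events with more than one negative-virtual-value bidder contribute negligibly to $\int(\sum_i\bar{\virtual}_i)^{+}d\bq$, so that the clean closed form for $\OPT$, and hence the gap bound, survives.
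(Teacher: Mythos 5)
Your feasible set is exactly the one the paper uses (all-or-nothing bundling at level $\tfrac{k}{n}$ per bidder), and your amplification mechanism is the same: perturb, by the maximum amount that $\approx_\epsilon$ permits at a CDF height of $\Theta(\nicefrac{1}{n})$ (namely $\Theta(\nicefrac{\epsilon}{n\sqrt{k}})$), the per-bidder probability of an event whose virtual value is so negative that it single-handedly kills the allocation; over $n$ independent bidders this moves the probability of serving everyone by $\Theta(\nicefrac{\epsilon}{\sqrt{k}})$ and hence the revenue by $\Theta(\epsilon\sqrt{k})$. So the approach is essentially the paper's. The difference is your choice of value distributions: the paper uses a two-point distribution (value $1$ w.p.\ $1-\tfrac{1}{2n}$, value $0$ otherwise, with the point mass at $0$ shifted by $\tfrac{\epsilon}{4n\sqrt{k}}$), for which every step you flag as an obstacle disappears --- the $\approx_\epsilon$ condition needs to be checked at a single value, the virtual value of the bad event is automatically below $-(n-1)$, and $\OPT$ is exactly $k(1-\Pr[\text{value }0])^n$ with no multi-tail correction terms. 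Your three-piece regular family with the $-\gamma$ tail can be made to work, but it forces you to (i) verify closeness at the moving junction of the tail and the equal-revenue segment and near $v=0$, (ii) bound the contribution of profiles with two or more tail bidders, and (iii) fight for the constant $\tfrac18$ against the extra losses from $\beta^{3/2}$ and the $(1-\nicefrac{\beta}{\gamma})^{n-1}$ factor --- none of which you have actually carried out, and none of which is needed. I would recommend collapsing your equal-revenue segment and continuous tail into a single atom at $0$; the rest of your argument (including the reduction of the corollary via $M_{\btildeD}(\bD)\le\OPT(\bD)$ and $M_{\btildeD}(\btildeD)=\OPT(\btildeD)$) then goes through verbatim and matches the paper.
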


Theorem~\ref{thm:revenue-lipschitz-lower-bound} implies two conceptual messages when we compare its bound with those in Theorems~\ref{thm:approximate-revenue-monotone} and \ref{thm:revenue-lipschitz}.
First, the bound of strong revenue Lipschitzness given by Theorem~\ref{thm:revenue-lipschitz} is tight up to a logarithmic factor.
Second and more relevant to the theme of this paper, the closeness of value distributions on its own (strong revenue Lipschitzness) is strictly weaker than its combination with stochastic dominance (strong revenue monotonicity).
Therefore, if one has to use an inaccurate prior to design the auction, it is more robust to employ an underestimation.

\begin{proof}
    Consider an $n$-bidder rank-$k$ auction as follows.
    Let the set of feasible allocations be:
    \[
        \allocset = \Big\{ \big(0, 0, \dots, 0\big), \big(\,\frac{k}{n}, \frac{k}{n}, \dots, \frac{k}{n}\,\big) \Big\}
        ~.
    \]
    In other words, we allocate either to none of the bidders, or to all of them each by a $\frac{k}{n}$ amount.
    We next define the product value distributions $\bD$ and $\btildeD$.
    The bidders' values are binary, either $0$ or $1$, and are independently and identically distributed.
    In the first distribution $\bD$, the value of each bidder is $0$ with probability $\frac{1}{2n}$ and is $1$ with probability $1-\frac{1}{2n}$.
    %We remark that the factor two in the denominator is merely added to handle the case of $n = 1$ and would be redundant if $n \ge 2$.
    In second distribution $\btildeD$, the value of each bidder is $0$ with a slightly smaller probability $\frac{1}{2n} -  \frac{\epsilon}{(4n\sqrt{k})}$, and is $1$ otherwise.
    By our construction it is easy to verify that $\bD \approx_\epsilon \btildeD$.
    %Remember the definition of $\epsilon$-closeness, we have $\bD \approx_{\epsilon} \btildeD$ for sufficiently small $c>0$.
    
    For both distributions a bidder's virtual value is $1$ when its value is $1$, and is smaller than $-(n-1)$ when its value is $0$.
    Therefore, Myerson's optimal auctions w.r.t.\ these two distributions are the same:
    If all bidders have value $1$, allocate to all bidders each by a $\frac{k}{n}$ amount, and each bidder pays $\frac{k}{n}$.
    Otherwise, allocate to none of them, and the bidders pay nothing.

    %Because that $\virtual_{\btildeD_i}(0) < 1-n$ and $\virtual_{\btildeD_i}(1) = 1$, the Myerson's optimal auction $M_{\btildeD}$ w.r.t\ $\btildeD$ is that do the allocation if and only if all bidder's value is $1$, or more specifically:
    
    %\begin{itemize}
    %    \item If $v_i=1$ for all $1\le i\le n$,~ $x=(\frac{k}{n},\dots,\frac{k}{n})$ and $p=(\frac{k}{n},\dots,\frac{k}{n})$.
    %    \item Otherwise, allocation and payment are all zero.
    %\end{itemize}
    
    Therefore, the optimal revenue w.r.t.\ the two distributions are:
    \[
        \OPT(D)
        =
        \underbrace{\vphantom{\bigg|} \frac{k}{n}}_{\mbox{\tiny payment per bidder}}
        \cdot
        \underbrace{\vphantom{\bigg|} n}_{\mbox{\tiny number of bidders}}
        \cdot
        \underbrace{\vphantom{\bigg|} \Big(1 - \frac{1}{2n}\Big)^n}_{\mbox{\tiny probability that all values are $1$}}
        = k \Big(1 - \frac{1}{2n}\Big)^n
        ~,
    \]
    and similarly:
    \begin{align*}
        \OPT(\btildeD) 
        &
        = \frac{k}{n} \cdot n \cdot \Big( 1 - \frac{1}{2n} + \frac{\epsilon}{4n\sqrt{k}} \Big)^n \\
        & = k \Big(1-\frac{1}{2n}\Big)^n \Big(1+\frac{\epsilon}{2(2n-1) \sqrt{k}} \Big)^n
        ~.
    \end{align*}
    
    The difference is therefore:
    \begin{align*}
        k \Big(1-\frac{1}{2n}\Big)^n \Big( \Big(1+\frac{\epsilon}{2(2n-1) \sqrt{k}} \Big)^n - 1 \Big)
        & 
        \ge k \cdot \frac{1}{2} \cdot \frac{\epsilon n}{2(2n-1) \sqrt{k}} && \mbox{\big($(1+x)^n \ge 1+nx$ for $x > -1$ \big)} \\
        &
        \ge \frac{\epsilon \sqrt{k}}{8} 
        ~.
        && \mbox{\big($\nicefrac{n}{(2n-1)} \ge \nicefrac{1}{2}$\big)}
    \end{align*}
\end{proof}

\section{Applications in Sample Complexity}
\label{sec:sample-complexity}

In practice auctioneers do not have accurate knowledge of the bidders' value distributions.
Instead they often have various kinds of data such as the bidders' bids in previous auctions.
\citet{ColeR:STOC:2014} introduced the sample complexity model of single-parameter auction design, in which the auctioneer can only access the value distributions $\bD$ through i.i.d.\ samples from it.
They asked how many samples are sufficient for learning an auction that is an (additive) $\epsilon$-approximation.
\citet{GuoHZ:STOC:2019} used the strong revenue monotonicity of single-parameter auctions in the matroid setting to obtain sample complexity bounds that are tight up to logarithmic factors.
This section shows that the approximate version of strong revenue monotonicity in the previous section is also sufficient for getting the same bound, but more generally in the downward-closed setting.
For arbitrary single-parameter auctions, we resort to the strong revenue Lipchitzness and get a sample complexity upper bound with an additional $\tilde{O}(k)$ factor.

\begin{theorem}[Downward-closed Feasibility, Unit-demand]
    \label{thm:sample-complexity}
    Consider an arbitrary downward-closed single-parameter auction with $n$ bidders and rank $k$.
    Suppose that we have $N$ i.i.d.\ samples where $N$ is at least:
    \[
        C \cdot \frac{nk}{\epsilon^2} \log \frac{nk}{\epsilon\delta}
    \]
    for a sufficiently large constant $C$.
    Then Myerson's optimal auction w.r.t.\ the dominated product empirical distribution $\btildeE$ is an $\epsilon$-approximation with probability at least $1-\delta$.
\end{theorem}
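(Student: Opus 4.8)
The plan is to combine the concentration bound of Lemma~\ref{lem:dominated-empirical-cdf-bound-2} with approximate strong revenue monotonicity (Theorem~\ref{thm:approximate-revenue-monotone}) and then close the remaining gap with a ``reverse Lipschitzness'' of the optimal revenue. Applying Lemma~\ref{lem:dominated-empirical-cdf-bound-2} with $\epsilon$ rescaled by a constant (absorbed into $C$), with probability at least $1-\delta$ the dominated product empirical $\btildeE$ satisfies $\bD \succeq \btildeE$ and $\bD \approx_{\epsilon/2} \btildeE$; condition on this event, so that it suffices to prove $M_{\btildeE}(\bD) \ge \OPT(\bD) - \epsilon$. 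Since $\bD \succeq \btildeE$ and $\bD \approx_{\epsilon/2} \btildeE$, Theorem~\ref{thm:approximate-revenue-monotone} applied with $\btildeD := \btildeE$ gives $M_{\btildeE}(\bD) \ge M_{\btildeE}(\btildeE) - \tfrac{\epsilon}{2} = \OPT(\btildeE) - \tfrac{\epsilon}{2}$, so it remains to establish
\[
    \OPT(\btildeE) \ge \OPT(\bD) - \tfrac{\epsilon}{2}
    ~,
\]
i.e.\ that replacing the true distribution by a close underestimate costs at most $\tfrac{\epsilon}{2}$ in optimal revenue. This is the new ingredient: weak revenue monotonicity only gives $\OPT(\bD) \ge \OPT(\btildeE)$, and Theorem~\ref{thm:approximate-revenue-monotone} does not give this direction either.

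To prove it, let $\bx$ be the allocation rule of Myerson's optimal auction w.r.t.\ $\bD$, and write, as in the proof of Theorem~\ref{thm:approximate-revenue-monotone},
\[
    \OPT(\bD) = M_\bD(\bD) = \sum_{i=1}^n \E_{\bq_{-i}} \int_0^1 R_{D_i}\bigl(\theta_i(y,\bq_{-i})\bigr)\, dy
    ~,
\]
where $\theta_i(y,\bq_{-i})$ is the largest quantile (w.r.t.\ $D_i$) at which bidder $i$ still receives allocation at least $y$ when the others are at their true values $v_{\bD_{-i}}(\bq_{-i})$. Here I use $\int_0^\theta \virtual_{D_i}(v_{D_i}(q))\,dq = R_{D_i}(\theta)$, the ironing-as-rounding viewpoint, and --- crucially --- that downward-closedness of $\allocset$ lets us assume $\bx$ never allocates a positive amount to a bidder whose $\bD$-ironed virtual value is negative, so each $\theta_i(y,\bq_{-i})$ lies below bidder $i$'s monopoly quantile. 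Now consider the auction for $\btildeE$ that first relabels each bidder $j$'s reported value to the value at the same quantile w.r.t.\ $D_j$ and then runs $\bx$ with its threshold payments; since this relabeling is nondecreasing in each $v_j$, the auction is truthful for $\btildeE$, and for a fixed $\bq_{-i}$ bidder $i$ gets allocation at least $y$ exactly when its $\btildeE$-quantile is below the \emph{same} $\theta_i(y,\bq_{-i})$, so its revenue on $\btildeE$ is $\sum_i \E_{\bq_{-i}} \int_0^1 R_{\tilde{E}_i}\bigl(\theta_i(y,\bq_{-i})\bigr)\, dy$. Hence
\[
    \OPT(\bD) - \OPT(\btildeE) \;\le\; \sum_{i=1}^n \E_{\bq_{-i}} \int_0^1 \Bigl( R_{D_i}\bigl(\theta_i(y,\bq_{-i})\bigr) - R_{\tilde{E}_i}\bigl(\theta_i(y,\bq_{-i})\bigr) \Bigr)\, dy
    ~.
\]

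The last step is a single-bidder revenue-curve estimate of the same flavour as Lemma~\ref{lem:single-bidder-bound}: for any $\theta$ below bidder $i$'s monopoly quantile, the price $v_{D_i}(\theta)$ sells a fraction at most $\theta$ under $\btildeE$ (because $\bD \succeq \btildeE$ forces $q_{\tilde{E}_i}(v_{D_i}(\theta)) \le q_{D_i}(v_{D_i}(\theta)) \le \theta$), so $R_{\tilde{E}_i}(\theta)$ is at least that price's revenue and $R_{D_i}(\theta) - R_{\tilde{E}_i}(\theta)$ is bounded by (essentially) the CDF gap $\tilde{E}_i(v_{D_i}(\theta)) - D_i(v_{D_i}(\theta))$, which $\bD \approx_{\epsilon/2} \btildeE$ controls by $\sqrt{\theta\epsilon^2/(16nk)} + \epsilon^2/(8nk)$; the point-mass subtleties, and the rounding needed so that $\btildeE$ may be treated as regular while preserving $\bD \succeq \btildeE$ and $\bD \approx_{\epsilon/2} \btildeE$, are handled as in Lemma~\ref{lem:single-bidder-bound}. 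Plugging this in, the additive terms sum to at most $\epsilon^2/(8k)$, and for the square-root terms Cauchy--Schwarz together with $\int_0^1 \theta_i(y,\bq_{-i})\, dy = \E[\bx_i \mid \bq_{-i}]$ and $\sum_i \E_{\bv \sim \bD}[\bx_i(\bv)] \le k$ bounds their sum by $O(\epsilon)$, giving $\OPT(\bD) - \OPT(\btildeE) \le \tfrac{\epsilon}{2}$ after the constant rescaling; combined with $M_{\btildeE}(\bD) \ge \OPT(\btildeE) - \tfrac{\epsilon}{2}$ this yields $M_{\btildeE}(\bD) \ge \OPT(\bD) - \epsilon$ on the conditioning event.

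The main obstacle will be this reverse single-bidder estimate and the surrounding bookkeeping: approximate strong revenue monotonicity only limits how much $M_{\btildeE}$ can \emph{underestimate} a bidder's contribution, whereas here one must also bound how much the benchmark $\OPT(\bD)$ can \emph{exceed} what any auction extracts from the underestimate. This is exactly where downward-closedness is indispensable --- it keeps every threshold quantile below the corresponding monopoly quantile, where the relevant portion of each revenue curve is increasing and the pointwise closeness of the CDFs translates into a small revenue loss; without it (general single-parameter auctions), the optimal allocation can be forced to serve bidders with negative virtual value, the control is lost, and one has to fall back on strong revenue Lipschitzness, incurring the extra $\tilde{O}(k)$ factor of Theorem~\ref{thm:sample-complexity-general}.
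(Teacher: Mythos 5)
Your overall architecture is exactly the paper's: condition on $\bD \succeq \btildeE$ and $\bD \approx_{\epsilon/2} \btildeE$, apply Theorem~\ref{thm:approximate-revenue-monotone}, and close the gap with the reverse Lipschitz bound $\OPT(\btildeE) \ge \OPT(\bD) - \epsilon/2$, which is the paper's Lemma~\ref{lem:lipschitz-optimal-revenue} and is likewise proved there by transplanting the optimal allocation for $\bD$ onto $\btildeE$ via a quantile coupling. However, your proof of that lemma has a genuine flaw in the single-bidder estimate. Because you relabel \emph{every} bidder's value, including bidder $i$'s own, your transplanted auction fixes bidder $i$'s threshold \emph{quantile} at $\theta_i(y,\bq_{-i})$, so its per-bidder revenue is $R_{\tilde E_i}(\theta)$ and you must bound the \emph{vertical} gap $R_{D_i}(\theta) - R_{\tilde E_i}(\theta) = \theta\,(v_{D_i}(\theta) - v_{\tilde E_i}(\theta))$. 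Uniform closeness of the CDFs controls the \emph{horizontal} gap (quantiles at a fixed value), not this vertical gap: take $D_i$ a point mass at $1$ and $\tilde E_i$ equal to $1$ with probability $1-\Delta$ and to $\tfrac12$ with probability $\Delta$, with $\Delta \le \epsilon^2/(2nk)$, so that $\bD \succeq \btildeE$ and $\bD \approx_\epsilon \btildeE$; then for $\theta \in (1-\Delta, 1]$ one has $R_{D_i}(\theta) - R_{\tilde E_i}(\theta) \approx \tfrac14$, independent of $\Delta$. The step where this slips in is the claim that ``$R_{\tilde E_i}(\theta)$ is at least that price's revenue'': the price $v_{D_i}(\theta)$ achieves revenue $R_{\tilde E_i}(q')$ at some quantile $q' \le \theta$, and $R_{\tilde E_i}(\theta) \ge R_{\tilde E_i}(q')$ requires the revenue curve of $\tilde E_i$ to be nondecreasing on $[q',\theta]$, which you have not established (you only place $\theta$ below $D_i$'s monopoly quantile, and regularizing $\btildeE$ does not place it below $\tilde E_i$'s); what you are really invoking is $\bar R_{\tilde E_i}(\theta) \ge R_{\tilde E_i}(q')$, but your auction collects $R_{\tilde E_i}(\theta)$, not $\bar R_{\tilde E_i}(\theta)$.

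The repair is to keep bidder $i$'s threshold in \emph{value} space rather than quantile space: for each $i$, relabel only the \emph{other} bidders' values to $\bv_{\bD_{-i}}(\bq_{-i})$ and feed bidder $i$'s reported value directly into $x^*_i$. Then bidder $i$ faces the same threshold value $v^*_i(y,\bq_{-i})$ as in $\myerson_{\bD}$, and the per-bidder loss is $v^*_i(y,\bq_{-i}) \cdot \big( \Pr_{D_i}[v_i \ge v^*_i] - \Pr_{\tilde E_i}[v_i \ge v^*_i] \big)$, which \emph{is} directly the CDF gap; your Cauchy--Schwarz bookkeeping then goes through verbatim. Note that this repaired construction is where downward-closedness is actually indispensable: since $x^*_i(\tilde v_i(q_i), \bv_{-i}(\bq_{-i})) \le x^*_i(\bv(\bq))$ coordinatewise, feasibility of the transplanted allocation requires the feasible set to be downward-closed. (Your construction, by contrast, outputs exactly $x^*(\bv(\bq))$ and is feasible without downward-closedness, which should itself have been a warning sign, given that the theorem fails for general feasibility; the role you assign to downward-closedness, keeping thresholds below monopoly quantiles, is not where the argument needs it.)
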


We first establish that the optimal revenue satisfies a Lipchitzness-style property.

\begin{lemma}[Downward-closed, Lipschitzness of Optimal Revenue]
    \label{lem:lipschitz-optimal-revenue}
    Consider an arbitrary downward-closed single-parameter auction.
    If we have both $\bD \succeq \btildeD$ and $\bD \approx_\epsilon \btildeD$, then:
    \[
        \OPT(\btildeD) \ge \OPT(\bD) - \epsilon
        ~.
    \]
\end{lemma}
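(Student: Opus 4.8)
The plan is to turn Myerson's optimal auction $M_{\bD}$ for the larger distribution into an auction for $\btildeD$ that loses at most $\epsilon$, reusing the quantile-space bookkeeping from the proof of Theorem~\ref{thm:approximate-revenue-monotone}. Write the allocation rule of $M_{\bD}$ as a monotone function $x^*$ of the bidders' quantiles, and define an auction $M'$ for $\btildeD$ whose allocation on quantile profile $\bq$ is again $x^*(\bq)$ (equivalently: on bids $\bv$, inflate each $v_i$ to the value $v_{D_i}(q_{\tilde D_i}(v_i)) \ge v_i$ at the same quantile and run $M_{\bD}$). Since $x^*(\bq) \in \allocset$ for every $\bq$ and $x^*$ is monotone, $M'$ is a valid truthful auction, so $\OPT(\btildeD) \ge M'(\btildeD)$. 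Decomposing $x^*$ into level sets and applying Myerson's payment identity — verbatim as in Equation~\eqref{eqn:approx-rev-monotone-integration-by-part} — gives, with the \emph{same} threshold-quantile functions $\theta_i = \theta_i(y,\bq_{-i})$ for both auctions,
\[
    \OPT(\bD) = \sum_{i=1}^n \int_{[0,1]^{n-1}} \int_0^1 \bar R_{D_i}(\theta_i)\, dy\, d\bq_{-i}, \qquad M'(\btildeD) = \sum_{i=1}^n \int_{[0,1]^{n-1}} \int_0^1 R_{\tilde D_i}(\theta_i)\, dy\, d\bq_{-i},
\]
where the $\btildeD$ side carries the non-ironed revenue curve because $x^*$ need not be consistent with $\btildeD$'s ironing, and $\sum_i \int_{[0,1]^{n-1}}\int_0^1 \theta_i\, dy\, d\bq_{-i} = \sum_i \E_{\bq}\big[x^*_i(\bq)\big] \le k$ by the rank bound.

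Next I would remove the ironing on the $\bD$ side by the ``ironing $=$ rounding values'' viewpoint from Section~\ref{sec:prelim}: round $\bD$'s values down on $\bD$'s ironing intervals to obtain a regular $\bD'$ with $\OPT(\bD') = \OPT(\bD)$ and $M_{\bD'} = M_{\bD}$, and round $\btildeD$'s values on the \emph{same} intervals to obtain $\btildeD'$. One checks that $\bD' \succeq \btildeD'$ and $\bD' \approx_\epsilon \btildeD'$ are preserved — the CDF gap at any rounded value equals the original gap at the right endpoint of its interval, and the $\approx_\epsilon$ bound depends on the CDF value, which is also inherited from that endpoint — while $\OPT(\btildeD') \le \OPT(\btildeD)$ by weak revenue monotonicity. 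So it suffices to prove the lemma assuming $\bD$ regular, in which case $\bar R_{D_i} = R_{D_i}$ above.

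The crux is then the per-bidder estimate: for every threshold quantile $\theta$ actually used by $M_{\bD}$,
\[
    R_{D_i}(\theta) - R_{\tilde D_i}(\theta) \le \sqrt{\tfrac{\theta \epsilon^2}{4nk}} + \tfrac{\epsilon^2}{2nk}.
\]
This is exactly where downward-closedness is needed: because the empty allocation is feasible, $M_{\bD}$ never allocates to a bidder with negative (ironed) virtual value, so at each used threshold $\theta$ we have $\virtual_{D_i}\big(v_{D_i}(\theta)\big) \ge 0$, i.e.\ $\theta$ is at most $D_i$'s monopoly quantile and $R_{D_i}$ sits on the increasing branch of its concave revenue curve there. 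Writing $p = v_{D_i}(\theta)$, so $R_{D_i}(\theta) = \theta p$ and $D_i(p) = 1-\theta$, the condition $\bD \approx_\epsilon \btildeD$ forces $\tilde D_i(p) \le D_i(p) + \sqrt{\theta\epsilon^2/(4nk)} + \epsilon^2/(2nk)$, hence $\Pr_{\tilde D_i}[v > p] \ge \theta - \big(\sqrt{\theta\epsilon^2/(4nk)} + \epsilon^2/(2nk)\big)$, so posting price $p$ already extracts revenue $\ge \theta p - \big(\sqrt{\theta\epsilon^2/(4nk)} + \epsilon^2/(2nk)\big)$ from $\tilde D_i$, and one verifies this is a lower bound on $R_{\tilde D_i}(\theta)$. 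Summing this inequality over $i$, integrating, and bounding $\sum_i \int\int \sqrt{\theta_i}\, dy\, d\bq_{-i} \le \sqrt{nk}$ by Cauchy--Schwarz exactly as in the proof of Theorem~\ref{thm:approximate-revenue-monotone} yields $\OPT(\bD) - \OPT(\btildeD) \le \sum_i \int\int \big(R_{D_i}(\theta_i) - R_{\tilde D_i}(\theta_i)\big) \le \tfrac{\epsilon}{2} + \tfrac{\epsilon^2}{2k} \le \epsilon$.

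I expect the per-bidder estimate to be the only real obstacle. The one-line argument above is transparent when $D_i,\tilde D_i$ are continuous at $p$, but the boundary behaviour — a point mass of $\tilde D_i$ at or just below $p$ (which can pull $v_{\tilde D_i}(\theta)$ strictly below $p$) and the regime $\theta \to 1$ — needs to be controlled, presumably via the same truncation-plus-integration-by-parts device as in Lemma~\ref{lem:single-bidder-bound}, of which the estimate here is essentially the dual. Everything else is routine bookkeeping mirroring Theorem~\ref{thm:approximate-revenue-monotone}.
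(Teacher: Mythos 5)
There is a genuine gap at what you yourself identify as the crux: the per-bidder estimate $R_{D_i}(\theta) - R_{\tilde D_i}(\theta) \le \sqrt{\theta\epsilon^2/(4nk)} + \epsilon^2/(2nk)$ is false, and the step ``one verifies this is a lower bound on $R_{\tilde D_i}(\theta)$'' is where it breaks. Posting price $p = v_{D_i}(\theta)$ to $\tilde D_i$ sells with probability $q' = q_{\tilde D_i}(p) \le \theta$, so it realizes a point of $\tilde D_i$'s revenue curve at a quantile $q' \le \theta$, not at $\theta$; the (non-ironed) revenue curve can plunge between $q'$ and $\theta$. Concretely, take $n=k=1$: let $D_i$ put mass $\tfrac12$ at $1$ and mass $\tfrac12$ at $0$, and let $\tilde D_i$ put mass $\tfrac12-\eta$ at $1$, mass $\eta$ at $0.01$, and mass $\tfrac12$ at $0$, with $\eta$ equal to the slack allowed by $\approx_\epsilon$ (about $\epsilon/4$). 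Then $\bD \succeq \btildeD$, $\bD\approx_\epsilon\btildeD$, $\theta=\tfrac12$ satisfies $\virtual_{D_i}(v_{D_i}(\theta))\ge 0$, and $R_{D_i}(\tfrac12)=\tfrac12$, but $v_{\tilde D_i}(\tfrac12)=0.01$ so $R_{\tilde D_i}(\tfrac12)=0.005$. This is not a boundary technicality to be absorbed by the integration-by-parts device: your witness auction $M'$, which fixes the threshold \emph{quantile} at $\theta$, really does charge this bidder only $0.01$ and earns $0.005$, even though $\OPT(\btildeD)\ge \tfrac12-\eta$. So $\OPT(\btildeD)\ge M'(\btildeD)$ is true but useless; $M'$ does not witness the lemma.

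The fix is to fix the threshold \emph{value} rather than the threshold quantile, which is what the paper does: for each bidder $i$, keep $v_i$ as drawn from $\tilde D_i$ and only remap the \emph{other} bidders' values to $\bv_{-i}(\bq_{-i})$ (their $\bD_{-i}$-values at the same quantiles), then apply $x^*_i$. Each bidder's threshold value $v^*_i(y,\bq_{-i})$, and hence its payment, is then identical in the two scenarios, and the only loss is the drop in the probability of clearing that threshold, $\Pr_{D_i}[v_i\ge v^*_i]-\Pr_{\tilde D_i}[v_i\ge v^*_i]$, which is exactly what $\bD\approx_\epsilon\btildeD$ controls; the Cauchy--Schwarz step then goes through as you describe. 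Note this construction uses downward-closedness differently from your proposal: the remapped allocation vector is only \emph{pointwise dominated} by $x^*(\bv(\bq))$ (since $\tilde v_i(q_i)\le v_i(q_i)$ and $x^*_i$ is monotone), so downward-closedness is needed for feasibility of the witness auction itself, not merely to ensure nonnegative virtual values at the thresholds. Your rounding/regularization preamble and the final bookkeeping are fine, but the middle of the argument has to be redone in value space.
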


A similar property was shown implicitly in the analysis of \citet{GuoHZ:STOC:2019} indirectly using an information theoretic argument and revenue monotonicity.
By comparison, our argument is more direct and reduces the logarithmic factors.

Readers may also notice that a weaker version of the lemma follows as a corollary of Theorem~\ref{thm:revenue-lipschitz} and holds more generally for arbitrary single-parameter auctions, although we would need the two distributions to be closer than the stated lemma assumption.

Next we first explain how our sample complexity bound (Theorem~\ref{thm:sample-complexity}) follows from a simple combination of the Lipschitzness of optimal revenue (Lemma~\ref{lem:lipschitz-optimal-revenue}) and approximate strong revenue monotonicity (Theorem~\ref{thm:approximate-revenue-monotone}).
Then we prove Lemma~\ref{lem:lipschitz-optimal-revenue}.

\begin{proof}[Proof of Theorem~\ref{thm:sample-complexity}]
    With a sufficiently large $N$ as assumed in the theorem, with probability at least $1-\delta$ we have that $\bD \succeq \btildeE$ and $\bD \approx_{\frac{\epsilon}{2}} \btildeE$.
    Therefore, the theorem follows by:
    \begin{align*}
        M_{\btildeE}(\bD)
        &
        \ge M_{\btildeE}(\btildeE) - \frac{\epsilon}{2}
        \tag{Theorem~\ref{thm:approximate-revenue-monotone}}
        \\
        &
        = \OPT(\btildeE) - \frac{\epsilon}{2} \\
        &
        \ge \OPT(\bD) - \epsilon
        ~.
        \tag{Lemma~\ref{lem:lipschitz-optimal-revenue}}
    \end{align*}
\end{proof}

\begin{proof}[Proof of Lemma~\ref{lem:lipschitz-optimal-revenue}]
    We will prove the lemma by constructing an auction whose expected revenue on value distribution $\btildeD$ is at least $\OPT(\bD) - \epsilon$.
    For ease of notations, for any bidder $i$ and any quantile $q_i$ we write $v_i(q_i)$ and $\tilde{v}_i(q_i)$ for $v_{D_i}(q_i)$ and $v_{\tilde{D}_i}(q_i)$ respectively.
    Let $\bx^*$ be the allocation of $\myerson_{\bD}$, Myerson's optimal auction w.r.t.\ $\bD$.
    We construct an allocation $\btildex$ such that for any bidder $i$ and any quantile vector $\bq$ and the corresponding value vector $\btildev(\bq)$ w.r.t.\ value distribution $\btildeD$, bidder $i$'s allocation is:
    \[
        \tilde{x}_i \big( \btildev(\bq) \big) = x^*_i \big( 
        \tilde{v}_i(q_i), \bv_{-i}(\bq_{-i}) \big)
        ~.
    \]
    That is, for every bidder $i$ we map the other bidders' values to what they would have been, if their value distributions were $\bD_{-i}$ instead of $\btildeD_{-i}$, and if their quantiles stayed the same.
    Then we call $x^*_i$ to decide bidder $i$'s allocation.
    The intuition behind this construction is best explained under the special case when allocation $\bx^*$ takes binary values in all coordinates.
    For any bidder $i$ and any realization of the other bidders' \emph{quantiles}, bidder $i$'s threshold value above which it gets allocation $1$ are the same in two scenarios: 
    (1) when the allocation rule is $x^*_i$ and the others' values are drawn from $\bD_{-i}$, and (2) when the allocation rule is $\tilde{x}_i$ and the others' values are drawn from $\btildeD_{-i}$.
    Hence, bidder $i$'s payments when being allocated are the same in these two scenarios.
    The only revenue loss under the latter scenario comes from the lower probability of reaching this threshold, which can be bounded by the assumption of $\bD \approx_{\frac{\epsilon}{2}} \btildeD$.
    
    We first argue that allocation rule $\btildex$ is feasible and monotone.
    For any bidder $i$, by the monotonicity of $x^*_i$ in bidder $i$'s value we have:
    \[
        \tilde{x}_i \big( \btildev(\bq) \big) \le x^*_i \big( \bv(\bq) \big)
        ~.
    \]
    Therefore, the feasibility of $\btildex$ follows from the feasibility of $\bx^*$ and that the feasible allocations are downward-closed.
    The monotonicity of $\tilde{x}_i$ in bidder $i$'s value follows by the monotonicity of $x^*_i$.
    
    By Myerson's Lemma, there is a payment rule which together with $\btildex$ form a truthful auction.
    Denote it by $\tilde{A}$.
    Write the virtual values of $D_i$ and $\tilde{D}_i$ by $\virtual_i$ and $\tilde{\virtual}_i$ respectively.
    We have:
    \begin{align*}
        \OPT(\bD)
        &
        = \sum_{i=1}^n \int_{[0,1]^n} x^*_i \big( \bv(\bq) \big) \virtual_i\big(v_i(q_i)\big) ~d\bq
        \\
        &
        = \sum_{i=1}^n \int_{[0,1]^{n-1}} \underbrace{\int_0^1 x^*_i \big( v_i(q_i), \bv_{-i}(\bq_{-i}) \big) \virtual_i\big(v_i(q_i)\big) ~dq_i}_{(a)} ~d\bq_{-i}
        ~,
    \end{align*}
    and:
    \begin{align*}
        \tilde{A}(\btildeD)
        &
        = \sum_{i=1}^n \int_{[0,1]^n} \tilde{x}_i \big( \btildev(\bq) \big) \tilde{\virtual}_i\big(v_i(q_i)\big)  ~d\bq
        \\
        &
        = \sum_{i=1}^n \int_{[0,1]^{n-1}} \underbrace{\int_0^1 x^*_i \big( \tilde{v}_i(q_i), \bv_{-i}(\bq_{-i}) \big) \tilde{\virtual}_i\big(\tilde{v}_i(q_i)\big) ~dq_i}_{(b)} ~d\bq_{-i}
        ~.
    \end{align*}

    Next for any bidder $i$ and any quantiles $\bq_{-i}$ of the other bidders, we will bound the difference between (a) and (b).
    For any $0 \le y \le 1$, let $v^*_i(y, \bq_{-i})$ be the minimum $v_i$ for which we have $x^*_i \big( v_i, \bv_{-i}(\bq_{-i}) \big) \ge y$.
    We have:
    \begin{align*}
        (a)
        &
        = \int_0^1 \int_{v^*_i(y, \bq_{-i})}^1 \virtual_i(v_i) ~dD_i(v_i) ~dy
        = \int_0^1 v^*_i(y, \bq_{-i}) \Pr_{v_i \sim D_i} \big[ v_i \ge v^*_i(y, \bq_{-i}) \big] ~dy
        ~,
        \\
        (b)
        &
        = \int_0^1 \int_{v^*_i(y, \bq_{-i})}^1 \tilde{\virtual}_i(v_i) ~d\tilde{D}_i(v_i) ~dy
        = \int_0^1 v^*_i(y, \bq_{-i}) \Pr_{v_i \sim \tilde{D}_i} \big[ v_i \ge v^*_i(y, \bq_{-i}) \big] ~dy
    \end{align*}
    
    The above equations hold for reasons similar to those behind Eqn.~\eqref{eqn:approx-rev-monotone-integration-by-part} and its illustration in Figure~\ref{fig:integration-by-part-illustration}, except that we use the value space as the $x$-coordinate instead of the quantile space.
    This choice simplifies the notations in the current argument, as the two equations correspond to allocations $x^*$ and $\tilde{x}$, which by design are related to each other through common threshold values.
    
    %\zh{Try rewrite it as integration in quantile space.}

    Further by $\bD \approx_\epsilon \btildeD$:
    \[
        \Pr_{v_i \sim \tilde{D}_i} \big[ v_i \ge v^*_i(y, \bq_{-i}) \big]
        \ge 
        \Pr_{v_i \sim D_i} \big[ v_i \ge v^*_i(y, \bq_{-i}) \big] - \sqrt{\frac{\epsilon^2}{4nk} \Pr_{v_i \sim D_i} \big[ v_i \ge v^*_i(y, \bq_{-i}) \big]} - \frac{\epsilon^2}{2nk}
        ~.
    \]

    %\zh{Possibly explain the underlying calculation.}

    Therefore:
    \[
        \OPT(\bD) - \tilde{A}(\btildeD) \le \sum_{i=1}^n \int_{[0,1]^{n-1}} \int_0^1 \Big( \sqrt{\frac{\epsilon^2}{4nk} \Pr_{v_i \sim D_i} \big[ v_i \ge v^*_i(y, \bq_{-i}) \big]} + \frac{\epsilon^2}{2nk} \Big) ~dy ~d\bq_{-i}
        ~.
    \]

    The lemma now follows by:
    \begin{align*}
        & \sum_{i=1}^n \int_{[0,1]^{n-1}} \int_0^1 \sqrt{\Pr_{v_i \sim D_i} \big[ v_i \ge v^*_i(y, \bq_{-i}) \big]} ~dy ~d\bq_{-i}
        \\
        & \qquad
        \le \bigg( n \sum_{i=1}^n \int_{[0,1]^{n-1}} \int_0^1 \Pr_{v_i \sim D_i} \big[ v_i \ge v^*_i(y, \bq_{-i}) \big] ~dy ~d\bq_{-i} \bigg)^{\frac{1}{2}}
        \tag{Cauchy-Schwarz}
        \\
        & \qquad
        = \bigg( n \sum_{i=1}^n \E_{\bv \sim \bD} x_i^*(\bv) \bigg)^{\frac{1}{2}}
        \\[1ex]
        & \qquad
        \le \sqrt{nk}
        ~,
        \tag{Rank $k$}
    \end{align*}
    and:
    \[
        \sum_{i=1}^n \int_{[0,1]^{n-1}} \int_0^1 \frac{\epsilon^2}{2nk} ~dy ~d\bq_{-i}
        = \frac{\epsilon^2}{2k} \le \frac{\epsilon}{2}
        ~.
    \]
\end{proof}

\begin{theorem}[General Feasibility, Unit-demand]
    \label{thm:sample-complexity-general}
    Consider an arbitrary single-parameter auction with $n$ bidders and rank $k$.
    Suppose that we have $N$ i.i.d.\ samples where $N$ is at least:
    \[
        C \cdot \frac{nk^2}{\epsilon^2} \log \frac{nk}{\epsilon} \log \frac{nk}{\epsilon\delta}
    \]
    for a sufficiently large constant $C$.
    Then Myerson's optimal auction w.r.t.\ the dominated product empirical distribution $\btildeE$ is an $\epsilon$-approximation with probability at least $1-\delta$.
\end{theorem}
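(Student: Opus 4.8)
The plan is to mirror the proof of Theorem~\ref{thm:sample-complexity} but replace the two ingredients that exploited downward-closedness — approximate strong revenue monotonicity (Theorem~\ref{thm:approximate-revenue-monotone}) and the Lipschitzness of optimal revenue (Lemma~\ref{lem:lipschitz-optimal-revenue}) — by strong revenue Lipschitzness (Theorem~\ref{thm:revenue-lipschitz}), which holds for all single-parameter auctions but demands the two distributions to be closer by a $\tilde{O}(\sqrt{k})$ factor. That stronger closeness requirement is exactly what accounts for the extra $\tilde{O}(k)$ factor in the sample complexity.

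First I would fix $\epsilon' = c' \cdot \frac{\epsilon}{\sqrt{k \log \frac{nk}{\epsilon}}}$, with $c'$ a sufficiently small constant, so that Theorem~\ref{thm:revenue-lipschitz}, invoked with target loss $\frac{\epsilon}{2}$, applies whenever two distributions are $\epsilon'$-close. A routine calculation then shows that the hypothesis $N \ge C \cdot \frac{nk^2}{\epsilon^2} \log \frac{nk}{\epsilon} \log \frac{nk}{\epsilon\delta}$ implies $N \ge C'' \cdot \frac{nk}{(\epsilon')^2} \log \frac{nk}{\epsilon'\delta}$ for a large enough $C''$: the factor $\frac{nk}{(\epsilon')^2}$ contributes $\frac{nk^2}{\epsilon^2}\log\frac{nk}{\epsilon}$ up to constants, while $\log \frac{nk}{\epsilon'\delta}$ exceeds $\log \frac{nk}{\epsilon\delta}$ only by an additive $O(\log k + \log\log\frac{nk}{\epsilon})$, which is itself $O(\log\frac{nk}{\epsilon\delta})$ and hence absorbed. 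Therefore, applying Lemma~\ref{lem:dominated-empirical-cdf-bound-2} with parameter $\epsilon'$ in place of $\epsilon$, with probability at least $1-\delta$ we have $\bD \succeq \btildeE$ and, more importantly for what follows, $\bD \approx_{\epsilon'} \btildeE$.

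Next I would chain two applications of Theorem~\ref{thm:revenue-lipschitz}, using that the relation $\approx_{\epsilon'}$ is symmetric and monotone in its parameter. Applied to the auction $M_{\btildeE}$ it gives $M_{\btildeE}(\bD) \ge M_{\btildeE}(\btildeE) - \frac{\epsilon}{2} = \OPT(\btildeE) - \frac{\epsilon}{2}$. Applied instead to $M_{\bD}$, with the roles of $\bD$ and $\btildeE$ interchanged (here the key point is that Theorem~\ref{thm:revenue-lipschitz} does \emph{not} require stochastic dominance, which is the entire content of ``strong revenue Lipschitzness''), it gives $M_{\bD}(\btildeE) \ge M_{\bD}(\bD) - \frac{\epsilon}{2} = \OPT(\bD) - \frac{\epsilon}{2}$; since $M_{\bD}$ is a truthful auction, $\OPT(\btildeE) \ge M_{\bD}(\btildeE)$. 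Combining the three inequalities, $M_{\btildeE}(\bD) \ge \OPT(\btildeE) - \frac{\epsilon}{2} \ge M_{\bD}(\btildeE) - \frac{\epsilon}{2} \ge \OPT(\bD) - \epsilon$, so Myerson's optimal auction w.r.t.\ $\btildeE$ is an $\epsilon$-approximation with probability at least $1-\delta$. (Since we do have $\bD \succeq \btildeE$, one could also run the first of these two steps through the stronger Theorem~\ref{thm:approximate-revenue-monotone}, needing only $\bD \approx_{\epsilon/2} \btildeE$; but the tight $\epsilon'$ is unavoidable for the second step, so invoking Theorem~\ref{thm:revenue-lipschitz} uniformly is cleanest.)

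The only genuinely delicate part is the closeness bookkeeping in the second paragraph: choosing $c'$ small enough that a single invocation of Lemma~\ref{lem:dominated-empirical-cdf-bound-2} simultaneously supplies the $\epsilon'$-closeness needed for both applications of Theorem~\ref{thm:revenue-lipschitz} (whose two target thresholds differ only by constant factors), and verifying that the $\sqrt{k \log \frac{nk}{\epsilon}}$ hidden inside $\epsilon'$ inflates $\log \frac{nk}{\epsilon'\delta}$ by at most a constant factor relative to $\log \frac{nk}{\epsilon\delta}$. Everything else is a direct assembly of results already established.
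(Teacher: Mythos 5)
Your proposal is correct and follows essentially the same route as the paper's own proof: a single invocation of Lemma~\ref{lem:dominated-empirical-cdf-bound-2} to get $\bD \approx_{\epsilon'} \btildeE$ with $\epsilon' = \Theta(\epsilon/\sqrt{k\log\frac{nk}{\epsilon}})$, followed by the same three-step chain $M_{\btildeE}(\bD) \ge \OPT(\btildeE) - O(\epsilon) \ge M_{\bD}(\btildeE) - O(\epsilon) \ge \OPT(\bD) - O(\epsilon)$ using Theorem~\ref{thm:revenue-lipschitz} twice (once in each direction, via the symmetry of $\approx_{\epsilon'}$) and the optimality of $M_{\btildeE}$ in the middle. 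The only cosmetic difference is that the paper proves a $2\epsilon$-approximation and absorbs the constant, whereas you rescale to lose $\epsilon/2$ per step; your extra bookkeeping on how the stated $N$ implies the $\epsilon'$-closeness is a welcome explicit verification of a step the paper leaves implicit.
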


\begin{proof}
    We will prove a $2\epsilon$-approximation with the understanding that halving the approximation factor does not change the sample complexity asymptotically.
    With the stated lower bound of $N$, we get that $\bD \approx_{\epsilon'} \btildeE$ for:
    \[
        \epsilon' = c \cdot \frac{\epsilon}{\sqrt{k \log \frac{nk}{\epsilon}}}
        ~.
    \]
    
    Therefore we have:
    \begin{align*}
        M_{\btildeE}(\bD)
        &
        \ge M_{\btildeE}(\btildeE) - \epsilon
        && \mbox{($\bD \approx_{\epsilon'} \btildeE$ and Theorem~\ref{thm:revenue-lipschitz})} \\
        &
        \ge M_{\bD}(\btildeE) - \epsilon
        && \mbox{(optimality of $M_{\btildeE}$)} \\
        &
        \ge M_{\bD}(\bD) - 2\epsilon
        ~.
        && \mbox{($\bD \approx_{\epsilon'} \btildeE$ and Theorem~\ref{thm:revenue-lipschitz})}
    \end{align*}
\end{proof}

We next complement Theorem~\ref{thm:sample-complexity-general} with an almost matching lower bound, demonstrating that the sample compelxity bound therein is tight up to logarithmic factors.
This lower bound also shows a separation between the sample complexity of downward-closed and non-downward-closed single-parameter auctions.

\begin{theorem}
    \label{thm:sample-complexity-general-lower-bound}
    There is a constant $c$ such that for any number of bidders $n$, any rank $k \le n$, and any $0 < \epsilon \le \frac{1}{100}$, there is a set of feasible allocations for which any algorithm needs at least:
    \[
        c \cdot \frac{n k^2}{\epsilon^2}
    \]
    samples to learn an auction with an $\epsilon$-approximation in expectation, over the randomness of the algorithm, the samples, and the bidders' valuations in the auciton.
    %if an algorithm 
\end{theorem}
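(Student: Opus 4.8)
The plan is to prove this lower bound by a two-point (Le Cam) argument --- or, should the factor of $n$ require it, an Assouad-style argument over $\Theta(n)$ (resp.\ $\Theta(n/k)$) essentially independent sub-gadgets. Concretely, I would exhibit a \emph{non-downward-closed} set of feasible allocations together with two product value distributions $\bD^{(0)}$ and $\bD^{(1)}$ that are (i) \emph{statistically indistinguishable from few samples}: their total squared Hellinger distance $\sum_{i=1}^{n} \Hellinger(D^{(0)}_i, D^{(1)}_i)^2$ is $O\big(\epsilon^2/(nk^2)\big)$, so that for $N < c\,nk^2/\epsilon^2$ the distributions over $N$-sample sequences are within total variation $o(1)$ of each other (by tensorization of the Hellinger distance over product distributions and the standard bound $\TotalVar \le \sqrt{2}\,\Hellinger$); and (ii) \emph{incompatible at scale $\Theta(\epsilon)$}: no single truthful auction is an $\epsilon$-approximation on both $\bD^{(0)}$ and $\bD^{(1)}$. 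Given such a pair, the theorem is immediate: the (randomized) auction output by any algorithm has almost the same distribution whether the samples are drawn from $\bD^{(0)}$ or $\bD^{(1)}$, so incompatibility forces an expected revenue loss $\Omega(\epsilon)$ on at least one of the two instances; rescaling $\epsilon$ by a constant gives the stated bound, and since $A(\bD^{(b)})$ already denotes expected revenue, the expectation over the auction's valuations is handled automatically.

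For the construction, I would build on the \emph{bundled} (all-or-nothing) gadget underlying Theorem~\ref{thm:revenue-lipschitz-lower-bound}, since this is precisely the mechanism by which revenue becomes $\sqrt{k}$-more sensitive to perturbations of the value distribution in non-downward-closed auctions than in downward-closed ones --- exactly the phenomenon that inflates the $k$-exponent in the upper bound of Theorem~\ref{thm:sample-complexity-general} via the $\sqrt{k}$ loss in strong revenue Lipschitzness (Theorem~\ref{thm:revenue-lipschitz}). I would take a non-downward-closed feasibility structure consisting of a small number of mutually exclusive bundles spanning the $n$ bidders with total rank $k$, and choose $\bD^{(0)}$ and $\bD^{(1)}$ to agree on a common ``bulk'' portion of each bidder's support but to disagree by a tiny amount on a thin, low-probability ``top'' region, tuned so that: (a) Myerson's optimal auction --- which bundle it commits to, and at what reserves --- genuinely differs between the two distributions; (b) running the optimal auction for one distribution on the other loses $\Omega(\epsilon)$ revenue, amplified by the $\sqrt{k}$ factor as in Theorem~\ref{thm:revenue-lipschitz-lower-bound}; and (c) the perturbation remains small enough in Hellinger distance to fit the $O\big(\epsilon^2/(nk^2)\big)$ budget. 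The parameters have to be chosen so these three requirements are mutually consistent, and it is in this balancing that the $\sqrt{k}$ amplification turns into the extra factor of $k$ in $\Theta(nk^2/\epsilon^2)$.

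The main obstacle is the construction itself, not the Le Cam wrapper. Unlike the instance in Theorem~\ref{thm:revenue-lipschitz-lower-bound}, where one distribution-independent auction (``charge everyone the high value'') is optimal for every distribution in the family and hence gives no learning lower bound, here the optimal auction must \emph{genuinely depend} on a parameter that is statistically hard to estimate. This is subtle because in a bundled auction the optimal reserve is pinned near the bottom of the support (every bundled bidder must clear it), which makes it robust; the perturbation therefore has to act through a more delicate feature --- e.g.\ which competing bundle is preferred, or the fine shape of the value distribution inside the thin top region where the bundle's revenue is actually elastic --- all while keeping the two distributions Hellinger-close. Verifying that \emph{no} truthful auction whatsoever, not merely no ``natural'' one, can be $\epsilon$-good on both instances, and matching the revenue gap ($\Omega(\epsilon)$ after the $\sqrt{k}$ amplification) against the Hellinger budget ($O(\epsilon^2/(nk^2))$) to land exactly on $\Theta(nk^2/\epsilon^2)$, is the crux of the argument.
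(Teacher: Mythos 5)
Your high-level framework is the right one, and you even land on the correct gadget: the paper's proof uses exactly the all-or-nothing feasibility set $\allocset = \{(0,\dots,0), (\frac{k}{n},\dots,\frac{k}{n})\}$ from Theorem~\ref{thm:revenue-lipschitz-lower-bound}, a Hellinger/TV indistinguishability wrapper (Lemmas~\ref{lem:hellinger-subadditive} and \ref{lem:hellinger-expectation}), and precisely the Assouad-style hypercube argument you hedge toward. But the part you defer as ``the main obstacle'' and ``the crux'' --- the actual construction and the verification that the optimal auction genuinely depends on a hard-to-estimate parameter --- is the entire content of the proof, and you do not supply it. The missing idea is much simpler than the ``thin top region'' you gesture at: take binary values in $\{0,1\}$ and perturb only the probability of value $0$, setting it to $\frac{1}{n}-\delta$ (distribution $D^+$) or $\frac{1}{n}+\delta$ (distribution $D^-$) with $\delta = \Theta(\epsilon/(nk))$. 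The critical event is the value profile $\bv^i$ in which exactly one bidder $i$ has value $0$; this occurs with probability $\Theta(1/n)$, and on it the total virtual welfare of the bundle is $\frac{k}{n}\big((n-1) + \virtual_i(0)\big)$, whose sign flips with the sign of the perturbation since $\virtual_i(0) = -(n-1) \mp \Theta(n^2\delta)$. A wrong allocate/don't-allocate decision therefore costs $\Theta(kn\delta) = \Theta(\epsilon)$ in virtual welfare, i.e.\ $\Theta(\epsilon/n)$ in expected revenue per coordinate, while $H^2(D^+,D^-) = \Theta(\delta^2 n) = \Theta(\epsilon^2/(nk^2))$.

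This accounting also exposes a quantitative flaw in your primary (two-point) branch: a single perturbed coordinate forces only an $\Omega(\epsilon/n)$ revenue gap, not the $\Omega(\epsilon)$ you assert, and perturbing all $n$ coordinates simultaneously to build one incompatible pair blows the Hellinger budget by a factor of $n$ (forcing $\delta$ down by $\sqrt{n}$ and losing the gap). So the Le Cam version cannot recover the factor of $n$; only the Assouad argument over the $2^n$ product distributions $\{D_i \in \{D^+,D^-\}\}$, pairing distributions that differ in one coordinate and summing $n$ contributions of $\Omega(\epsilon/n)$, yields the claimed $\Omega(nk^2/\epsilon^2)$. Since the construction, the virtual-value sign-flip mechanism, and the correct branch of your hedge are all left unresolved, the proposal as written has a genuine gap.
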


\begin{proof}
    We will assume that $n \ge 2$;
    %is sufficiently large;
    otherwise the lower bound degenerates to $\Omega(\epsilon^2)$ which follows from the lower bound of the single-bidder case (c.f., \citet*{HuangMR:SICOMP:2018}).
    This sample complexity lower bound is based on the same set of feasible allocations as in Theorem~\ref{thm:revenue-lipschitz-lower-bound}, and slightly modified value distributions.
    We start by restating the set of feasible allocations:
    \[
        \allocset = \Big\{ \big(0, 0, \dots, 0\big), \big(\,\frac{k}{n}, \frac{k}{n}, \dots, \frac{k}{n}\,\big) \Big\}
        ~.
    \]
    In other words, we allocate either to none of the bidders, or to all of them each by a $\frac{k}{n}$ amount.

    We next define the value distributions.
    %$\bD^+$ and $\btildeD$.
    The bidders' values are binary, either $0$ or $1$.
    There are two possible value distributions $D^+$ and $D^-$.
    %, and are independently and identically distributed.
    In the first value distribution $D^+$ (of a bidder), the value of each bidder is $0$ with probability $\frac{1}{n}-\delta$ and is $1$ with probability $1-\frac{1}{n}+\delta$, where $\delta \le \frac{1}{2n}$ is a parameter to be determined later in the analysis.
    %We remark that the factor two in the denominator is merely added to handle the case of $n = 1$ and would be redundant if $n \ge 2$.
    In second value distribution $D^-$, the value of each bidder is $0$ with probability $\frac{1}{n}+\delta$, and is $1$ with probability $1-\frac{1}{n}-\delta$.

    It is straightforward to upper bound the \emph{Hellinger distance} $H(D^+, D^-)$ of the distributions.
    We present both the definition (of the squared Hellinger distance) and the calculation below to be self-contained:
    \begin{align*}
        H^2(D^+,D^-) 
        &
        \defeq \frac{1}{2} \sum_{v \in \{0, 1\}} \big(\sqrt{\Pr_{D^+}[v]} - \sqrt{\Pr_{D^-}[v]} \big)^2 \\
        &
        = \frac{1}{2} \Big( \sqrt{\frac{1}{n}-\delta}-\sqrt{\frac{1}{n}+\delta} \Big)^2 + \frac{1}{2} \Big( \sqrt{1-\frac{1}{n}+\delta}-\sqrt{1-\frac{1}{n}-\delta} \Big)^2 \\[1.5ex]
        &
        = \frac{2\delta^2}{\big( \sqrt{\frac{1}{n}-\delta}+\sqrt{\frac{1}{n}+\delta} \, \big)^2} + \frac{2\delta^2}{\big( \sqrt{1-\frac{1}{n}+\delta} + \sqrt{1-\frac{1}{n}-\delta} \, \big)^2}\\
        &
        \le \frac{4 \delta^2}{\big( \sqrt{\frac{1}{n}-\delta}+\sqrt{\frac{1}{n}+\delta} \, \big)^2}
        && \mbox{($n \ge 2$)}
        \\[1ex]
        &
        \le 2 \delta^2 n
        ~.
        && \mbox{($\sqrt{x} + \sqrt{y} \ge \sqrt{x+y}$)}
    \end{align*}
    %By our construction it is easy to verify that $\bD \approx_\epsilon \btildeD$.

    %\zh{TODO: Lower bound of sample complexity}
    %Consider distributions $D_0$ that has probability mass $\frac{1}{n}-\delta$ at $0$ and $1-\frac{1}{n}+\delta$ at $1$, and $D_1$ that has probability mass $\frac{1}{n}+\delta$ at $0$ and $1-\frac{1}{n}-\delta$ at $1$, where $\delta \le O(\frac{1}{n}), n \ge 2$.

    %Hellinger distance: 
    %\begin{equation*}
    %    \begin{split}
    %        H^2(D_0,D_1) 
    %        & = \Big( \sqrt{\frac{1}{n}-\delta}-\sqrt{\frac{1}{n}+\delta} \Big)^2 + \Big( \sqrt{1-\frac{1}{n}+\delta}-\sqrt{1-\frac{1}{n}-\delta} \Big)^2\\
    %        & = \Theta((\sqrt{n}\delta)^2)~~~(d\sqrt{x}=\frac{1}{2 \sqrt{x}} dx)\\
    %        & = \Theta(n \delta^2)~.
    %    \end{split}
    %\end{equation*}

    We shall use two properties of the Hellinger distance, which we state below without proofs.
    We refer interested readers to \citet{GuoHTZ:COLT:2021} for properties of the Hellinger distance and their applications in proving sample complexity upper and lower bounds of auctions and other optimization problems.

    \begin{lemma}[c.f., Lemma 3 of \citet{GuoHTZ:COLT:2021}]
        \label{lem:hellinger-subadditive}
        For any product distributions $\bD = D_1 \times \dots D_m$ and $\btildeD = \tilde{D}_1 \times \dots \times \tilde{D}_m$ we have:
        \[
            H^2 \big( \bD, \btildeD \big) ~\le~ \sum_{i=1}^m H^2 \big( D_i, \tilde{D}_i \big)
            ~.
        \]
    \end{lemma}

    \begin{lemma}[c.f., Lemmas 1 and 2 of \citet{GuoHTZ:COLT:2021}]
        \label{lem:hellinger-expectation}
        For any distributions $D$ and $\tilde{D}$ over a common domain $\Omega$, and any function $f : \Omega \to [0, 1]$, the expectations $f(D)$ and $f(\tilde{D})$ differ by at most:
        \[
            \sqrt{2} \cdot H \big( D, \tilde{D} \big)
            ~.
        \]
    \end{lemma}

    We further examine the bidders' virtual values.
    A bidder's virtual value would be $1$ w.r.t.\ both distributions when its value is $1$.
    If a bidder's value is $0$, on the other hand, the value values are:
    \[
        \virtual^+(0) = - \frac{1-\frac{1}{n} + \delta}{\frac{1}{n} - \delta} = - (n-1) - \frac{n \delta}{\frac{1}{n} - \delta} < - (n-1) - \frac{n^2 \delta}{2} ~,
    \]
    w.r.t.\ distribution $D^+$, and:
    \[
        \virtual^-(0) = - \frac{1-\frac{1}{n} - \delta}{\frac{1}{n} + \delta} = - (n-1) + \frac{n \delta}{\frac{1}{n} + \delta} > - (n-1) + \frac{n^2 \delta}{2}
    \]
    w.r.t.\ distribution $D^-$.
    %We remark that $\virtual^-(0) < -(n-2)$ because $n$ is sufficiently large.

    We will choose $\delta = \frac{48 \epsilon}{nk}$ which is indeed at most $\frac{1}{2n}$ since $\epsilon \le \frac{1}{100}$.
    By the above bounds:
    \begin{enumerate}
        \item $H^2(D^+, D^-) \le \frac{4608\epsilon^2}{nk^2}$;
        \item $\virtual^+(0) < -(n-1)-\frac{24\epsilon n}{k}$;
        \item $\virtual^-(0) > -(n-1)+\frac{24\epsilon n}{k}$.
    \end{enumerate}

    %\paragraph{Intuition.}
    %
    If all $n$ bidders have value $1$, it is clear that we should allocate a $\frac{k}{n}$ to all $n$ bidders and collect $k$ in virtual welfare, regardless of whether each bidder's value distribution is $D^+$ or $D^-$.
    %If at most $n-2$ bidders have values $2$
    If only $n-1$ bidders have value $1$ and a bidder $i$ has value $0$, however, the virtual welfare maximizing allocation depends on bidder $i$'s value distribution.
    If bidder $i$'s distribution is $D^+$, the virtual welfare of allocating to all bidders is at most:
    \[
        \frac{k}{n} \big( (n-1) \cdot 1 + 1 \cdot \virtual^+(0) \big) < - 24 \epsilon
        ~.
    \]
    Hence, it is better to allocate to none of the bidders and get $0$ virtual welfare.
    Similarly, if bidder $i$'s distribution is $D^-$, the virtual welfare of allocating to all bidders is more than $24 \epsilon$, which is better than allocating to none of the bidders.
    In either case, making the wrong choice would lose more than $24 \epsilon$ in the virtual welfare compared to the optimal auction.
    
    \paragraph{Intuition.}
    With only $N \le \frac{c n k^2}{\epsilon^2}$ samples for a sufficiently small constant $c$, we have:
    \[
        H^2\big( (D^+)^N, (D^-)^N \big) \le N \cdot H^2\big( D^+, D^- \big) \ll 1
    \]
    by Lemma~\ref{lem:hellinger-subadditive} and by the first property above. 
    %This is much smaller than $1$ for a sufficiently small constant $c$.
    Then as a corollary of Lemma~\ref{lem:hellinger-expectation}, an algorithm with only $N$ samples must essentially choose the same allocation when $i$ is the only bidder with value $0$ regardless of whether $i$'s value distribution is $D^+$ or $D^-$. 
    This means that the algorithm would be wrong with at least a constant probability, say, $\frac{1}{3}$.

    Further, the probability of having such a vector profile is at least:
    \begin{equation}
        \label{eqn:sample-complexity-lower-bound-bad-prob}
        \Big( 1-\frac{1}{n}-\delta \Big)^{n-1} \Big( \frac{1}{n} - \delta \Big)
        \ge \Big( 1-\frac{3}{2n} \Big)^{n-1} \frac{1}{2n} \ge \frac{1}{8n}
        ~.
    \end{equation}
    
    Hence, the loss in expected revenue due to making the wrong decision when bidder $i$ is the only bidder with value $0$ is more than:
    \[
        \frac{1}{8n} \cdot \frac{1}{3} \cdot 24 \epsilon = \frac{\epsilon}{n}
        ~.
    \]
    
    Since the argument applies to all $n$ bidders, the total revenue loss compared to the optimal revenue is more than $\epsilon$.
    
    \paragraph{Formal Proof.}
    Consider the following set of $2^n$ product distributions:
    \[
        \mathcal{D} = \Big\{ \, \bD : D_i = D^+ ~\text{or}~D^- \, \Big\}
        ~.
    \]

    We will prove that the average difference between the optimal revenue and the expectation of the algorithm's auction's revenue is more than $\epsilon$, when the underlying product value distribution is chosen uniformly from $\mathcal{D}$.

    We first define some notations.
    For any product value distribution $\bD \in \mathcal{D}$, let $\OPT_{\bD}$ denote Myerson's optimal auction w.r.t.\ $\bD$, and let $\ALG_{\bD}$ denote the auction selected by an algorithm with $N \le \frac{c nk}{\epsilon^2}$ i.i.d.\ samples from $\bD$.
    For any value profile $\bv$, we further let $\ALG_{\bD}(\bv)$ denote the allocation of $\ALG_{\bD}$ when the values are $\bv$.
    Let $\Dif_{\bD} (\bv)$ be the difference between the maximum virtual welfare given by $\OPT_{\bD}$ when the bidders' values are $\bv$, and the virtual welfare given by $\ALG_{\bD}$, taking expectation over the randomness of the $N$ i.i.d.\ samples and the intrinsic randomness of the algorithm.
    %Therefore, the expectation below is over the randomness of not only $\bv$ but also the aforementioned sources of randomness.
    For any agent $i$, let $\bv^i$ denote the value profile in which agent $i$ is the only agent with value $0$, i.e., $v^i_i = 0$ and $v^i_j = 1$ for any $j \ne i$.

    For any product value distribution $\bD \in \mathcal{D}$ we have:
    \begin{align*}
        \E_{\bv \sim \bD} \, \Dif_{\bD}(\bv)
        &
        ~\ge~ \sum_{i = 1}^n \, \Pr_{\bv \sim D} \big[ \bv = \bv^i \big] \cdot \Dif_{\bD}(\bv^i)
        \\
        &
        ~\ge~ \frac{1}{8n} \sum_{i = 1}^n \, \Dif_{\bD}(\bv^i)
        ~.
        && \mbox{(Eqn.~\eqref{eqn:sample-complexity-lower-bound-bad-prob})}
    \end{align*}

    Averaging over all distributions $\bD \in \mathcal{D}$, we get that:
    \[
        \frac{1}{2^n} \sum_{\bD \in \mathcal{D}} \E_{\bv \sim \bD}  \, \Dif_{\bD}(\bv)
        ~\ge~
        \frac{1}{2^{n+3} n} \sum_{\bD \in \mathcal{D}} \sum_{i = 1}^n \Dif_{\bD}(\bv^i)
        ~.
    \]

    Next we change the order of summation, and for each $i$ we pair up product value distributions that only differ in agent $i$'s distribution.
    The right-hand-side is then rewritten as:
    \[
        \frac{1}{2^{n+3} n} \sum_{i=1}^n \sum_{\bD_{-i} \in \{D^+, D^-\}^{n-1}} \Big( \Dif_{(D_i = D^+, \bD_{-i})}(\bv^i)
        + \Dif_{(D_i = D^-, \bD_{-i})}(\bv^i) \Big)
        ~.
    \]

    To prove that the above equation is greater than $\epsilon$, it suffices to show that for any bidder $i$ and any $D_{-i} \in \{ D^+, D^- \}^{n-1}$, we have:
    \[
        \Dif_{(D_i = D^+, \bD_{-i})}(\bv^i)
        + \Dif_{(D_i = D^-, \bD_{-i})}(\bv^i) > 16 \epsilon
        ~.
    \]

    On the one hand:
    \[
        \Dif_{(D_i = D^+, \bD_{-i})}(\bv^i) ~>~ 24 \epsilon \cdot \Pr \Big[ \ALG_{(D_i = D^+, \bD_{-i})}(v^i) = \big(\,\frac{n}{k}, \cdots, \frac{n}{k}\,\big) \Big]
        ~.
    \]

    On the other hand:
    \[
        \Dif_{(D_i = D^-, \bD_{-i})}(\bv^i) ~>~ 24 \epsilon \cdot \Pr \Big[ \ALG_{(D_i = D^-, \bD_{-i})}(v^i) = (0, \cdots, 0) \Big]
        ~.
    \]

    Our task therefore becomes proving that:
    \[
        \Pr \Big[ \ALG_{(D_i = D^+, \bD_{-i})}(v^i) = \big(\,\frac{n}{k}, \cdots, \frac{n}{k}\,\big) \Big] + \Pr \Big[ \ALG_{(D_i = D^-, \bD_{-i})}(v^i) = (0, \cdots, 0) \Big] \ge \frac{2}{3}
        ~.
    \]

    Note that:
    \[
        \Pr \Big[ \ALG_{(D_i = D^-, \bD_{-i})}(v^i) = \big(\,\frac{n}{k}, \cdots, \frac{n}{k}\,\big) \Big] + \Pr \Big[ \ALG_{(D_i = D^-, \bD_{-i})}(v^i) = (0, \cdots, 0) \Big] = 1
        ~.
    \]
    
    We just need to show that:
    \[
        \Pr \Big[ \ALG_{(D_i = D^-, \bD_{-i})}(v^i) = \big(\,\frac{n}{k}, \cdots, \frac{n}{k}\,\big) \Big] - \Pr \Big[ \ALG_{(D_i = D^+, \bD_{-i})}(v^i) = \big(\,\frac{n}{k}, \cdots, \frac{n}{k}\,\big) \Big] \le \frac{1}{3}
        ~.
    \]

    It follows by Lemma~\ref{lem:hellinger-expectation}, with function $f$ taking the $N$ i.i.d.\ samples as input and returning the probability that the auction would choose $(\frac{n}{k}, \cdots, \frac{n}{k})$ when the bidders' values are $\bv^i$, and by the upper bound of the Hellinger distance:
    \[
        H^2 \big( (D_i = D^+, \bD_{-i})^N, (D_i = D^-, \bD_{-i})^N \big) \le N \cdot H^2(D^+, D^-) \ll 1
        ~.
    \]
\end{proof}

\paragraph{Beyond Unit Demand.}
Recall the reduction that we mentioned in Section~\ref{sec:prelim}:
an $\epsilon$-approximation in an auction with $n$ bidders, rank $k$, and maximum demand $d$ is equivalent to an $\frac{\epsilon}{d}$-approximation in another auction with $n$ bidders, rank $\frac{k}{d}$, and unit-demand, obtained by scaling all allocations by a factor $\frac{1}{d}$.
Using this reduction we get the following corollaries when bidders' maximum demand $d$ can be larger than $1$.
Note that $d \le k$ and thus $d$ is subsumed by $k$ inside the logarithmic factors.

\begin{corollary}[Downward-closed Feasibility, General Demand]
    \label{cor:sample-complexity}
    Consider an arbitrary downward-closed single-parameter auction with $n$ bidders, rank $k$, and maximum demand $d$ of any single bidder.
    Suppose that we have $N$ i.i.d.\ samples where $N$ is at least:
    \[
        C \cdot \frac{nkd}{\epsilon^2} \log \frac{nk}{\epsilon\delta}
    \]
    for a sufficiently large constant $C$.
    Then Myerson's optimal auction w.r.t.\ the dominated product empirical distribution $\btildeE$ is an $\epsilon$-approximation with probability at least $1-\delta$.
\end{corollary}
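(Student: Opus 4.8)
The plan is to derive the corollary from the unit-demand case (Theorem~\ref{thm:sample-complexity}) through the scaling reduction recalled at the beginning of Section~\ref{sec:sample-complexity}. Given a downward-closed single-parameter auction with $n$ bidders, rank $k$, and maximum demand $d$, I would pass to the auction obtained by scaling every feasible allocation by the factor $\frac{1}{d}$. This scaled auction still has $n$ bidders, has rank $\frac{k}{d}$, and is unit-demand; moreover setting a coordinate down to $0$ commutes with multiplying all coordinates by $\frac{1}{d}$, so the scaled family of feasible allocations remains downward-closed, and hence Theorem~\ref{thm:sample-complexity} applies to it.

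The next step is to check that the two ingredients of the statement transfer cleanly across the reduction. First, the bidders' value distributions, and therefore the samples and the dominated product empirical distribution $\btildeE$, are untouched by scaling allocations; so ``Myerson's optimal auction w.r.t.\ $\btildeE$'' refers to the same object in both auctions, because its allocation maximizes ironed virtual welfare and multiplying that objective by the positive constant $\frac{1}{d}$ does not change the maximizer, while the feasible region is exactly the scaled one. Second, by the definition of the reduction an $\epsilon$-approximation in the original auction is precisely an $\frac{\epsilon}{d}$-approximation in the scaled auction, and the failure probability $\delta$ is unchanged.

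Applying Theorem~\ref{thm:sample-complexity} to the scaled auction with rank $k' = \frac{k}{d}$ and target accuracy $\epsilon' = \frac{\epsilon}{d}$, it suffices to have $N \ge C \cdot \frac{n k'}{(\epsilon')^2}\log\frac{nk'}{\epsilon'\delta}$ i.i.d.\ samples for a sufficiently large constant $C$. Substituting $k' = \frac{k}{d}$ and $\epsilon' = \frac{\epsilon}{d}$, the leading factor becomes $\frac{n k'}{(\epsilon')^2} = \frac{n (k/d)}{(\epsilon/d)^2} = \frac{nkd}{\epsilon^2}$, while the argument of the logarithm becomes $\frac{nk'}{\epsilon'\delta} = \frac{nk}{\epsilon\delta}$, so the requirement matches exactly the bound $C \cdot \frac{nkd}{\epsilon^2}\log\frac{nk}{\epsilon\delta}$ claimed in the corollary, and the same $1-\delta$ success guarantee carries over.

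I do not expect a genuine obstacle here: the argument is a one-line substitution once the reduction is in place. The only points that need care are the (routine) verifications that scaling preserves downward-closedness, that it leaves the empirical distribution $\btildeE$ and hence Myerson's optimal auction intact, and that it rescales the additive approximation target from $\epsilon$ to $\frac{\epsilon}{d}$; the mild nonintegrality of $\frac{k}{d}$ is harmless since Theorem~\ref{thm:sample-complexity} does not require the rank to be an integer.
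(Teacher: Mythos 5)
Your proposal is correct and is exactly the paper's argument: the corollary is obtained by the scaling reduction recalled just before it (divide all allocations by $d$, apply Theorem~\ref{thm:sample-complexity} with rank $\frac{k}{d}$ and accuracy $\frac{\epsilon}{d}$, and observe that the $d$'s cancel inside the logarithm). Your additional checks that scaling preserves downward-closedness and leaves $\btildeE$ and Myerson's auction unchanged are the same routine verifications the paper leaves implicit.
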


\begin{corollary}[General Feasibility, General Demand]
    \label{cor:sample-complexity-general}
    Consider an arbitrary single-parameter auction with $n$ bidders, rank $k$, and maximum demand $d$ of any single bidder.
    Suppose that we have $N$ i.i.d.\ samples where $N$ is at least:
    \[
        C \cdot \frac{nk^2}{\epsilon^2} \log \frac{nk}{\epsilon} \log \frac{nk}{\epsilon\delta}
    \]
    for a sufficiently large constant $C$.
    Then Myerson's optimal auction w.r.t.\ the dominated product empirical distribution $\btildeE$ is an $\epsilon$-approximation with probability at least $1-\delta$.
\end{corollary}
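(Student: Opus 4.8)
The plan is to derive Corollary~\ref{cor:sample-complexity-general} directly from Theorem~\ref{thm:sample-complexity-general} via the scaling reduction sketched in the ``Beyond Unit Demand'' paragraph. Given an arbitrary single-parameter auction with $n$ bidders, rank $k$, and maximum demand $d$, I would form the unit-demand auction whose feasible allocation set is $\frac{1}{d}\allocset = \{\frac{1}{d}x : x \in \allocset\}$. This new auction has the same $n$ bidders and, importantly, the same value distributions as the original; it has rank $\frac{k}{d}$ and maximum demand $1$. The key structural observations are: (i) the dominated product empirical distribution $\btildeE$ is built only from the $N$ samples of the value distributions and does not depend on the feasible allocation set, so it is literally the same object in both auctions; (ii) since the ironed virtual values depend only on $\btildeE$, Myerson's optimal auction w.r.t.\ $\btildeE$ in the reduced auction picks, at each value profile, the allocation $\frac{1}{d}x^*$ where $x^*$ is the allocation that $M_{\btildeE}$ picks in the original auction, and the Myerson payment formula scales the payments by $\frac{1}{d}$ accordingly; hence $M_{\btildeE}$ and $\OPT$ in the reduced auction are exactly $\frac{1}{d}$ times their counterparts in the original auction. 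Consequently, an $\frac{\epsilon}{d}$-additive approximation in the reduced unit-demand auction is, after scaling revenues back by $d$, an $\epsilon$-additive approximation in the original auction.

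Next I would apply Theorem~\ref{thm:sample-complexity-general} to the reduced unit-demand auction with approximation parameter $\epsilon' = \frac{\epsilon}{d}$ and rank $k' = \frac{k}{d}$. Since we may assume $\epsilon \le 1$ and $d \ge 1$, we have $0 < \epsilon' \le 1$, so the hypotheses of Theorem~\ref{thm:sample-complexity-general} (and of the strong revenue Lipschitzness on which it rests) are satisfied. That theorem guarantees an $\epsilon'$-approximation with probability at least $1-\delta$ once the number of samples is at least $C \cdot \frac{n (k')^2}{(\epsilon')^2} \log \frac{n k'}{\epsilon'} \log \frac{n k'}{\epsilon' \delta}$. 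Substituting $k' = \frac{k}{d}$ and $\epsilon' = \frac{\epsilon}{d}$, the ratio $\frac{(k/d)^2}{(\epsilon/d)^2}$ collapses to $\frac{k^2}{\epsilon^2}$ and $\frac{n k'}{\epsilon'}$ collapses to $\frac{nk}{\epsilon}$, so the requirement becomes exactly $C \cdot \frac{n k^2}{\epsilon^2} \log \frac{nk}{\epsilon} \log \frac{nk}{\epsilon\delta}$, which is the bound claimed in the corollary. Finally I would note that $d \le k$, so the dependence on $d$ is indeed subsumed inside the logarithmic factors, and combining this with observation (ii) above — that success in the reduced auction transfers to the original — completes the argument.

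The ``main obstacle'' here is not any analytic difficulty but careful bookkeeping: one must verify that the $\frac{1}{d}$-rescaling of the allocation polytope preserves $\btildeE$, commutes with the argmax defining Myerson's optimal auction (and with its payment rule), and turns an additive-$\frac{\epsilon}{d}$ guarantee into an additive-$\epsilon$ guarantee after scaling revenues by $d$. Everything beyond that is the elementary algebra showing the sample complexity bound of Theorem~\ref{thm:sample-complexity-general} is invariant under this reduction.
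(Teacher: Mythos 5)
Your proposal is correct and follows exactly the route the paper intends: apply the $\frac{1}{d}$-scaling reduction from Section~\ref{sec:prelim} to get a unit-demand auction with rank $k'=\frac{k}{d}$ and target accuracy $\epsilon'=\frac{\epsilon}{d}$, invoke Theorem~\ref{thm:sample-complexity-general}, and observe that the bound is homogeneous in $k$ and $\epsilon$ so the two rescalings cancel. Your extra bookkeeping (that $\btildeE$ is independent of the allocation set, that Myerson's argmax and payment rule commute with uniform scaling of $\allocset$, and that $0<\epsilon'\le 1$ so the Lipschitzness hypothesis still holds) is exactly the verification the paper leaves implicit.
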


Readers may notice that Corollary~\ref{cor:sample-complexity-general} gives the same sample complexity bound as Theorem~\ref{thm:sample-complexity-general}, independent of the maximum demand $d$.
%and Corollary~\ref{cor:sample-complexity-general} give the same sample complexity bound.
This is not a typo but instead follows from the fact that $k$ and $\epsilon$ are homogeneous in the bound of Theorem~\ref{thm:sample-complexity-general}.
The reduction in Section~\ref{sec:prelim} scales both $k$ and $\epsilon$ by a factor $\frac{1}{d}$ and therefore the two effects cancel out.
%We stress that the former only applies to 

%\appendix

%\input{app-concentration}

%\bibliographystyle{ACM-Reference-Format}
\bibliographystyle{plainnat}
\bibliography{ref}

\end{document}